\pgfplotsset{compat=1.14}
\tikzset{
  font={\small}}
\renewcommand{\P}{\mathbb{P}}
\newcommand{\E}{\mathbb{E}}
\newcommand{\ignore}[1]{}
\newcommand{\kibitz}[2]{\ifnum\Comments=1\textcolor{#1}{#2}\fi}
\newcommand{\jon}[1]{{\kibitz{blue}{[Jon: #1]}}}
\newcommand{\newcontent}[1]{\ifnum\Comments=1\textcolor{cyan}{#1}\else {#1}\fi}
\newtheorem{definition}{Definition}[section]
\newtheorem{lemma}{Lemma}[section]
\newtheorem{theorem}{Theorem}[section]
\newtheorem{proposition}{Proposition}[section]
\newtheorem{corollary}{Corollary}[section]
\newtheorem{observation}{Observation}[section]
\newtheorem{example}{Example}[section]
\newtheorem{remark}{Remark}[section]
\newenvironment{proofsketch}{%
  \proof}{\endproof}
\title{Bayesian Conversations}	
\author{Renato Paes Leme\\
renatoppl@google.com\\
Google Research
\and
Jon Schneider\\
jschnei@google.com\\
Google Research
\and
Heyang Shang\\
shanghy22@mails.tsinghua.edu.cn\\
Tsinghua University
\and
Shuran Zheng\\
shuranzheng@mail.tsinghua.edu.cn\\
Tsinghua University
}
\begin{document}

\begin{titlepage}

\maketitle


\begin{abstract}
We initiate the study of Bayesian conversations,
which model interactive communication between two strategic agents without a mediator.
We compare this 
to communication through a mediator and investigate the settings in which a mediation can expand the range of implementable outcomes.
	
We look into the eventual outcome of two-player games after interactive communication. We focus on games where only one agent has a non-trivial action and examine the performance of communication protocols that are individually rational (IR) for both parties.
We characterize the structure of the social-welfare optimal protocol of a given number of rounds and thus show a separation between Bayesian conversation and mediated protocols. We demonstrate an example where the optimal conversation protocol requires infinitely many rounds of communication, and further show that for settings with binary actions and binary types, any optimal protocol either is finite (with at most 6 rounds) or requires infinitely many rounds of communication.


\end{abstract}

\end{titlepage}

\section{Introduction}

Much attention has been paid to the topic of strategic communication in recent years, in both Computer Science and Economics. An example is Bayesian persuasion (\cite{kamenica2011bayesian}, \cite{dughmi2016algorithmic}). In most previous work, the communication is unidirectional: it flows from senders to receivers. In this paper we  focus on settings that resemble conversations: two agents talk back and forth, where each message depends not only on their private information but on the transcript of the communication so far. Such conversations arise in various settings of interest: a customer negotiating a price with a merchant, a worker negotiating their salary with a firm, prosecution and defense attorneys in a courtroom trial, political debates between two candidates, job interviews, etc. All of these settings are characterized by a long interactive conversation whose outcome will determine the payoffs of the parties involved.


We propose the framework of \emph{Bayesian conversations}. The notion of interactive communication is not new and has been studied in many contexts, such as Communication Complexity \cite{kushilevitz1997communication,andrew1979some}, Information Complexity \cite{braverman2012interactive} and Secure Multiparty Computation \cite{basu2022geometry} in Computer Science and Cheap Talk \cite{crawford1982strategic,aumann2003long} and Bargaining \cite{mao2022interactive} in Economics. 

While we are inspired and draw tools from those fields, we take a different viewpoint here characterized by the following aspects. We consider strategic agents and their incentives while communicating in contrast to communication and information complexity. Unlike cheap-talk, we assume that agents commit to a communication protocol, which is the key assumption in Bayesian Persuasion. In a sense, our model can be viewed as an interactive form of Bayesian persuasion.\\


Formally our setting consists of two agents (Alice and Bob) each of whom has a private type drawn from a known prior distribution: $\theta_A \sim \P(\theta_A)$ and $\theta_B \sim \P(\theta_B)$.
Our central object of study will be an \emph{unmediated Bayesian conversation} which is a protocol that specifies messages each agent sends in each round. The protocol has a finite length $T$ and each round $t=1\dots T$ is associated with a space of messages $A_t$ and $B_t$. A conversation specifies for each round,
\begin{itemize}
    \item a randomized mapping from Alice's type $\theta_A$ and the history transcript $(a_1, b_1, \hdots, a_{t-1},b_{t-1})$ to a message in $A_t$;
    \item  a randomized mapping from Bob's type $\theta_B$ and the history transcript $(a_1, b_1, \hdots, a_{t-1},b_{t-1}, a_t)$ to a message in $B_t$.
\end{itemize}
With each message sent by Bob, Alice updates her belief about Bob's type and vice versa. Conversations, therefore, can be viewed as ways to split the initial pair of beliefs $(\P(\theta_A), \P(\theta_B))$ to refined beliefs $(q_A, q_B)$. See Example \ref{ex:bilateral_trade} for concrete examples.\\

When exploring the potential of Bayesian conversations, two fundamental questions arise: What are the optimal outcomes that can be achieved through these conversations? And how powerful are Bayesian conversations as a class of communication protocols? Answering these questions is particularly challenging due to the inherent complexity of Bayesian conversations, which involve multiple rounds of signaling.

We compare the power of communication protocols in terms of the eventual outcomes that can be reached. 
To that end,  we will assume that a Bayesian game will be played after the conversation. In order to avoid issues like equilibrium selection, we will further assume that only one player (Alice) has a non-trivial action, and the payoffs depend on three things: (i) Alice's chosen action; (ii) Alice's type; and (iii) Bob's type. In this scenario, Alice's chosen strategy depends on both her type and her belief about Bob's type. Given that her belief about his type is a function of the communication protocol, the conversation will directly affect the payoff of both parties.

A protocol is designed by an external party and proposed to the agents. Since Alice is the only agent with a non-trivial action, she always weakly prefers to participate in the communication protocol -- after all, she can always ignore the communication and choose an action based on her prior. Bob, however, needs to be incentivized to participate in the protocol. This will take the form of an ex-post IR constraint: for any realization of types and transcripts of the communication protocol, Bob prefers the protocol outcome to the outcome when Alice plays only based on her type and her prior belief about Bob's type (i.e., when no conversation or signaling occurs).

Our choice of ex-post IR as a participation constraint is motivated by its naturalness and the fact it leads to a rich mathematical structure. In the appendix, we explore alternative notions such as ex-ante and interim IR constraints and also consider settings where Bob can deviate from participating in the protocol mid-conversation. See Appendix~\ref{sec:interim-and-noncommitted} for a discussion of additional results.

\subsection{Our Results}

To answer our original question, we show that there is a gap between mediator protocols and Bayesian conversations. Our proof is algorithmic and is based on the structure of the optimal algorithm to compute the social-welfare maximizing Bayesian conversation for a given number of rounds of communication.

In our first main result, we provide a polynomial-time algorithm for computing the social-welfare maximizing $r$-round Bayesian conversation when Alice and Bob have binary types (Theorem~\ref{thm:algo_correctness}). This algorithm is a dynamic program based on the structural observation that the optimal social welfare of any $r$-round Bayesian conversation as a function of the starting prior can be described by its value on a fixed number of points polynomial in the size of the game and independent of $r$. Using the algorithm structure, we find that there is a strict separation between mediator protocols and Bayesian conversations;  we construct an example where the maximal social welfare achievable by an ex-post IR mediator protocol is strictly larger than the maximal social welfare achievable by a conversation (Theorem~\ref{thm:expost_gap}). We additionally provide an exponential-time algorithm for determining whether a specific posterior belief distribution is realizable by an $r$-round Bayesian conversation in the setting with more than two types (Theorem~\ref{thm:two_way_charac}). 

We also consider the value of additional rounds of communication in Bayesian conversations.
Do multiple rounds of interaction help when designing Bayesian conversations? We show that, although there exist examples where the optimal ex-post Bayesian conversation requires only finitely many rounds of interaction (Section~\ref{sec:alg-example}), there also exist examples where the social welfare of the optimal Bayesian conversation strictly increases as the number of rounds of interaction approaches infinity (Theorem~\ref{thm:infinite_round}). 

Finally, we analyze the structure of optimal welfare Bayesian conversations in the setting where both types and actions are binary, we prove a dichotomy, showing that either the optimal protocol requires infinitely many rounds (the optimal welfare is only achieved asymptotically), or there is a finite protocol that requires at most $6$ rounds.  In cases where the optimal welfare is only approached asymptotically, we prove that the convergence rate is linear. Additionally, we provide a polynomial-time algorithm that decides whether a given game admits a Bayesian conversation that achieves the optimal social welfare within finitely many rounds (\Cref{thm:converge_BC}).

One key technical tool necessary to establish the above structural and algorithmic results is the reduction of the problem of computing ex-post IR Bayesian conversations to the question of understanding the behavior of \emph{alternating concavification} of piecewise linear functions. In particular, given a two dimensional piecewise linear function $f(x, y)$ (with pieces given by sub-rectangles of the grid), we can consider the process of repeatedly taking the upper concave hull first of each of the horizontal slices of this function (fixing $y$ and letting $x$ vary), then of each the vertical slices (fixing $x$ and letting $y$ vary), and repeating this process until there is some convergence. For example, the above $6$ round versus infinite-round dichotomy follows from the fact that when this process is iterated on a 4-by-5 grid, it either stabilizes after at most $6$ rounds or continues indefinitely (\Cref{thm:mesh_bound}). 

\subsection{Related Work} 
\paragraph{Economics literature.} There exists a vast body of literature on information transmission in economics, with Bayesian persuasion~\cite{kamenica2011bayesian} and cheap talk~\cite{ferrell1996cheap,aumann2003long} being two extensively researched models of communication.
While our model shares connections with these two frameworks, it presents a fundamentally distinct framework: we consider \emph{interactive} communication under \emph{commitment} where \emph{both} players hold private information.

Our paper is part of the growing body of literature that extends the Bayesian persuasion model of  \cite{kamenica2011bayesian} to richer communication models. In the original model of persuasion, there is a sender and a receiver where the sender is more informed but the receiver is the one that takes a payoff-relevant action. This basic model is very well understood, both its strategic aspects \cite{kamenica2011bayesian,bergemann2016information} and computational \cite{dughmi2016persuasion} aspects. The model has been extended to allow the possibility of multiple senders \cite{gentzkow2016competition,bhattacharya2013strategic}, multiple receivers \cite{arieli2019naive,alonso2016persuading,arieli2021feasible,wang2013bayesian}, information intermediaries \cite{mahzoon2022hierarchical,arieli2022bayesian,zapechelnyuk2022sequential,li2021sequential} as well as communication along a network \cite{brooks2022information,babichenko2021multi,laclau2020robust,candogan2019persuasion,candogan2020optimal}. 
There is a wide variation of the type of communication in those papers: public vs private, constrained vs unconstrained, etc. A common element is that information flows in one direction: from more informed to less informed agents. It is not uncommon for agents to act both as senders and receivers, however in the network/mediator/intermediary model the communication typically flows $A \rightarrow B \rightarrow C$, i.e., Alice sends a message to Bob who sends a message to Charlie. Bob acts both as a sender and a receiver, but the communication still flows only in one direction.
In our model, we consider two agents but neither is strictly more informed than the other. Their communication is both bi-directional and interactive.

There is a substantial body of literature that examines long interactive communication, such as the Long Cheap Talk model proposed by \cite{aumann2003long}. One key distinction of our work is that we assume players will commit to a communication protocol if they choose to participate, and our focus lies in exploring the outcomes achievable through interactive communication under commitment. To the best of our knowledge, previous studies do not make the assumption of player commitment to a protocol but instead analyze communication strategies within equilibria.  The most relevant aspect connecting our research to the preceding Long Cheap Talk literature is our utilization of the "dimartingale" concept~\cite{hart1985nonzero, aumann1986bi, aumann2003long}, which is defined as a martingale with three component where one of the first two components remains the same at each time step. Dimartingale plays an important role in the identification of Nash equilibria in long interactive communication (see~\cite{aumann2003long} and subsequent literature).  However, in this work, we employ dimartingales for a distinct purpose: finding the feasible distributions of two players' posterior beliefs about each other's type after a Bayesian conversation. To achieve this objective, we pinpoint a particular dimartingale that will characterize the set of feasible posterior belief distributions. 

Another instance of interactive communication is the interactive bilateral trade model developed by \cite{mao2022interactive}. However, in contrast to our approach, Mao's study adopts a mechanism design perspective, investigating communication protocols emerging as equilibria of a specific game. Additionally, \cite{Koessler2022long} investigates information design in a long cheap talk game, but they do not assume player commitment as we do. 


 \paragraph{Computer science literature.} The interactive communication is studied in Computer Science in settings such as Yao's communication complexity model \cite{andrew1979some}, information complexity \cite{braverman2012interactive} and secure multi party computation \cite{basu2022geometry}. Unlike our model, these settings do not have a strategic component.

\paragraph{Feasible posterior distributions.} Finally, there is a line of work that investigates the set of possible joint posterior belief distributions of a group of agents \cite{dawid1995coherent, arieli2021feasible, he2022private, stanis2021maximal, Krzysztof2020bounds}. 
In this work, we are also interested in the joint posterior belief distributions that can be generated by communication protocols. Our study differs from previous work in two ways: (1) previous studies have focused on multiple agents' beliefs about a common state, whereas our focus is on the beliefs of two agents about each other's private type; (2) previous studies consider arbitrary information structures, whereas we specifically consider the communication protocol of Bayesian conversations.

\section{Problem Description and Preliminaries}
We consider a two-player setting where both players (Alice and Bob) have private types, denoted by $\theta_A$ and $\theta_B$ respectively. We assume that their types are drawn independently from commonly known distributions with finite support, i.e., $\theta_A$ and $\theta_B$ are drawn from $\P(\theta_A) \in \Delta (\Theta_A)$ and $\P(\theta_B) \in \Delta (\Theta_B)$ independently, where $|\Theta_A|,|\Theta_B|$ are finite and these prior distributions are common knowledge. 

\subsection{Communication protocols}

Before delving into the game-theoretic aspects of this problem, we begin by establishing the communication models via which Alice and Bob can share information about their types. In general, we will consider settings where Alice and Bob agree on the precise protocol of communication ahead of time; later, we will impose constraints on this protocol so that Alice and Bob are incentivized to participate in the protocol and not defect. 

Broadly, we consider two different classes of protocols. In the first, we assume Alice and Bob alternate revealing information about their private types to one another via a pre-determined protocol. This could represent Alice and Bob communicating via a platform which sets the communication rules and monitors the communication process. 

\begin{definition}[Bayesian conversations]\label{def:two-way}
For two players with private types $\theta_A$ and $\theta_B$, a $T$-round Bayesian conversation $\pi = \langle \mathbf{A, B, f, g}\rangle$ specifies how the agents communicate in multiple rounds. Here, $\mathbf{f}=(f_1, \dots, f_T)$, where each $f_i$ is a function mapping Alice's type and the current transcript of the protocol to the randomized distribution over messages that Alice will send in round $i$. Likewise, $\mathbf{g}=(g_1, \dots, g_T)$, contains the functions $g_i$ which describe how Bob will send the message in round $i$. We will assume that at any round, Alice and Bob only ever send a finite number of possible distinct signals (i.e., $f_i$ and $g_i$ all have finite support), and write $A_i$ and $B_i$ to denote the supports of $f_i$ and $g_i$ respectively. 
More explicitly, the Bayesian conversation defined above proceeds as follows: 
\begin{itemize}
    \item At round $i$, based on her private type $\theta_A$ and the observed history, Alice randomly sends to Bob a signal $a_i \sim f_i(\theta_A, a_1, b_1, \dots, a_{i-1}, b_{i-1})\in \Delta(A_i)$; then, based on $\theta_B$ and the observed history, Bob randomly sends a signal $b_i \sim g_i(\theta_B, a_1, b_1, \dots, a_{i-1}, b_{i-1}, a_i)\in \Delta(B_i)$. Here $a_i$ and $b_i$ are random variables taking values in $A_i$ and $B_i$ respectively.  
    \item The process repeats for $T$ rounds. 
\end{itemize}
\end{definition}
We denote by $\Pi_{\text{BC}}$ the class of all Bayesian conversations. Bayesian conversations involve multiple rounds of information disclosure. We focus on the case where full participation is enforced, meaning that agents are committed to completing the entire protocol once they agree to join. In \Cref{sec:non_committed}, we also discuss \emph{non-committed} Bayesian conversations, where agents have the option to quit midway. In both settings, we assume that agents will adhere to the pre-determined protocol as long as they choose to proceed.

\begin{definition}[Committed Bayesian Conversation protocols]
	A committed Bayesian Conversation protocol $\pi$ is where Alice and Bob decide at the beginning whether to communicate via a Bayesian conversation $\pi$ or not, and once agreed, they must complete the full protocol without quitting. 
\end{definition}

\sloppy{To formally analyze Bayesian conversations, we define  random variable $h^{(t)}=(a_1, b_1, \dots, a_t, b_t)$ as the history up to round $t$. As a slight abuse of notation, we use $f_i(a_i|\theta_A, h^{(t-1)})$ and $g_i(b_i|\theta_B, h^{(t-1)}, a_i)$ to represent the probability of sending $a_i$ and $b_i$ given the true type and the current history. 
After each round of communication, the players update their beliefs about the other player's type using Bayes' rule. Let $\P_\pi(\theta_A, \theta_B, a_1, b_1, \dots, a_T, b_T)$ be the joint distribution of all types and signals over a complete execution of the protocol $\pi$ is used, i.e.,}
\begin{align*}
	\P_\pi(\theta_A, \theta_B, a_1, b_1, \dots, a_T, b_T) 
	= \P(\theta_A) \P(\theta_B) \cdot f_1(a_1|\theta_A)\cdot g_1(b_1|\theta_B, a_1) \cdots g_T(b_T|\theta_B, a_1, b_1, \dots, a_T).
\end{align*}
After $T$ rounds, Alice's belief about $\theta_B$ becomes $q^{(T)}_B = \P_\pi(\theta_B|\theta_A, h^{(T)}) \in \Delta(\Theta_B)$ and Bob's belief about $\theta_A$ becomes $q^{(T)}_A = \P_\pi(\theta_A|\theta_B, h^{(T)}) \in \Delta(\Theta_A)$. We will see that a specific choice of $q^{(T)}_A$, $q^{(T)}_B$, $\theta_A$, and $\theta_B$ uniquely determine the outcome of the game, so we will be particularly interested in the distribution $\P_\pi(\theta_A, \theta_B, q^{(T)}_B, q^{(T)}_A)$ of the types and posteriors induced by $\pi$ (e.g., which such distributions we can achieve with a Bayesian conversation). We call this distribution the \emph{(joint) posterior distribution} induced by $\pi$. 
Note that although $\theta_A, \theta_B$ are the players' private types, the posterior distribution induced by $\pi$ is commonly known by both players. This is because we assume the priors $\P(\theta_A)$ and $\P(\theta_B)$ are common knowledge and the protocol $\pi$ is commonly known.

Second, we consider communication that involves a trusted-third party who we refer to as \textit{the mediator}. In this setting, we consider ``revelation-style'' protocols, where Alice and Bob begin by revealing their full type to the mediator, who then sends a single public signal to both players. We assume that the mediator is not strategic and will faithfully execute the protocol. 
\begin{definition}[Mediator protocols]
	For two players with private type $\theta_A$ and $\theta_B$, a mediator's protocol $\pi$ is a signaling scheme $\langle S, \{\pi(\cdot|\theta_A, \theta_B)\}_{\theta_A\in \Theta_A, \theta_B \in \Theta_B}\rangle$ that specifies how the mediator sends the public signal: when the players' types are $\theta_A$ and $\theta_B$, the mediator sends signal $s\in S$ with probability $\pi(s|\theta_A, \theta_B)$. We denote by $\Pi_{\text{M}}$ the class of all mediator protocols.
\end{definition}

After observing the public signal, the players update their beliefs about the other player's type according to Bayes' rule. For a signaling scheme $\pi$, let the joint distribution of the types and the signals be $\P_\pi( \theta_A, \theta_B, s)$. Then Alice's posterior about $\theta_B$ after seeing $s$ is $q_B = \P_\pi(\theta_B|\theta_A, s)$, and Bob's posterior about $\theta_A$ after seeing $s$ is $q_A = \P_\pi(\theta_A| \theta_B, s)$. Again, we are interested in the joint distribution of players' types and beliefs after seeing the public signal, and we say that the distribution of the posteriors $\P(\theta_A, \theta_B, q_B, q_A)$ is the distribution of posteriors induced by $\pi$.

\subsection{Actions and incentives} 
\label{sec:actions}



After Alice and Bob communicate, the two players will play a ``game''. As in the case of Bayesian persuasion, in this paper we primarily focus on the very simple class of games with a single action-taker. In particular, we assume that Alice takes an action $r \in R$ (for some finite action set $R$) 
and that by taking this action Alice receives utility $u_A(\theta_A, \theta_B, r)$ and Bob receives utility $u_B(\theta_B, \theta_A, r)$.\footnote{
We briefly note that our model extends slightly beyond the setting where only one player takes an action and captures the case where Alice and Bob play a two-player \emph{Stackelberg game} $G$, where Alice takes an action $r_A$, Bob responds (after seeing $r$) with an action $r_B$, Alice receives utility $G_{A}(r_A, r_B, \theta_A)$ and Bob receives utility $G_{B}(r_A, r_B, \theta_B)$ (note that in contrast to the model above, here we assume their utilities in this game only depend on the actions taken and their individual private types). If we let $r_B(r_A, \theta_B) = \arg\max_{r_B}G_{B}(r_A, r_B, \theta_B)$ (i.e., $r_B$ is the best response for Bob to $r_A$), then our model captures this by letting $u_{A}(\theta_A, \theta_B, r) = G_{A}(r, r_B(r, \theta_B), \theta_A)$ and $u_B(\theta_A, \theta_B, r) = G_{B}(r, r_B(r, \theta_B), \theta_B)$. 
}
 Then Alice's optimal strategy in such a setting is clear: if she has a posterior belief $q_B$ for Bob's type, Alice will take the action that maximizes her expected utility given this posterior, namely
\begin{align*}
	r^*(\theta_A, q_B) = \arg\max_{r \in R} \E_{\theta_B\sim q_B} u_A(\theta_A, \theta_B, r).
\end{align*}

\begin{example}[Bilateral trade]\label{ex:bilateral_trade}
One specific Stackelberg game we use as a running example will be the case of \emph{bilateral trade}. In this game, Alice and Bob's types $\theta_A$ and $\theta_B$ belong to $\Theta_A = \Theta_B = [0, 1]$ and represent their valuation of an item Alice is attempting to sell to Bob. After communicating about their values, Alice sets a price $r \in [0, 1]$, which Bob accepts iff $r \leq \theta_B$. In terms of the notation above, we have that

\begin{eqnarray*}
u_{A}(\theta_A, \theta_B, r) &=& (r - \theta_A) \cdot \bm{1}[r \leq \theta_B] \\
u_{B}(\theta_A, \theta_B, r) &=& (\theta_B - r) \cdot \bm{1}[r \leq \theta_B].
\end{eqnarray*}
\end{example}

Alice's optimal action may change depending on the communication between the two players, and this communication may modify both players' expected utility. In particular, Bob can easily find himself in a situation where he would prefer not to share a specific piece of information with Alice (e.g., in the bilateral trade game, if Bob fully reveals his type $\theta_B$, Alice's optimal strategy is to set a price of $\theta_B$ which leaves Bob with no net utility).



We formally the participation constraints below. Before we do, however, we point out one important observation which is central to our model: since Alice is the sole action taker, \textbf{Alice always (weakly) benefits from additional communication}. In other words, we only need to ensure that the protocols are individually rational from the perspective of Bob. 




\begin{definition}[Ex-post players]
	An ex-post player does not want to regret joining/proceeding after seeing their types and completing the protocol.
\end{definition}
We say that a committed protocol is ex-post IR if ex-post players will not regret joining after seeing their types and completing the protocol.
\begin{definition}[Ex-post IR for committed protocols] 
\label{def:expost_IR}
A committed protocol $\pi$ is ex-post individually rational 
if after completing the protocol (and before Alice taking an action), Bob never regrets joining the protocol. For Bayesian conversations, we should have: for any $y \in \Theta_B$ and any outcome $\widetilde h^{(T)}= (\widetilde a_1, \widetilde b_1, \dots, \widetilde a_T, \widetilde b_T)$, suppose Bob has type $y$ and Bob's posterior belief 
becomes $\widetilde q_A$ after seeing $\widetilde h^{(T)}$ and let $\P_\pi(\theta_A, q_B|\theta_B = y, h^{(T)} = \widetilde h^{(T)})$ be the conditional distribution of $(\theta_A, q_B)$ when Bob has type $y$ and the protocol ends at $\widetilde h^{(T)}$. Then it requires
\begin{align}
\E_{(\theta_A, q_B)\sim\P_\pi}[u_B( \theta_A, y, r^*(\theta_A, q_B))|\theta_B = y, h^{(T)} = \widetilde h^{(T)}] \ge 
\E_{\theta_A\sim \widetilde q_A}[u_B( \theta_A,y, r^0)], \ \forall \widetilde h^{(T)}, y \in \Theta_B, \label{eqn:expost_IR}
\end{align}
where random variable $r^0 = \arg\max_{r \in R} \E_{\theta_B\sim \P(\theta_B)} [u_A(\theta_A, \theta_B, r)]$ is Alice's best action without any communication. For mediator protocols, we simply replace the outcome $\widetilde h^{(T)}$ with the realized public signal $\widetilde s\in S$. Again, we only need this inequality for Bob because it always holds for Alice. 
\end{definition}

For the ex-post IR mediator protocol, we have an exponential-size LP algorithm that can compute the protocol achieving optimal social welfare:

\begin{theorem}\label{thm:ex-post-lp}
The optimal ex-post IR mediator protocol that maximizes the expectation of a utility function $u(\theta_A, \theta_B,r)$ can be solved by a linear program with size $O(|\Theta_A|\cdot|\Theta_B|\cdot|R|^{|\Theta_A|})$.
\end{theorem}

We defer the proof of this theorem to ~\Cref{app:model_LP}.

\section{Optimal ex-post IR Bayesian conversations}
In contrast to the case of mediator protocols, finding a Bayesian conversation that achieves optimal social welfare is significantly more challenging due to multiple rounds of communication.

In this subsection, we introduce a dynamic programming algorithm to compute the optimal ex-post IR $T$-round Bayesian conversation in which both players have two types. Using this algorithm, we present an example where the optimal ex-post IR mediator protocol and the optimal ex-post Bayesian conversation yield different expected utilities. Additionally, we construct a game in which the optimal ex-post IR Bayesian conversation necessitates infinitely many rounds of communication.

\subsection{Algorithm for finding the optimal ex-post IR protocol}

\label{sec:algorithm}
We will focus on the case where each agent has two types $\Theta_A = \Theta_B = \{L,H\}$ and we aim to maximize the expected social welfare. We will refer to the prior that agents have about each other's types before any communication as $q_A^0, q_B^0$. We will also assume that Alice's choice of action $r^*(\theta_A, q_B)$ breaks ties in favor of Bob, i.e., between two actions that lead to the same payoff for her, she chooses the best action for Bob, which is also the action that maximizes social welfare.


In that case, the beliefs of Alice and Bob can be described by a pair $(q_B, q_A) \in [0,1]^2$ representing the belief that the other agent has type $H$. We will show how to recursively compute a function $W_k(q_B, q_A)$ which corresponds to the expected optimal welfare of a protocol with $\lceil k/2\rceil$ messages by Bob and $\lfloor k/2\rfloor$ messages by Alice such that at the last step, Bob prefers Alice's final posterior to the prior $q_A^0$ (i.e., the conversation satisfies the ex-post IR condition in \eqref{eqn:expost_IR}). We will set $W_k(q_B, q_A) = -\infty$ if this is not feasible. It is important to notice that the 
ex-post IR condition in $W_k(q_B, q_A)$ is with respect to the fixed prior $q_A^0$ and not $q_A$ in the $k$-th round.

\paragraph{Ex-post IR region} We start by noticing that the ex-post IR condition (equation \ref{eqn:expost_IR}) is a condition on the state of the agent's beliefs in the last stage of the game. The region of the space of beliefs $[0,1]^2$ that lead to ex-post IR in the last period is given by:


\newcommand{\IR}{\mathsf{IR}}
$$\IR_0 = \{(q_B, q_A) \in [0,1]^2 \mid
\E_{\theta_A\sim q_A}u_B(\theta_A,\theta_B,r^*(\theta_A,q_B)) \ge \E_{\theta_A\sim q_A}u_B(\theta_A,\theta_B,r^*(\theta_A,q_B^0)), \forall \theta_B \in \text{supp}(q_B)  \} $$
where $ \text{supp}(q_B) \subseteq \Theta_B$ are the types that occur with non-zero probability in $q_B$ (also, recall that $q_B^0$ is Alice's prior belief about Bob's type -- see equation \ref{eqn:expost_IR}). Using this notion we can define the starting point of the recursion as:

$$W_0(q_B, q_A) = W^*(q_B, q_A) \text{ if } (q_B, q_A) \in \IR_0 \quad \text{ and } \quad W_0(q_B, q_A) = -\infty \text{ o.w.}$$
where $W^*$ is the optimal welfare given the final beliefs $(q_A, q_B)$ without taking IR into account. Notice that since Alice breaks ties in favor of Bob, it has the following form:
$$W^*(q_B, q_A) =\E_{\theta_B,\theta_A \sim (q_B,q_A)} [u_A(\theta_{A},\theta_{B},r^*(\theta_A, q_B))+u_B(\theta_{A},\theta_{B},r^*(\theta_A, q_B))]$$


\paragraph{Recursive Step} Assuming we know $W_k$, we can compute $W_{k+1}$ by a process of alternating concavification: i.e, we will take concave hulls of the function with respect to Alice's belief (for odd $k$) and Bob's belief (for even $k$). Following Proposition \ref{prop:belief-split}, we can view each round of communication by e.g. Alice as splitting the belief $q_A$ into a distribution of beliefs $Q'_A$ supported on $[0,1]$ that preserves the average belief. From this perspective, we will update:
\begin{equation}\label{eq:wel_opt_x}
W_{k}(q_B, q_A) = \max \E_{q'_B \sim Q'_B}[W_{k-1}(q'_B, q_A)] \text{ s.t. } \E_{q'_B \sim Q'_B}[q'_B] = q_B \text{ for odd } k
\end{equation}
\begin{equation}\label{eq:wel_opt_y}
W_{k}(q_B, q_A) = \max \E_{q'_A \sim Q'_A}[W_{k-1}(q_B, q'_A)] \text{ s.t. } \E_{q'_A \sim Q'_A}[q'_A] = q_A \text{ for even } k
\end{equation}

Such functions provide enough information to compute the optimal social welfare:

\begin{lemma}\label{lem:correctness} Given the $W_k(q_B, q_A)$ functions defined above, the optimal social welfare obtained by a protocol with $k$ rounds is $W_{k}(q_B^0, q_A^0)$, i.e., the function evaluated on the original prior.
\end{lemma}

\begin{proof} This can be seen by induction on $k$. It holds for $k=0$ by definition. For odd $k$, let $W^*_k$ be the welfare of the optimal protocol and $W_k$ the function computed above. Assume that Bob's first signal splits Alice's prior belief $q_B^0$ into a distribution  $Q_B'$. For each $q'_B \sim Q'_B$ what follows is a $k-1$ round protocol starting at $(q'_B, q_A)$ where Bob doesn't regret being at the prior. Hence:
$$W^*_k(q_B, q_A) = \E_{q'_B \sim Q'_B}[W^*_{k-1}(q'_B, q_A)] = \E_{q'_B \sim Q'_B}[W_{k-1}(q'_B, q_A)] \leq  W_{k}(q'_B, q_A)$$
since $W_k$ is the maximum. Since $W_k$ corresponds to a valid protocol, $W_{k}(q'_B, q_A) = W^*_k(q_B, q_A)$. The argument for even $k$ is the same, swapping the roles of Alice and Bob.


\end{proof}

\paragraph{Exact discretization} The recursion above specifies continuous functions $W_k : [0,1]^2 \rightarrow \mathbb{R}$. To convert this into a discrete algorithm, we show that it is possible to discretize the space $[0,1]^2$ into finitely many points and only compute the $W_k$ restricted to this set. A crucial fact here is that the discretization is exact, i.e., it introduces no approximation errors. 

\begin{lemma}\label{lemma:bilinear} There are sets $X^* = \{0=x_0 <x_1 < \hdots < x_m = 1\}$ and $Y^* = \{0=y_0 < y_1 < \hdots < y_n=1\}$ such that $W_0$ restricted to the rectangle $(x_i, x_{i+1}) \times (y_j, y_{j+1})$ is bilinear. 
\end{lemma}

\begin{proof}Define $X^*$ as:
\begin{align*}
    X^* =\{0,1\}\cup \{q_B \in [0,1]: \exists \theta_A\in \Theta_A, \text{such that $r^*(\theta_A, q_B)$ differs on the left and right of $q_B$}\}
\end{align*}
This lists all the $q_B$ values at which Alice’s decision changes. Within the interval $ (x_i^*, x_{i+1}^*) $, Alice’s decisions remain unchanged. Therefore, we define $$ X_{\text{int}} = \left\{\frac{x_i + x_{i+1}}{2} \mid i = 0, 1, \dots, m-1 \right\} $$ by selecting the midpoints $ \frac{x_i + x_{i+1}}{2} $ as representatives of the intervals $ (x_i, x_{i+1}) $. Consequently, all $ x $ in $ X^* \cup X_{\text{int}} $ correspond to Alice’s strategies, collectively covering all possible strategies she may adopt over the entire range $ x \in [0,1] $.

As $q_B$ varies, Alice will take different strategies. Under different strategies of Alice, whether Bob's IR condition is satisfied will change depending on $q_A$. We therefore define the set $Y^*$, which enumerates the boundaries at which Bob's IR condition is satisfied or not, under different $q_B$, considering all possible strategies of Alice.

\begin{align*}
    Y^* = \{0,1\}\cup\{q_A\in [0,1]\mid &\exists q_B\in X^*\cup X_{\text{int}},\exists \theta_B \in \text{supp}(q_B),\\&\E_{\theta_A\sim q_A}u_B(\theta_A,\theta_B,r^*(\theta_A,q_B)) = \E_{\theta_A\sim q_A}u_B(\theta_A,\theta_B,r^*(\theta_A,q_B^0)) \}
\end{align*}


Now, recall the definition of $W^*$:
\begin{align*}
    W^*(q_A, q_B) =&\E_{\theta_B,\theta_A \sim (q_B,q_A)} [u_A(\theta_{A},\theta_{B},r^*(\theta_A, q_B))+u_B(\theta_{A},\theta_{B},r^*(\theta_A, q_B))]\\
    =& \sum_{\theta_A\in \Theta_A} \sum_{\theta_B\in \Theta_B}\ q_A(\theta_A)q_B(\theta_B)
(u_A(\theta_{A},\theta_{B},r^*(\theta_A, q_B))+u_B(\theta_{A},\theta_{B},r^*(\theta_A, q_B)))
\end{align*}

For each rectangle, either it is not in $\IR_0$, or it is in $\IR_0$ and $r^*(\theta_A, q_B)$ is a constant within it, hence $W_0$ restricted to each rectangle is bilinear.
\end{proof}

\begin{lemma}\label{lemma:disc_size}
The sets $X^*$ and $Y^*$ are finite and have size $O(|R|)$, where $R$ is the set of Alice's actions.
\end{lemma}

\begin{proof}
Since the number of actions that Alice can take is $|R|$, the number of beliefs $q_B$ where Alice changes her action at a specific type is also $O(|R|)$. Given that Alice has two types, the size of $X^*$ is $O(|R|)$. Furthermore, we know that the size of $X_{\text{int}}$ is also $O(|R|)$.

For each $q_B\in X^*\cup X_{\text{int}}$, there are at most two $\theta_B$ in $\text{supp}(q_B)$. For each $\theta_B$, there is at most one $q_A\in [0,1]$ that satisfies $$\E_{\theta_A\sim q_A}u_B(\theta_A,\theta_B,r^*(\theta_A,q_B)) = \E_{\theta_A\sim q_A}u_B(\theta_A,\theta_B,r^*(\theta_A,q_B^0)) \},$$ 
and therefore, $|Y^*|\leq 1\cdot 2\cdot |X^*\cup X_{\text{int}}| = O(|R|)$.
\end{proof}

\paragraph{Algorithm} A consequence of the existence of an \emph{exact} discretization that has size polynomial in the size of the input is that it is possible to compute each entry of $W_k(x,y)$ for $(x,y) \in X^* \times Y^*$ in poly-time.

\begin{lemma} \label{lem:update}
Given the values of $W_{k-1}(x,y)$ for $(x,y) \in X^* \times Y^*$ we can determine $W_{k}(x,y)$ by solving the following optimization programs over the simplex $\Delta_n = \{p \in [0,1]^n; \sum_i p_i = 1\}$. For odd $k$:
$$W_k(\hat x,\hat y) = \max \sum_{i=1}^m p_i W_{k-1}(x_i, \hat y) \quad \text{ s.t } \quad \sum_{i=1}^m p_i x_i = \hat x
\quad \text{and} \quad p \in \Delta_m$$
and for even $k$:
$$W_k(\hat x,\hat y) = \max \sum_{j=1}^n p_j W_{k-1}(\hat x, y_j) \quad \text{ s.t } \quad \sum_{j=1}^m p_j y_j = \hat y
\quad \text{and} \quad p \in \Delta_n$$
\end{lemma}

\begin{proof}
We prove the claim by induction on $k$. The proof for cases $k=1$ and $k=2$ are special and then $k \geq 3$ follows a general argument.

\paragraph{Case $k=1$:} We know by Lemma \ref{lemma:bilinear} that $W^*$ is bilinear. Hence once we fix a $\hat y$, $W_0(x,\hat y)$ is piecewise linear on $[0,1]$ with segments bounded by $x_i \in X^*$.
The function $x \mapsto u_A(x, \hat y)$ is continuous at $x_i$ but $x \mapsto u_B(x, \hat y)$ may  be discontinuous. Since Alice breaks ties in favor of Bob, the value of $u_B(x_i, \hat y)$ is the maximum of its left and right limits, hence:
\begin{align*}
    W_0(x_i,\hat y) = \max\left(\lim_{x\to x_i^-}W_0(x,\hat y),\lim_{x\to x_i^+}W_0(x,\hat y)\right).
\end{align*}

We need to show that given a point $(\hat x, \hat y)$, if $X'$ is the distribution in the solution of the optimization problem \eqref{eq:wel_opt_x}, then:
\begin{align*}
    W_{1}(\hat x, \hat y) = &\max \E_{x' \sim X'}[W_{0}(x', \hat y)] \text{ s.t. }\quad \E_{x'\sim X'}[x'] = \hat x\\
    =& \max \sum_{i=1}^m p_i W_{0}(x_i, \hat y) \quad \text{ s.t } \quad \sum_{i=1}^m p_i x_i = \hat x
\quad \text{and} \quad p \in \Delta_m
\end{align*}

We prove this by contradiction. Suppose there exists $x^*\in \text{supp}(X')$, and $x^*\notin X^*$. Assume $x_i < x^* <x_{i+1}$ and $x^* = q x_i + (1-q) x_{i+1}$. We know that $W_0(x,\hat y)$ restricted to $(x_i, x_{i+1})$ is linear in $x$, and therefore,
\begin{align*}
    W_{0}(x^*, \hat y)=& W_{0}(q x_i + (1-q) x_{i+1}, \hat y)\\
    \leq& q\lim_{x\to x_i^+}W_{0}(x, \hat y)+(1-q)\lim_{x\to x_{i+1}^-}W_{0}(x, \hat y)\\
    \leq& qW_{0}(x_i, \hat y)+(1-q)W_{0}(x_{i+1}, \hat y).
\end{align*}

Hence we can substitute $x^*$ with $x_i$ w.p. $q$ and $x_{i+1}$ w.p. $1-q$.

\paragraph{Case $k=2$:}
Fix $\hat x$ and consider the function  $y \mapsto W_0(\hat x,y)$. For a fixed $\hat x$, Alice's action is uniquely determined, and Bob's ex-post IR condition is satisfied within a closed interval. Therefore, we conclude that $ W_0(\hat x,y) $ restricted to $[y_j,y_{j+1}]$ is linear. Now, let $Y'$ be the distribution in the solution of the optimization problem \eqref{eq:wel_opt_y}. We want to show that:
\begin{align*}
    W_2(\hat x,\hat y) = &\max \E_{y' \sim Y'}[W_{1}(\hat x, y')] \text{ s.t. }\quad \E_{y' \sim Y'}[y'] = y\\=& \max \sum_{j=1}^n p_j W_{1}(\hat x, y_j) \quad \text{ s.t } \quad \sum_{j=1}^m p_j y_j = \hat y
\quad \text{and} \quad p \in \Delta_n
\end{align*}
Suppose there exists $y^*\in \text{supp}(Y')$, and $y^*\notin Y^*$. Assume $y_j < y^* <y_{j+1}$ and $y^* = q y_i + (1-q) y_{i+1}$.  We know that $W_0(\hat x, y)$ restricted to $[y_j, y_{j+1}]$ is linear in $y$, therefore,
\begin{align*}
    W_{1}(\hat x, y^*) =& \sum_{i=1}^m p_i W_{0}(x_i, y^*)\\
    =& q\sum_{i=1}^m p_i W_{0}(x_i, y_j)+(1-q)\sum_{i=1}^m p_i W_{0}(x_i, y_{j+1})\\
    \leq& q W_1(\hat x, y_j)+(1-q) W_1(\hat x, y_{j+1}).
\end{align*}

Hence we can substitute $y^*$ with $y_j$ w.p. $q$ and $y_{j+1}$ w.p. $1-q$.

\paragraph{Case $k\geq 3$:}
We now consider the case where $k \geq 3$. We take the case where $k$ is an odd number as an example. The proof for the even case is entirely similar. We assume the base condition:
$$W_{k-1}(\hat x,\hat y) = \max \sum_{j=1}^n p_j W_{k-2}(\hat x, y_j) \quad \text{ s.t } \quad \sum_{j=1}^m p_j y_j = \hat y
\quad \text{and} \quad p \in \Delta_n $$
$$W_{k-2}(\hat x,\hat y) = \max \sum_{i=1}^m p_i W_{k-3}(x_i, \hat y) \quad \text{ s.t } \quad \sum_{i=1}^m p_i x_i = \hat x
\quad \text{and} \quad p \in \Delta_m$$

From the property of concave closure, the second condition gives $W_{k-2}(x, y_j)$ restrict to $[x_i, x_{i+1}]$ is linear. We want to prove
\begin{align*}
    W_{k}(\hat x, \hat y) = &\max \E_{x' \sim X'}[W_{k-1}(x', \hat y)] \text{ s.t. } \E_{x'\sim X'}[x'] = \hat x\\
    =& \max \sum_{i=1}^m p_i W_{k-1}(x_i, \hat y) \quad \text{ s.t } \quad \sum_{i=1}^m p_i x_i = \hat x
\quad \text{and} \quad p \in \Delta_m
\end{align*}

We prove by contradiction. Suppose there exists $x^*\in \text{supp}(X')$, and $x^*\notin X^*$. Assume $x_i < x^* <x_{i+1}$ and $x^* = q x_i + (1-q) x_{i+1}$. The key observartion of the proof is:

\begin{align*}
    W_{k-1}(x^*,\hat y) =& \sum_{j=1}^n p_j W_{k-2}(x^*, y_j)\\
    =& q\sum_{j=1}^n p_jW_{k-2}(x_i, y_j)+(1-q)\sum_{j=1}^np_jW_{k-2}(x_{i+1}, y_j)\\
    \leq& q W_{k-1}(x_i,\hat y)+(1-q)W_{k-1}(x_{i+1},\hat y).
\end{align*}

Hence we can substitute $x^*$ with $x_i$ w.p. $q$ and $x_{i+1}$ w.p. $1-q$, which completes the proof.
\end{proof}

We observe that while it is useful to describe $W_k$ as a linear program, there is a linear time algorithm to compute $W_k(\cdot, \hat y)$ as a function of $W_{k-1}(\cdot, \hat y)$ as a convex hull computation problem.

\begin{theorem}\label{thm:algo_correctness} There is algorithm running in time $\text{poly}(k, |R|)$ that for each $k$, computes the highest welfare achievable by an ex-post IR protocol with $k$ rounds ($\lceil k/2\rceil$ messages by Bob and $\lfloor k/2\rfloor$ messages by Alice).
\end{theorem}

\begin{proof}
Using Lemma \ref{lemma:disc_size} we can construct the sets $X^*$ and $Y^*$ in polynomial time, where each set has size $O(|R|)$.Using the definition of $W_0$ we can construct a table that stores the value of $W_0$ on $X^* \times Y^*$.  Note that $q_A^0 \in Y^*$ (See proof of~\Cref{lem:reduce_mesh}). If $q_B^0\in X^*$ then we can simply use the algorithm in Lemma \ref{lem:update} we can construct each $W_{k-1}$ from $W_k$ in this set: $X^* \times Y^*$.

If $q_B\notin X^*$ and $k$ is even (which means the updation in the final round is along vertical direction), we only need to modify the computation of the algorithm in the final step (the $k$-th round) as follows:
$$W_{k}(q_B^0,q_A^0) = \max \sum_{j=1}^n p_j W_{k-1}(q_B^0, y_j) \quad \text{ s.t } \quad \sum_{j=1}^m p_j y_j = q_A^0
\quad \text{and} \quad p \in \Delta_n .$$
\end{proof}

\subsection{Gap between Bayesian conversations and mediator protocols}\label{sec:alg-example}

When ex-post IR is required, the two types of protocols are not equivalent. 
We show (via a game between an ``employer'' and ``job candidate'') that mediator protocols are more powerful than committed Bayesian conversations in this setting.  

\begin{theorem} \label{thm:expost_gap}
There exists a game where the highest social welfare that can be implemented by ex-post IR committed Bayesian conversations is lower than the highest social welfare that can be implemented by ex-post IR mediator protocols.
 \end{theorem}

	We provide an example. Consider a two-player game between employers (Alice) and job candidates (Bob). Suppose there are two types of employers: employers who prefer candidates with good programming skills and employers who prefer candidates with good communication skills, so we have $\theta_A \in \{\text{Prog, Comm} \}$. The candidates also have two types: good at programming and good at communication $\theta_B \in \{\text{Prog, Comm} \}$. An employer can choose to hire or not hire a candidate $r\in\{\text{hire, not hire}\}$ and an employer's utility function is shown in Table~\ref{table:employer}.
\begin{table}[!h]	
\centering
	\begin{tabular}{|c|c|c|}
	\hline $u_A(\text{Prog}, \cdot)$ & $\theta_B = \text{Prog}$ & $\theta_B = \text{Comm}$\\
	\hline hire & $10$ & $-10$\\
	\hline not hire & $0$ & $0$\\
	\hline
	\end{tabular}
	~\quad
	\begin{tabular}{|c|c|c|}
	\hline $u_A(\text{Comm}, \cdot)$ & $\theta_B = \text{Prog}$ & $\theta_B = \text{Comm}$\\
	\hline hire & $-1$ & $1$\\
	\hline not hire & $0$ & $0$\\
	\hline
	\end{tabular}
	\caption{The employer's utility function in the hiring problem.}
	\label{table:employer}
    
\end{table}
And a candidate simply wants to be hired,
$u_B = 2 \cdot \bm{1}(r = \text{hire})$.
Suppose $\P(\theta_A = \text{Prog}) = 0.5$ and $\P(\theta_B = \text{Prog}) = 0.6$, so without any communication a Prog-type employer always hires and a Comm-type employer never hires. Suppose we want to maximize the expected social welfare $\E[u_A + u_B]$. 

We show that the optimal ex-post IR mediator protocol achieves a social welfare of $22/5$ whereas the the optimal ex-post IR Bayesian conversation only achieves $21/5$.

In the optimal mediator protocol, the mediator sends one of two public signals  $s_1$ indicates ``good communication skill'' and $s_2$ indicates ``good programming skill''. If Alice wants to hire someone with good programming skills, the mediator sends a perfect signal. If the Alice wants to hire someone with good communication skills, the mediator sends an garbled signal: with some probability it will send signal $s_1$ when $\theta_b = \text{Prog}$. In \Cref{app:hiring} we describe the details of the optimal mediator protocol and show it achieves welfare $22/5$.

In the same appendix, we use the technique developed in the previous section to show compute optimal ex-post IR Bayesian conversation and show the optimal welfare is $21/5$, providing a strict separation. We also show that the optimal conversation is a single message from Bob to Alice that is unconditioned on Alice's type, as illustrated in~\Cref{fig:optimal_bayesian_conversation}. The intuition is that if Bob were to condition on Alice’s type,
some types of Bob would have regretted the outcome of the conversation. The corresponding $\IR_0$, $W_0$, and $W_k$ are plotted in \Cref{fig:final_response,fig:utility_round_1}.

\begin{figure}[!h]
\centering
\begin{subfigure}[t]{0.45 \textwidth}
\begin{tikzpicture}
	\begin{axis}[
		xmin=-0.03,
		xmax=1.15,
		ymin=-0.03, 
		ymax=1.15, 
		axis lines = middle, 
		xlabel = {$q_B(\theta_B = \text{Prog})$}, 
		ylabel = {$q_A(\theta_A = \text{Prog})$},  
		width= \textwidth, 
		height = 0.9\textwidth,
		x label style={at={(axis description cs:0.5,-0.15)},anchor=north},
    	y label style={at={(axis description cs:-0.17,.5)},rotate=90,anchor=south},
            xtick={0, 0.5, 0.6, 1},
            xticklabels = {$0$, $0.5\ $, $\ 0.6$, $1$},
    	]
    \fill[gray, fill opacity=0.5, draw=black, line width=0.2mm] (axis cs:0.5,0.5) rectangle (axis cs:1,1);
    \fill[gray, fill opacity=0.5, draw=black, line width=0.2mm] (axis cs:0,0) rectangle (axis cs:0.5,0.5);
    \filldraw[blue] (0.6,0.5) circle (2pt) node[anchor=north west]{$(q_B^0, q_A^0)$};
	\end{axis}
\end{tikzpicture}
\caption{$\IR_0$}
\label{fig:final_response}
\end{subfigure}
~
\begin{subfigure}[t]{0.45\textwidth}
\begin{tikzpicture}
	\begin{axis}[
		xmin=-0.03,
		xmax=1.15,
		ymin=-0.03, 
		ymax=1.15, 
		axis lines = middle, 
		xlabel = {$q_B(\theta_B = \text{Prog})$}, 
		ylabel = {$q_A(\theta_A = \text{Prog})$},  
		width=\textwidth, 
		height = 0.9\textwidth,
		x label style={at={(axis description cs:0.5,-0.15)},anchor=north},
    	y label style={at={(axis description cs:-0.17,.5)},rotate=90,anchor=south},
            xtick={0, 0.5, 0.6, 1},
            xticklabels = {$0$, $0.5\ $, $\ 0.6$, $1$},
    	]

	\addplot[color = gray, dashed, line width = 0.3mm,  domain = 0:1]{0.5};
    \addplot[color = gray, dashed, line width = 0.3mm,  domain = 0:1]{1};
    \addplot[color = gray, dashed, line width = 0.3mm] coordinates {(0.5,0)(0.5,1)};
    \addplot[color = gray, dashed, line width = 0.3mm] coordinates {(0.6,0)(0.6,1)};
    \addplot[color = gray, dashed, line width = 0.3mm] coordinates {(1,0)(1,1)};
    \filldraw[blue] (0,1) circle (2pt) node[anchor=south west]{$-\infty$};
    \filldraw[blue] (0.5,1) circle (2pt) node[anchor=south east]{$2$};
    \filldraw[blue] (1,1) circle (2pt) node[anchor=west]{$12$};
    
    \filldraw[blue] (0,0.5) circle (2pt) node[anchor=south west]{$\frac{3}{2}$};
    \filldraw[blue] (0.5,0.5) circle (2pt) node[anchor=south east]{$2$};
    \filldraw[blue] (1,0.5) circle (2pt) node[anchor=west]{$6$};
    
    \filldraw[blue] (0,0) circle (2pt) node[anchor=south west]{$3$};
    \filldraw[blue] (0.5,0) circle (2pt) node[anchor=south east]{$2$};
    \filldraw[blue] (1,0) circle (2pt) node[anchor=south west]{$-\infty$};

    \filldraw[blue] (0.6,0) circle (2pt) node[anchor=south west]{$-\infty$};
    \filldraw[blue] (0.6,0.5) circle (2pt) node[anchor=south west]{$\frac{14}{5}$};
    \filldraw[blue] (0.6,1) circle (2pt) node[anchor=south west]{$4$};
	\end{axis}
\end{tikzpicture}
\caption{$W_0$}
\label{fig:utility_round_0}
\end{subfigure}

\caption{Illustrations of $\IR_0$ and $W_0$ for finding the optimal ex-post IR Bayesian protocol in the hiring problem.}
\end{figure}

\begin{figure}[H]
\centering
\begin{subfigure}[t]{0.45\textwidth}
\begin{tikzpicture}
	\begin{axis}[
		xmin=-0.03,
		xmax=1.15,
		ymin=-0.03, 
		ymax=1.15, 
		axis lines = middle, 
		xlabel = {$q_B(\theta_B = \text{Prog})$}, 
		ylabel = {$q_A(\theta_A = \text{Prog})$},  
		width=\textwidth, 
		height = 0.9\textwidth,
		x label style={at={(axis description cs:0.5,-0.15)},anchor=north},
    	y label style={at={(axis description cs:-0.17,.5)},rotate=90,anchor=south},
            xtick={0, 0.5, 0.6, 1},
            xticklabels = {$0$, $0.5\ $, $\ 0.6$, $1$},
    	]

	\addplot[color = gray, dashed, line width = 0.3mm,  domain = 0:1]{0.5};
    \addplot[color = gray, dashed, line width = 0.3mm,  domain = 0:1]{1};
	\addplot[color = gray, dashed, line width = 0.3mm] coordinates {(0.5,0)(0.5,1)};
    \addplot[color = gray, dashed, line width = 0.3mm] coordinates {(0.6,0)(0.6,1)};
    \addplot[color = gray, dashed, line width = 0.3mm] coordinates {(1,0)(1,1)};
    \filldraw[blue] (0,1) circle (1.5pt) node[anchor=south west]{$-\infty$};
    \filldraw[blue] (0.5,1) circle (1.5pt) node[anchor=south east]{$2$};
    \filldraw[blue] (1,1) circle (1.5pt) node[anchor=south east]{$12$};
    
    \filldraw[blue] (0,0.5) circle (1.5pt) node[anchor=south west]{$\frac{3}{2}$};
    \filldraw[red] (0.5,0.5) circle (1.5pt) node[anchor=south east]{$\frac{15}{4}$};
    \filldraw[blue] (1,0.5) circle (1.5pt) node[anchor=south east]{$6$};
    
    \filldraw[blue] (0,0) circle (1.5pt) node[anchor=south west]{$3$};
    \filldraw[blue] (0.5,0) circle (1.5pt) node[anchor=south east]{$2$};
    \filldraw[blue] (1,0) circle (1.5pt) node[anchor=south east]{$-\infty$};

    \filldraw[blue] (0.6,0) circle (2pt) node[anchor=south west]{$-\infty$};
    \filldraw[red] (0.6,0.5) circle (2pt) node[anchor=south west]{$\frac{21}{5}$};
    \filldraw[blue] (0.6,1) circle (2pt) node[anchor=south west]{$4$};
	\end{axis}
\end{tikzpicture}
\caption{$W_1=W_2=\dots$}
\label{fig:utility_round_1}
\end{subfigure}
~
\begin{subfigure}[t]{0.45\textwidth}
\begin{tikzpicture}
	\begin{axis}[
		xmin=-0.03,
		xmax=1.15,
		ymin=-0.03, 
		ymax=1.15, 
		axis lines = middle, 
		xlabel = {$q_B(\theta_B = \text{Prog})$}, 
		ylabel = {$q_A(\theta_A = \text{Prog})$},  
		width=\textwidth, 
		height = 0.9\textwidth,
		x label style={at={(axis description cs:0.5,-0.15)},anchor=north},
    	y label style={at={(axis description cs:-0.17,.5)},rotate=90,anchor=south},
            xtick={0, 0.5, 0.6, 1},
            xticklabels = {$0$, $0.5\ $, $\ 0.6$, $1$},
    	]

	\filldraw[black] (0.6,0.5) circle (1.5pt);
    \draw [-{Stealth[length=2mm, scale width=1.5]},thick] (0.6,0.5)  -- node[above]{$\frac{2}{5}$} (0,0.5);
    \draw [-{Stealth[length=2mm, scale width=1.5]},thick] (0.6,0.5)  -- node[above]{$\frac{3}{5}$} (1,0.5);
	\end{axis}
\end{tikzpicture}
\caption{Optimal ex-post IR Bayesian Conversation}
\label{fig:optimal_bayesian_conversation}
\end{subfigure}
\caption{Illustrations of $W_k$ for all $k\geq 1$ and the final optimal ex-post IR Bayesian conversation in the hiring problem.}
\end{figure}

\subsection{Instance of infinite round convergence}
From the above example, we can see that some games can achieve the optimal social welfare through a finite number of ex-post IR Bayesian conversations. However, not all games exhibit this property. There exist certain games where achieving the optimal social welfare requires an infinite number rounds for ex-post IR Bayesian conversation protocols:
\begin{theorem} \label{thm:infinite_round}
There exists a game where the highest social welfare achievable through ex-post IR committed Bayesian conversations requires infinitely many rounds to converge to the optimum.
\end{theorem}


The proof of Theorem \ref{thm:infinite_round} involves again applying the methodology in \Cref{thm:algo_correctness} to the following game and then computing an explicit expression for $W_k(q_B^0, q_A^0)$.

In the two-player game Alice has types $\Theta_A = \{\theta_{A0}, \theta_{A1}\}$ and Bob has types $\Theta_B = \{\theta_{B0}, \theta_{B1}\}$. Alice can take two actions $r \in \{r_0, r_1\}$. Suppose $q_B^0 = \P(\theta_B = \theta_{B0}) = 0.4$, and $q_A^0=\P(\theta_A = \theta_{A0}) = 0.6$. The utilities of the two players are given in the following table:

\begin{table}[!h]
\centering
	\begin{tabular}{|c|c|c|}
	\hline $u_A(\theta_{A0}, \cdot)$ & $\theta_B = \theta_{B0}$ & $\theta_B = \theta_{B1}$\\
	\hline $r_0$ & $7$ & $5$\\
	\hline $r_1$ & $5$ & $7$\\
	\hline
	\end{tabular}
	~\quad
	\begin{tabular}{|c|c|c|}
	\hline $u_A(\theta_{A1}, \cdot)$ & $\theta_B = \theta_{B0}$ & $\theta_B = \theta_{B1}$\\
	\hline $r_0$ & $1$ & $3$\\
	\hline $r_1$ & $0$ & $5$\\
	\hline
	\end{tabular}
	\\[1em] 
	\begin{tabular}{|c|c|c|}
	\hline $u_B(\theta_{B0}, \cdot)$ & $\theta_A = \theta_{A0}$ & $\theta_A = \theta_{A1}$\\
	\hline $r_0$ & $5$ & $10$\\
	\hline $r_1$ & $10$ & $0$\\
	\hline
	\end{tabular}
	~\quad
	\begin{tabular}{|c|c|c|}
	\hline $u_B(\theta_{B1}, \cdot)$ & $\theta_A = \theta_{A0}$ & $\theta_A = \theta_{A1}$\\
	\hline $r_0$ & $10$ & $10$\\
	\hline $r_1$ & $10$ & $4$\\
	\hline
	\end{tabular}
	\caption{Alice and Bob's utility function.}
	\label{table:inf_game_instance}
\end{table}

We write the equations in Lemma \ref{lem:update} and identify an explicit recursion for $W_k(x,y)$ at the breakpoints. The full analysis can be found in Appendix \ref{sec:proof_inf_round}. Then we derive that:
$$
W_{4k+1}(0.4,0.6)= -\frac{144}{325}\left(\frac{3}{16}\right)^k +\frac{4369}{325},
$$
and in particular, the welfare of the optimal protocol strictly increases with the number of rounds of communication.

\section{The Convergence Structure of Ex-post IR Bayesian Conversations}

In the previous section, we proposed an algorithm for computing the optimal ex-post IR Bayesian conversation in which both players have two types. The results reveal two distinct regimes: some conversations converge in finitely many rounds, while others require infinitely many rounds.

Naturally, this raises several fundamental questions: Does a universal upper bound exist on the number of rounds required for convergence in all finitely-converging games? What is the rate of convergence in the infinite-round regime? And is there an efficient algorithm to determine whether a given game achieves optimal social welfare in finitely many rounds?

In this section, we turn our attention to Bayesian conversations in which both Alice and Bob have two types, and Alice has two available actions. We first show that, for such games, the algorithm described in the theorem reduces the problem to alternating concavification on a mesh of size at most $4 \times 5$. We then conduct a detailed structural analysis of cycles that emerge in the update process, ultimately resolving all three of the above questions.

    \begin{theorem}\label{thm:converge_BC}
    Consider a Bayesian conversation in which both Alice and Bob have two types, i.e., $|\Theta_A| = |\Theta_B| = 2$, and Alice has two available actions, i.e., $|R| = 2$. Then the optimal social welfare is either achieved within $6$ rounds, or it converges linearly to the optimal value over infinitely many rounds. Moreover, there exists a polynomial-time algorithm that determines which of the two cases occurs.
    \end{theorem}
    The complete proof will be given later in the paper. Our overall strategy is as follows: we first show that for any game satisfying the stated conditions, the algorithm described in~\Cref{thm:algo_correctness} is equivalent to performing alternating concavification along two dimensions on a mesh of size at most $4 \times 5$. We then analyze the properties of such meshes.

\begin{lemma}\label{lem:reduce_mesh}
    Consider a Bayesian conversation in which both Alice and Bob have two types, i.e., $|\Theta_A| = |\Theta_B| = 2$, and Alice has two available actions, i.e., $|R| = 2$. Then the algorithm described in \Cref{thm:algo_correctness} is equivalent to performing alternating concavification on a mesh with at most $4$ columns and at most $5$ rows.
    \end{lemma}
    \begin{proof}
        We consider  $|X^*|$ and $|Y^*|$.
        
        Since Alice has only 2 available actions, this implies that for each type of Alice, her chosen action can switch at most once as $q_B$ varies from 0 to 1. As there are two types of Alice and we include both endpoints 0 and 1, it follows that $|X^*| \leq 4$.

        We consider the case where $|X^*| = 4$, meaning that each of the two types of Alice switches her action at a different value of $q_B$. In this case, $X^*$ partitions the interval $q_B \in [0,1]$ into 3 \textbf{segments}. We analyze a fixed type of Bob and examine the structure of his IR region within each of these 3 segments.

        For a given type $\theta_B$ of Bob, let $v_0$ denote his expected payoff from not participating in the conversation. Within a single segment, the actions of both types of Alice are fixed; suppose that $\theta_{A1}$ and $\theta_{A2}$ choose actions $r_1$ and $r_2$, respectively. Denote $u_B(\theta_{A1}, \theta_B, r_1)$ as $v_1$, $u_B(\theta_{A2}, \theta_B, r_2)$ as $v_2$, then Bob's IR region in this segment is:
        \begin{align*}
            \{q_A\mid q_A\cdot v_1 + (1-q_A)\cdot v_2 \geq v_0\}
        \end{align*}
        
        By marking the non-IR regions in blue, the IR region can be classified into the following 4 cases:
        
        \begin{center}
        
        \begin{tikzpicture}[scale=5, thick]

        \foreach \i/\j/\name/\note in {
            0/1/{Case I}/{All IR},
            1/1/{Case II}/{All not IR},
            0/0/{Case III}/{Left IR},
            1/0/{Case IV}/{Right IR}
        }{

            \pgfmathsetmacro\x{0.0 + \i*1.5}
            \pgfmathsetmacro\y{0.0 + \j*0.5} 

            \draw[->] (\x-0.05,\y) -- (\x+1.1,\y);

            \filldraw (\x,\y) circle(0.4pt);
            \filldraw (\x+1,\y) circle(0.4pt);
            \node[below] at (\x,\y) {$0$};
            \node[below] at (\x+1,\y) {$1$};
            \node[below] at (\x+1.1,\y) {$q_A$};

            \node[above] at (\x+0.5,\y+0.08) {\small \name};
            \node[below=6pt] at (\x+0.5,\y) {\footnotesize \note};

            \ifnum\i=1
            \ifnum\j=1

                \draw[blue, line width=2pt] (\x-0.05,\y) -- (\x+1.06,\y);
            \fi
            \fi

            \ifnum\i=0
            \ifnum\j=0

                \draw[blue, line width=2pt] (\x+0.6,\y) -- (\x+1.06,\y);
            \fi
            \fi

            \ifnum\i=1
            \ifnum\j=0

                \draw[blue, line width=2pt] (\x-0.05,\y) -- (\x+0.4,\y);
            \fi
            \fi

        }

        \end{tikzpicture}
        \end{center}

        \begin{lemma}
            For a fixed type of Bob, it is impossible for the IR regions in two adjacent segments to correspond to Case III and Case IV, respectively.
        \end{lemma}
        \begin{proof}
            Case III corresponds to $v_1 < v_0$ and $v_2 > v_0$, while Case IV corresponds to $v_1 > v_0$ and $v_2 < v_0$. However, between two adjacent segments, only one type of Alice changes her action, so it is not possible for both $v_1$ and $v_2$ to change.
        \end{proof}

        We now show that $|Y^*| \le 5$. 
        
        In each segment, the final IR region is given by the intersection of the IR regions corresponding to the two types of Bob. The two endpoints of this region both belong to $Y^*$. The 3 segments contribute at most 6 elements to $Y^*$.

        We begin by focusing on the segment that contains the prior $(q_B^0, q_A^0)$. In this segment, $q_A^0$ must belong to $Y^*$, since for both type of Bob, the value $v_0$ is defined based on this point:
        \begin{align*}
            v_0 = q_A^0 \cdot u_B(\theta_{A1}, \theta_B, r_{10}) + (1-q_A^0) \cdot u_B(\theta_{A2}, \theta_B, r_{20}).
        \end{align*}
        \begin{itemize}
            \item If in this segment, the IR regions for the two types of Bob correspond to Case III and Case IV, respectively, then the final IR region reduces to the single point $q_A^0$. In this case, we clearly have $|Y^*| \le 5$.
            \item If the IR regions for the two types of Bob in this segment are of the same type, without loss of generality, suppose they correspond to Case III. Then 0 is one endpoint of the IR interval in that segment. We consider a segment adjacent to this one.
            \begin{itemize}
                \item If the IR region of one type of Bob in the adjacent segment corresponds to Case II, then the entire segment is non-IR and contributes nothing to $Y^*$, so we have $|Y^*| \le 4$.
                \item If the IR regions of both types of Bob in the adjacent segment do not correspond to Case II, then they must correspond to either Case I or Case III. In either case, $q_A = 0$ is guaranteed to be in the IR region, so 0 is one endpoint of the IR interval in that segment. Since both the segment in which the prior falls and this adjacent segment have 0 as an endpoint of their IR regions, it follows that $|Y^*| \le 5$.
            \end{itemize}
        \end{itemize}
        At $q_B = 0$ and $q_B = 1$, the IR region is not the intersection of the IR regions of the two types of Bob, but rather the IR region of the corresponding type alone. However, this does not affect the result, since the points in the boundary columns become fixed after the 2nd round and no longer influence the updates of other points.

    \end{proof}

	\begin{lemma}\label{thm:mesh_bound}
        Consider a $4 \times 5$ mesh of points. Suppose we perform alternating concavification along the two coordinate directions. Then the process either terminates in at most $6$ steps, or there exist points whose values strictly increase with each iteration and require infinitely many updates to reach their optimal values. In the latter case, the convergence is linear.   
    \end{lemma}

    The complete proof will be given later in the paper. We first establish several basic properties concerning alternating concavification on a $4 \times 5$ mesh. For convenience, we label each point on the $4 \times 5$ grid with a letter as shown below.

    \begin{center}
    \begin{tikzpicture}[scale=1, every node/.style={font=\small}]

        \foreach \y in {0,1,2,3,4}
            \draw[thin] (0,\y) -- (3,\y);
        \foreach \x in {0,1,2,3}
            \draw[thin] (\x,0) -- (\x,4);

        \node[anchor=south east] at (0,4.1) {$a$};
        \node[anchor=south] at (1,4.1) {$b$};
        \node[anchor=south] at (2,4.1) {$c$};
        \node[anchor=south west] at (3,4.1) {$d$};

        \node[anchor=west] at (3.1,3) {$e$};
        \node[anchor=west] at (3.1,2) {$f$};
        \node[anchor=west] at (3.1,1) {$g$};

        \node[anchor=north west] at (3,0) {$h$};
        \node[anchor=north] at (2,0) {$i$};
        \node[anchor=north] at (1,0) {$j$};
        \node[anchor=north east] at (0,0) {$k$};

        \node[anchor=east] at (-0.1,1) {$l$};
        \node[anchor=east] at (-0.1,2) {$m$};
        \node[anchor=east] at (-0.1,3) {$n$};

        \node[anchor=south east] at (1,3) {\textbf{A}};
        \node[anchor=south east] at (1,2) {\textbf{B}};
        \node[anchor=south east] at (1,1) {\textbf{C}};
        \node[anchor=south east] at (2,3) {\textbf{D}};
        \node[anchor=south east] at (2,2) {\textbf{E}};
        \node[anchor=south east] at (2,1) {\textbf{F}};

    \end{tikzpicture}
    \end{center}

    \begin{lemma}
        Starting from the 3rd iteration of the concavification process, only the values of the points in the central $2 \times 3$ subgrid are updated.
    \end{lemma}

    \begin{proof}
        This is because the updates of the boundary rows and columns are independent of the values of the interior points. As a result, the values on the boundary rows and columns remain unchanged after the first $2$ concavification steps.
    \end{proof}

    \begin{corollary}
        Starting from the 4th concavification step,
        \begin{itemize}
            \item During each odd-numbered (horizontal) update, it is impossible for both $2$ interior points in any row to be updated simultaneously;
            \item During each even-numbered (vertical) update, it is impossible for all $3$ interior points in any column to be updated simultaneously.
        \end{itemize}
    \end{corollary}

    \begin{proof}
        Otherwise, this would imply that at least one of the two endpoints of that row or column was updated in the previous iteration.
    \end{proof}
    
    \begin{definition}[Update Arrow]
    For convenience of exposition, we introduce the notion of \emph{update arrows}. For each concavification step, we represent an update by a directed arrow from point $A $ to point $B $, where $B $ is a point whose value is updated in the current step, and $A $ is a point whose value was updated in the previous step and directly caused the update at $B $.
    \end{definition}

    \begin{center}
    \begin{tikzpicture}[>=stealth, scale=1]

        \foreach \y in {0, 1.5, 3}
            \draw[thin] (-0.3,\y) -- (1.8,\y);
        \foreach \x in {0, 1.5}
            \draw[thin] (\x,-0.3) -- (\x,3.3);

        \node[anchor=south east] at (1.5,3) {$D$};
        \node[anchor=south east] at (1.5,1.5) {$E$};
        \node[anchor=south east] at (1.5,0) {$F$};
        
        \node[anchor=south west] at (0,0) {$C$};
        \node[anchor=south west] at (0,1.5) {$B$};
        \node[anchor=south west] at (0,3) {$A$};

        \draw[->, blue, thick] (1.5,3) to[out=300, in=60] (1.5,1.5); 
        \draw[->, blue, thick] (1.5,3) to[out=300, in=60] (1.5,0); 
        \draw[->, blue, thick] (0,0) to[out=120, in=240] (0,1.5);

    \end{tikzpicture}

    \vspace{1em}

    \captionof{figure}{Example of update arrows for the 4th (vertical) concavification step. This indicates that in the 3rd round, the values at points C and D were updated, and in the 4th round, the update at D caused updates at E and F, while the update at C caused an update at B.}    
    \end{center}

    \begin{lemma}\label{thm:cycle_form}
    Let a sequence of concavification steps be applied to a $4 \times 5$ mesh. Suppose that for some integer $k \geq 4$ and positive integer $n$, the update diagrams from round $k$ to round $k+n-1$ form a directed cycle. Then:
        \begin{itemize}
            \item There exist points whose values require infinitely many iterations to approach their maximal values.
            \item The convergence of these values is linear.
        \end{itemize}
    \end{lemma}
    \begin{proof}
        We first examine how the values of the points evolve after the cycle is formed. We begin by analyzing the following simple case.
        \begin{center}
        \begin{tikzpicture}[scale=1.5, thick, >=stealth]

        \coordinate (Q0) at (0,2);
        \coordinate (Q1) at (1,0);
        \coordinate (Q2) at (3,1);
        \coordinate (Q3) at (2,3);

        \coordinate (P0) at (1,2);
        \coordinate (P1) at (1,1);
        \coordinate (P2) at (2,1);
        \coordinate (P3) at (2,2);

        \draw (Q0) -- (P3);
        \draw (Q1) -- (P0);
        \draw (Q2) -- (P1);
        \draw (Q3) -- (P2);

        \draw[->, blue, thick] (P0) to[out=240, in=120] (P1);
        \draw[->, blue, thick] (P1) to[out=330, in=210] (P2);
        \draw[->, blue, thick] (P2) to[out=60, in=300] (P3);
        \draw[->, blue, thick] (P3) to[out=150, in=30] (P0);

        \node[left] at (Q0) {$Q_0$};
        \node[left] at (Q1) {$Q_1$};
        \node[right] at (Q2) {$Q_2$};
        \node[right] at (Q3) {$Q_3$};

        \node[above left] at (P0) {$P_0$};
        \node[below left] at (P1) {$P_1$};
        \node[below right] at (P2) {$P_2$};
        \node[above right] at (P3) {$P_3$};

        \end{tikzpicture}
        \end{center}

        \paragraph{Case $n = 4$.} Firstly we consider the case where the cycle is formed from round $k$ to round $k+3$; specifically, for integer $p\in \{0,1,2,3\}$, only the arrow $P_{p \bmod 4} \rightarrow P_{(p+1) \bmod 4} $ is formed in round $k + p $.

        Consider round $k$. After this round, the values at the points $P_0$, $P_1$, and $Q_1$ become linear in the $y$-direction. This linear relationship remains unchanged during rounds $k+1$ and $k+2$. In round $k+3$, the value at $P_0$ increases. Therefore, in round $k+4$, there will still be an arrow from $P_0$ to $P_1$. By repeating this argument, we see that for any integer $p\geq 0$, arrow $P_{p \bmod 4} \rightarrow P_{(p+1) \bmod 4} $ is formed in round $k + p $. This indicates that an infinite number of iterations is required.

        In terms of quantitative relations, let $v(M, k) $ denote the value of point $M $ at round $k $. Since the value at each point $Q_x $ remains constant throughout the process, we use the simplified notation $v(Q_x) $ to denote its value. Then we have, for any integer $m$,
        \begin{align*}
        v(P_1, k+4m+1) &= t_0 v(P_0, k+4m) + (1-t_0) v(Q_1), \\
        v(P_2, k+4m+2) &= t_1 v(P_1, k+4m+1) + (1-t_1) v(Q_2), \\
        v(P_3, k+4m+3) &= t_2 v(P_2, k+4m+2) + (1-t_2) v(Q_3), \\
        v(P_0, k+4m+4) &= t_3 v(P_3, k+4m+3) + (1-t_3) v(Q_0). 
        \end{align*}
        Here, for all $x\in \{0,1,2,3\}$, the value of $t_x $ is only determined by the grid spacing, and $t_x\in (0,1)$.

        Multiplying the above four equations, we obtain for any integer $m$:
        \begin{align*}
        v(P_x, k+4m+4) &= T_x v(P_x, k+4m) + c_x. 
        \end{align*}
        Here, for all $x\in \{0,1,2,3\}$, the value of $T_x $ and $c_0$ is only determined $t_0\sim t_3$ and $c_0 \sim c_3$, and $T_x\in (0,1)$.

        Since
        \begin{align*}
            v(P_0, \infty) &= T_0 v(P_0, \infty) + c_0 \Rightarrow v(P_0, \infty) = \frac{c_0}{1-T_0}.
        \end{align*}
        we know the error
        \begin{align*}
            &v(P_0, k+4m+4) - v(P_0, \infty)\\
            =& T_0 v(P_0, k+4m) + c_0 - v(P_0, \infty) \\
            =& T_0 (v(P_0, k+4m) - v(P_0, \infty)).
        \end{align*}
        Therefore, the convergence rate is linear. The same applies to the other points $P_1 $ through $P_3 $.

        \paragraph{Case $n = 2$.} We consider the case where the cycle is formed within 2 rounds, namely rounds $k $ and $k+1 $. Specifically, in round $k $, the arrows $P_0 \rightarrow P_1 $ and $P_2 \rightarrow P_3 $ are formed, and in round $k+1 $, the arrows $P_1 \rightarrow P_2 $ and $P_3 \rightarrow P_0 $ are formed.

        After round $k $, the values at $P_0, P_1, Q_1 $ and at $P_2, P_3, Q_3 $ satisfy a linear relationship in the $y $-direction. In round $k+1 $, the values of $P_0 $ and $P_2 $ are the only ones that increase. Therefore, in round $k+2 $, the arrows $P_0 \rightarrow P_1 $ and $P_2 \rightarrow P_3 $ will appear again.

        By iterating the above reasoning, it follows that for every integer $m \geq 0 $, the update diagram at round $k + 2m $ includes the arrows $P_0 \rightarrow P_1 $ and $P_2 \rightarrow P_3 $, and the diagram at round $k + 2m + 1 $ includes the arrows $P_1 \rightarrow P_2 $ and $P_3 \rightarrow P_0 $. This indicates that an infinite number of iterations is required.

        As for quantitative relations, we have, for any integer $m$,
        \begin{align*}
        v(P_1, k+2m+1) &= t_0 v(P_0, k+2m) + (1-t_0) v(Q_1), \\
        v(P_3, k+2m+1) &= t_2 v(P_2, k+2m) + (1-t_2) v(Q_3), \\
        v(P_0, k+2m+2) &= t_3 v(P_3, k+2m+1) + (1-t_3) v(Q_0), \\
        v(P_2, k+2m+2) &= t_1 v(P_1, k+2m+1) + (1-t_1) v(Q_2). 
        \end{align*}
        Here, for all $x\in \{0,1,2,3\}$, the value of $t_x $ is only determined by the grid spacing, and $t_x\in (0,1)$.

        We also have, for any integer $x\in \{0,1,2,3\}$, any integer $m$,
        \begin{align*}
            v(P_x, k+4m+4) = T_x v(P_x, k+4m) + c_x.
        \end{align*}
        Here, for all $x\in \{0,1,2,3\}$, the value of $T_x $ and $c_x$ is only determined $t_0\sim t_3$ and $c_0 \sim c_3$, and $T_x\in (0,1)$.

        Therefore, we also conclude that the convergence rate in this case is linear.

        \begin{remark}
        The case $n = 2 $ can be viewed as two simultaneous cycles of the type in the $n = 4 $ case, with starting points $P_0 $ and $P_2 $, respectively.
        \end{remark}

        \vspace{1.5em}

        Based on the analysis of the preceding simple cases, we proceed to consider the case of the $4 \times 5$ grid.

        According to Lemma 0.2, starting from the 4th round, the boundary rows and columns can not be updated. Therefore, any vertex involved in a cycle must lie in the interior—that is, among the points labeled with uppercase letters.

        We classify the discussion according to the vertices of the directed cycle. Up to symmetry, there are 3 distinct cases to consider:

        \paragraph{Case I: Two cycles with the shared edge $B\rightarrow E$.}

        \begin{center}
        \begin{tikzpicture}[>=stealth, scale=1]

        \foreach \y in {0, 1.5, 3}
            \draw[thin] (-0.3,\y) -- (1.8,\y);
        \foreach \x in {0, 1.5}
            \draw[thin] (\x,-0.3) -- (\x,3.3);

        \node[anchor=north east] at (1.5,3) {$D$};
        \node[anchor=south east] at (1.5,1.5) {$E$};
        \node[anchor=south east] at (1.5,0) {$F$};
        
        \node[anchor=south west] at (0,0) {$C$};
        \node[anchor=south west] at (0,1.5) {$B$};
        \node[anchor=north west] at (0,3) {$A$};

        \draw[->, blue, thick] (0,0) to[out=120, in=240] (0,1.5);
        \draw[->, blue, thick] (0,3) to[out=240, in=120] (0,1.5); 
        \draw[->, blue, thick] (1.5,3) to[out=150, in=30] (0,3);
        \draw[->, blue, thick] (0,1.5) to[out=330, in=210] (1.5,1.5);
        \draw[->, blue, thick] (1.5,0) to[out=210, in=330] (0,0);
        \draw[->, blue, thick] (1.5,1.5) to[out=300, in=60] (1.5,0);
        \draw[->, blue, thick] (1.5,1.5) to[out=60, in=300] (1.5,3);

        \end{tikzpicture}
    \end{center}

        We show that this case is impossible. It suffices to show that the following configuration cannot occur: in round $k$ ($k \geq 4$), the arrows $A \rightarrow B$ and $C \rightarrow B$ appear, and in round $k+1$, the arrow $B \rightarrow E$ appears.

        Suppose that in round $k$ , the arrows $A \rightarrow B$ and $C \rightarrow B$ appear. Then, at the end of round $k-1$, the values at $n, A, D$ and those at $l, C, F$ each satisfy a linear relationship in the $x$-direction. The update at round $k$ can be viewed as computing a ``new value'' by taking a linear combination of the values from the $n\text{--}e$ row and the $l\text{--}g$ row at the end of round $k-1$, and then comparing it to the ``old value'' on the $m\text{--}f$ row, keeping the maximum of the two.

        Since the values along the $a\text{--}k$ and $e\text{--}g$ columns are concave in the $y$-direction, the ``new value'' at points $m$ and $f$ cannot exceed the corresponding ``old value''. Moreover, the $m\text{--}f$ row is concave in the $x$-direction at the end of round $k-1$.
        \begin{center}
        
        \begin{tikzpicture}[scale=0.8, thick]

        \draw[->] (0,0) -- (6.5,0) node[right] {};
        \draw[->] (0,0) -- (0,4) node[above] {\small Value};

        \draw[dashed] (1,0) -- (1,3.5);
        \draw[dashed] (2.5,0) -- (2.5,3.5);
        \draw[dashed] (4,0) -- (4,3.5);
        \draw[dashed] (5.5,0) -- (5.5,3.5);

        \node[below] at (1,0) {\small $m$};
        \node[below] at (2.5,0) {\small $B$};
        \node[below] at (4,0) {\small $E$};
        \node[below] at (5.5,0) {\small $f$};

        \draw[thick] plot[smooth] coordinates {
            (1,1.1) (2.5,1.6) (4,2.0) (5.5,1.6)
        };

        \draw[thick] (1,0.9) --  (5.5,3.3);

        \node[right] at (5.5,3.3) {\small new};
        \node[right] at (5.5,1.6) {\small old};

        \end{tikzpicture}
            
        \end{center}

        On the interval from $m$ to $E$, since point $B$ is updated, the new value at $B$ is greater than the old one, while the new value at $m$ is no greater than the old value. As the new values form a linear segment and the old values are concave, it follows that $E$ is also updated to the new value. This implies that round $k$ also contains the arrows $D \rightarrow E$ and $F \rightarrow E$, which indicates that, prior to round $k$, the values along the $n\text{--}e$ and $l\text{--}g$ rows are both linear. As a result, point $f$ would also be updated. This leads to a contradiction.

        \paragraph{Case II: A single cycle $A\rightarrow B\rightarrow E\rightarrow D\rightarrow A$.}

        \begin{center}
        \begin{tikzpicture}[>=stealth, scale=1]

        \foreach \y in {0, 1.5, 3}
            \draw[thin] (-0.3,\y) -- (1.8,\y);
        \foreach \x in {0, 1.5}
            \draw[thin] (\x,-0.3) -- (\x,3.3);

        \node[anchor=north east] at (1.5,3) {$D$};
        \node[anchor=south east] at (1.5,1.5) {$E$};
        \node[anchor=south east] at (1.5,0) {$F$};
        
        \node[anchor=south west] at (0,0) {$C$};
        \node[anchor=south west] at (0,1.5) {$B$};
        \node[anchor=north west] at (0,3) {$A$};

        \draw[->, blue, thick] (0,3) to[out=240, in=120] (0,1.5); 
        \draw[->, blue, thick] (1.5,3) to[out=150, in=30] (0,3);
        \draw[->, blue, thick] (0,1.5) to[out=330, in=210] (1.5,1.5);
        \draw[->, blue, thick] (1.5,1.5) to[out=60, in=300] (1.5,3);

        \end{tikzpicture}
    \end{center}

        We firstly show that, in all rounds after the cycle has formed, the arrow $C \rightarrow B $ do not occur. We simply need to show that the route $A \rightarrow B \rightarrow E \rightarrow F \rightarrow C$ does not exist. Otherwise, when $A\rightarrow B$, the value from $A$ to $C$ is linear with respect to $y$-axis. Then when $F\rightarrow C$, there must be $C\rightarrow B$ in the next step, forming cycle $B \rightarrow E \rightarrow F\rightarrow C\rightarrow B$, which contradicts there exists only 1 cycle $A\rightarrow B\rightarrow E\rightarrow D\rightarrow A$.

        We consider the update diagrams over one full cycle (4 rounds) in the case where $n = 4$. There are six possible configurations, categorized as follows:

        \begin{center}

        \begin{tikzpicture}[>=stealth, scale=1]

        \foreach \i/\j/\label in {
        0/1/Type 1,
        1/1/Type 2,
        2/1/Type 3,
        0/0/Type 4,
        1/0/Type 5,
        2/0/Type 6
        }
        {
        
        \pgfmathsetmacro\xshift{3.8 * \i}
        \pgfmathsetmacro\yshift{4.8 * \j}

        \foreach \y in {0, 1.5, 3}
            \draw[thin] (\xshift - 0.3,\y + \yshift) -- (\xshift + 1.8,\y + \yshift);
        \foreach \x in {0, 1.5}
            \draw[thin] (\x + \xshift,\yshift - 0.3) -- (\x + \xshift,\yshift + 3.3);

        \node[anchor=north east] at (\xshift + 1.5,\yshift + 3) {$D$};
        \node[anchor=south east] at (\xshift + 1.5,\yshift + 1.5) {$E$};
        \node[anchor=south east] at (\xshift + 1.5,\yshift + 0) {$F$};
        
        \node[anchor=south west] at (\xshift + 0,\yshift + 0) {$C$};
        \node[anchor=south west] at (\xshift + 0,\yshift + 1.5) {$B$};
        \node[anchor=north west] at (\xshift + 0,\yshift + 3) {$A$};

        \draw[->, blue, thick] (\xshift + 0,\yshift + 3) to[out=240, in=120] (\xshift + 0,\yshift + 1.5); 
        \draw[->, blue, thick] (\xshift + 1.5,\yshift + 3) to[out=150, in=30] (\xshift + 0,\yshift + 3);
        \draw[->, blue, thick] (\xshift + 0,\yshift + 1.5) to[out=330, in=210] (\xshift + 1.5,\yshift + 1.5);
        \draw[->, blue, thick] (\xshift + 1.5,\yshift + 1.5) to[out=60, in=300] (\xshift + 1.5,\yshift + 3);

        \ifnum\i=1
            \ifnum\j=1
            \draw[->, blue, thick] (\xshift + 0,\yshift + 3) to[out=240, in=120] (\xshift + 0,\yshift + 0); 
            \fi
        \fi

        \ifnum\i=2
            \ifnum\j=1
            \draw[->, blue, thick] (\xshift + 0,\yshift + 3) to[out=240, in=120] (\xshift + 0,\yshift + 0); 
            \draw[->, blue, thick] (\xshift + 0,\yshift + 0) to[out=330, in=210] (\xshift + 1.5,\yshift + 0);
            \fi
        \fi

        \ifnum\i=0
            \ifnum\j=0
            \draw[->, blue, thick] (\xshift + 0,\yshift + 3) to[out=240, in=120] (\xshift + 0,\yshift + 0); 
            \draw[->, blue, thick] (\xshift + 0,\yshift + 0) to[out=330, in=210] (\xshift + 1.5,\yshift + 0);
            \draw[->, blue, thick] (\xshift + 1.5,\yshift + 1.5) to[out=300, in=60] (\xshift + 1.5,\yshift + 0); 
            \fi
        \fi

        \ifnum\i=1
            \ifnum\j=0
            \draw[->, blue, thick] (\xshift + 1.5,\yshift + 1.5) to[out=300, in=60] (\xshift + 1.5,\yshift + 0); 
            \fi
        \fi

        \ifnum\i=2
            \ifnum\j=0
            \draw[->, blue, thick] (\xshift + 0,\yshift + 3) to[out=240, in=120] (\xshift + 0,\yshift + 0); 
            \draw[->, blue, thick] (\xshift + 1.5,\yshift + 1.5) to[out=300, in=60] (\xshift + 1.5,\yshift + 0); 
            \fi
        \fi

        \node at (\xshift + 0.75,\yshift - 0.6) {\small \label};
        }

        \end{tikzpicture}
            
        \end{center}

    We analyze the possible types that may arise in the next cycle after completing one full cycle.

    \begin{itemize}
        \item \textbf{Type 1:} When the iteration follows Type 1, the value at point $B$ increases with each iteration. As the cycle progresses, it is possible that a convex combination of the values at $A$ and point $j$ eventually exceeds the value at $C$, leading to the appearance of the arrow $A \rightarrow C$. The increase at point $C$ may then immediately trigger $C \rightarrow F$ in the next round. Similar cases may occur for other arrows as well.
        
        Therefore, after evolving according to Type 1 in one cycle, the next cycle may either remain in Type 1 or transition to one of Types $2 \sim 6$. In particular, if $A\to C$ and $C\to F$ is triggered, and then the combination of the updated values at $F$ and $C$ surpasses that at $E$, the iteration transitions to Type 7 in Case III.

        \item \textbf{Type 2 \& 5:}
        By the same reasoning as in Type 1, a cycle of Type 2 or Type 5 may transition in the next round to Type 3, 4, 6, or 7.

        In particular, the transition from Type 5 to Type 3 can be explained as follows: the appearance of the arrow $C \rightarrow F$ increases the value at point $F$, which may in turn prevent the arrow $E \rightarrow F$ from appearing. If this occurs, the cycle transitions to Type 3; otherwise, it becomes Type 4.

        \item \textbf{Type 3:} We prove that if a cycle is of Type 3, then the next cycle must also be of Type 3 and cannot transition to any other type. Specifically, we show that neither $E \rightarrow F$ nor $F \rightarrow E$ can occur in the next cycle.

        We focus on the current cycle and let the step in which $A \rightarrow B$ appears be round $k+1$. Then we have:
        \begin{align*}
            v(B, k+1) &= t_{AB}v(A, k) + (1-t_{AB})v(j), \\
            v(C, k+1) &= t_{AC}v(A, k) + (1-t_{AC})v(j), \\
            v(E, k+2) &= t_{BE}v(B, k+1) + (1-t_{BE})v(f), \\
            v(F, k+2) &= t_{CF}v(C, k+1) + (1-t_{CF})v(g).
        \end{align*}
        Then
        \begin{align*}
            v(E, k+2) = t_{AB}t_{BE}v(A, k) + c_1,\\
            v(F, k+2) = t_{AC}t_{CF}v(A, k) + c_2.
        \end{align*}
        Since in the current cycle, there is no $E\to F$ and $F\to E$, we have
        \begin{align*}
            &v(E, k+2) \geq t_{FE}v(F, k+2) + (1-t_{FE})v(d)\\
            \Rightarrow& (t_{AB}t_{BE} - t_{AC}t_{CF}t_{FE})v(A, k) \ge C_1\\
            &v(F, k+2) \geq t_{EF}v(E, k+2) + (1-t_{EF})v(i)\\
            \Rightarrow& (t_{AC}t_{CF} - t_{AB}t_{BE}t_{EF})v(A, k) \ge C_2,
        \end{align*}
        where $C_{i}$ is a constant determined by the values at outer points, and for any $t$, $t\in(0,1)$. 
        
        Since $t_{AB}>t_{AC}, t_{AB}t_{EF} = t_{AB}t_{BC} > t_{AC}$, and $t_{BE} = t_{CF}$, we know
        \begin{align*}
        t_1: =t_{AB}t_{BE} - t_{AC}t_{CF}t_{FE}>0\\ 
        t_2: =t_{AC}t_{CF} - t_{AB}t_{BE}t_{EF}>0 
        \end{align*} 
        and because for any integer $m$, $v(A, k+4m) > v(A, k)$, we have $t_1 v(A, k+4m) \geq C_1$ and $t_2 v(A, k+4m) \geq C_2$ holds for any integer $m$, hence $E\to F$ and $F \to E$ cannot occur in the following cycles, which means the cycle remains Type 3.

        \item \textbf{Type 6:} We prove that if a cycle is of Type 6, then the next cycle must also be of Type 6 and cannot transition to any other type. Specifically, we show that $C \rightarrow F$ can not occur in the next cycle. ($F \to C$ can not occur either, as is proven above.)

        We focus on the current cycle and let the step in which $E \rightarrow F$ and $E \rightarrow D$ appears be round $k+1$. Then we have:
        \begin{align*}
            v(F, k+1) &= t_{EF}v(E, k) + (1-t_{EF})v(i), \\
            v(D, k+1) &= t_{ED}v(E, k) + (1-t_{ED})v(c), \\
            v(A, k+2) &= t_{DA}v(D, k+1) + (1-t_{DA})v(n), \\
            v(C, k+3) &= t_{AC}v(A, k+2) + (1-t_{AC})v(j).
        \end{align*}
        Then
        \begin{align*}
            v(F, k+1) &= t_{EF}v(E, k) + c_1 \\
            v(C, k+3) &= t_{ED}t_{DA}t_{AC}v(E, k) + c_2 \\
        \end{align*}
        Since in the current cycle, there is no $C\to F$, we have
        \begin{align*}
            &v(F, k+1) \geq t_{CF}v(C, k+3) + (1-t_{CF})v(g)\\
            \Rightarrow& (t_{EF} - t_{ED}t_{DA}t_{AC}t_{CF})v(E, k) \ge C.
        \end{align*},
        where $C$ is a constant determined by the values at outer points, and for any $t$, $t\in(0,1)$. 
        
        Since $t_{EF}>t_{AC}$, we know
        \begin{align*}
        t: =t_{EF} - t_{ED}t_{DA}t_{AC}t_{CF}>0
        \end{align*} 
        and because for any integer $m$, $v(E, k+4m) > v(E, k)$, we have $t v(A, k+4m) \geq C$ holds for any integer $m$, hence $C\to F$ cannot occur in the following cycles, which means the cycle remains Type 6.
        \item \textbf{Type 4:} For Type 4, the alternating appearance of $C \rightarrow F$ and $E \rightarrow F$ may result in one preventing the other from forming. As a result, Type 4 may evolve into Type 3 or Type 6.

    \end{itemize}

    We now turn to the case of $n = 2$. In the case of $n = 4$, the process can essentially be viewed as the value at point $A$ being updated once every four rounds. Based on the amount by which $A$ increases, one can determine the subsequent evolution. The case of $n = 2$ can be interpreted as two four-round cycles running in parallel, each occupying half of the full loop.
    
    This means that the value at $A$ is updated once every two rounds, which in turn drives the evolution. For each individual four-round cycle, the possible update diagrams remain the same as the cases listed above, and the directionality of the evolution remains unchanged, since the value at point $A$ always increases monotonically.

    \paragraph{Case III: A single cycle $A\rightarrow C\rightarrow F\rightarrow D\rightarrow A$.}
        \begin{center}
        \begin{tikzpicture}[>=stealth, scale=1]

        \foreach \y in {0, 1.5, 3}
            \draw[thin] (-0.3,\y) -- (1.8,\y);
        \foreach \x in {0, 1.5}
            \draw[thin] (\x,-0.3) -- (\x,3.3);

        \node[anchor=north east] at (1.5,3) {$D$};
        \node[anchor=south east] at (1.5,1.5) {$E$};
        \node[anchor=south east] at (1.5,0) {$F$};
        
        \node[anchor=south west] at (0,0) {$C$};
        \node[anchor=south west] at (0,1.5) {$B$};
        \node[anchor=north west] at (0,3) {$A$};

        \draw[->, blue, thick] (0,3) to[out=255, in=105] (0,0);
        \draw[->, blue, thick] (1.5,3) to[out=150, in=30] (0,3);
        \draw[->, blue, thick] (0,0) to[out=330, in=210] (1.5,0);
        \draw[->, blue, thick] (1.5,0) to[out=75, in=285] (1.5,3);

        \node at (0.75,- 0.6) {\small Type 7};

        \end{tikzpicture}
    \end{center}

    If no arrow appears along the edge $BE$ during the update process, then the evolution proceeds exactly as in the initial simple case.
    
    If an update does occur along $BE$, without loss of generality, suppose it is $B \rightarrow E$. Since $E$ is updated, the arrow $F \rightarrow D$ in the next round may be blocked as a result.

    \begin{itemize}
        \item If the arrow is not blocked, the cycle proceeds as before.
        \item If it is blocked, then the next round of updates can lead to two possibilities:
        \begin{itemize}
            \item If only $E \rightarrow D$ appears, the system transitions to Type 3;
            \item If both $E \rightarrow D$ and $E \rightarrow F$ appear, it transitions to Type 4.
        \end{itemize}  

    \end{itemize}

    In conclusion, we can summarize the evolutionary relationships of the 7 types in \Cref{fig:type-transition}.

    \begin{figure}[H]
  \centering
    \begin{tikzpicture}[>=latex, node distance=2cm, thick]

        \node (1) at (0,0) {1};
        \node (2) at (1.5,1) {2};
        \node (5) at (1.5,-1) {5};
        \node (7) at (3.5,1) {7};
        \node (4) at (4.5,-1) {4};
        \node (3) at (6,1) {3};
        \node (6) at (6,-1) {6};

        \draw[->] (1) -- (2);
        \draw[->] (1) -- (5);
        \draw[->] (2) -- (7);
        \draw[->] (5) -- (7);
        \draw[->] (2) -- (4);
        \draw[->] (5) -- (4);
        \draw[->] (7) -- (3);
        \draw[->] (7) -- (4);
        \draw[->] (4) -- (6);
        \draw[->] (4) -- (3);

    \end{tikzpicture}
    \caption{
    Transition diagram of types. An arrow $i \rightarrow j$ indicates that Type $i$ can evolve into Type $j$. During the evolution process, a type may transition to any other type reachable via a directed path. (Except that Type 7 cannot directly evolve into Type 6.)
    }
    \label{fig:type-transition}
    \end{figure}              
    \end{proof}

    With the above groundwork in place, we are now ready to prove \Cref{thm:mesh_bound}.

    \begin{proof}[Proof of \Cref{thm:mesh_bound}]
        It suffices to show that if some point is still updated in round 7, then the update diagram must form a cycle, and hence the process cannot terminate in finite steps.

        By symmetry, it suffices to consider the cases in which point $A $ or point $B $ is updated in round 7.

        \paragraph{Case 1: Point $A $ is updated in round 7.} 
        By Lemma 0.2, we know that starting from round 4, if a point is updated, it must be caused by an update to some point in the central $2 \times 3 $ region in the previous round. In other words, the updated point in the current round must have an incoming arrow from a point in the central $2 \times 3 $ region.  

        Taking into account that odd-numbered rounds apply horizontal updates and even-numbered rounds apply vertical updates, it follows that in round 7 there must be an arrow $D \rightarrow A $.

        \begin{enumerate}[label=(\arabic*)]
            \item $E\rightarrow D$ in round 6.
            
            This implies that $B \rightarrow E$ is in round 5.
            \begin{enumerate}[label=\Alph*.]
                \item $A\rightarrow B$ in round 4.
                
                Then cycle $A\rightarrow B\rightarrow E\rightarrow D \rightarrow A$ is formed.
                \item $C\rightarrow B$ in round 4.
                
                From the proof of Theorem 0.5, we know that such a path $C \rightarrow B \rightarrow D \rightarrow E \rightarrow A$ directly leads to the formation of the arrow $A \rightarrow B$ in round 8. As a result, the cycle $B \rightarrow E \rightarrow D \rightarrow A \rightarrow B$ is formed.

            \end{enumerate}
            \item $F\rightarrow D$ in round 6.
            
            This implies that $F\rightarrow E$ in round 6, and $C \rightarrow F$ in round 5.
            \begin{enumerate}[label=\Alph*.]
                \item $A\rightarrow C$ in round 4.
                
                Then cycle $A\rightarrow C\rightarrow F\rightarrow D \rightarrow A$ is formed.
                \item $B\rightarrow C$ in round 4.
                
                This indicates $B$ is updated in round 3, hence the value of $m$, $B$, $E$ is linear in $x$-direction after round 3. 
                \begin{enumerate}[label=\alph*.]
                    \item $B$ is not updated in round 5.
                    
                    Since in round 6 $E$ is updated, then $E \rightarrow B$ must be in round 7, therefore cycle $B\rightarrow C\rightarrow F \rightarrow E \rightarrow B$ is formed.
                    \item $B$ is updated in round 5, i.e. $E \rightarrow B$ in round 5.
                    
                    Since $B \rightarrow C$ in round 4, $B$ is updated in round 5, $B \rightarrow C$ must be in round 6. Because $C\rightarrow F$ is in round 5, $F\rightarrow E$ is in round 6, hence cycle $B\rightarrow C\rightarrow F \rightarrow E \rightarrow B$ is formed.
                    
                \end{enumerate}
            \end{enumerate}
        \end{enumerate}
        \paragraph{Case 2: Point $B $ is updated in round 7.}
        This indicates $E\rightarrow B$ in round 7. From the proof of Theorem 0.5, we know that $D\rightarrow E$ and $F\rightarrow E$ cannot exist simultaneously in round 6. WLOG we assume $D\rightarrow E$ in round 6. Hence $A\rightarrow D$ is in round 5.
        \begin{enumerate}[label=(\arabic*)]
            \item $B\rightarrow A$ in round 4.
            
            Then cycle $B\rightarrow A\rightarrow D\rightarrow E \rightarrow B$ is formed.
            \item $C\rightarrow A$ in round 4. 
            
            The value of $B$, $A$, $b$ is linear in $y$-direction after round 4. 
            \begin{enumerate}[label=\Alph*.]
                \item $A$ is not updated in round 6.
                
                Since in round 7, $B$ is updated, $B\rightarrow A$ must be in round 8. Therefore cycle $A\rightarrow D \rightarrow E \rightarrow B \rightarrow A$ is formed.
                \item $A$ is updated in round 6.
                \begin{enumerate}[label=\alph*.]
                    \item $B\rightarrow A$ in round 6. 
                    
                    In this case $E\rightarrow B$ must be in round 5. Therefore a cycle $A\rightarrow D \rightarrow E \rightarrow B\rightarrow A$ is formed.
                    \item $C\rightarrow A$ in round 6.
                    
                    In this case $F\rightarrow C$ must be in round 5, which indicates the value of $E$, $F$, $i$ is linear in $y$-direction after round 5. Since $D \rightarrow E$ is in round 6, $D\rightarrow F$ must be in round 6, too. Therefore a cycle $A\rightarrow D \rightarrow F \rightarrow C \rightarrow A$ is formed.
                \end{enumerate}
            \end{enumerate}
        \end{enumerate}

        In conclusion, in both cases, a cycle is formed, and hence the process cannot terminate in finitely many steps, which completes the proof.
    \end{proof}

    \begin{lemma}
        For a given $4\times 5$ mesh, if the update diagrams form a cycle during the alternating concavification process, then there exists a polynomial-time algorithm that determines the type to which the update diagram stabilizes.
    \end{lemma}

    \begin{proof}
        Given the current type, one can compute a closed-form expression for the value at each point as a function of the iteration number. Once these expressions are obtained, they can be used to check whether the conditions for a type transition are satisfied.

        \begin{itemize}
            \item If there doesn't exist an iteration satisfies the transition condition, then the current type is the final stable type.
            \item If there exists an iteration in which a transition occurs (i.e., a new arrow inconsistent with the current type appears), then a full cycle is simulated starting from that round to determine the resulting type. The process then restarts from the beginning with this new type.
        \end{itemize}

        Since the total number of type transitions is bounded (as shown in Figure~\ref{fig:type-transition}), the number of iterations is finite, and the final stable type can be determined in polynomial time.

    \end{proof}

    Analogous to the $4 \times 5$ mesh, the $4 \times 4$ mesh exhibits a similar property as follows:

    \begin{lemma}
        Consider a $4 \times 4$ mesh of points. Suppose we perform alternating concavification along the two coordinate directions. Then the process either terminates in at most $5$ steps, or there exist points whose values strictly increase with each iteration and require infinitely many updates to reach their optimal values. In the latter case, the convergence is linear.   
    \end{lemma}
    \begin{proof}
        Similar to the $4 \times 5$ case, the boundary points stop updating after the 3rd round. Therefore, starting from the 3rd round, it suffices to consider only the interior $2 \times 2$ points.

    \begin{center}
    \begin{tikzpicture}[scale=1, every node/.style={font=\small}]

        \foreach \y in {0,1,2,3}
            \draw[thin] (0,\y) -- (3,\y);
        \foreach \x in {0,1,2,3}
            \draw[thin] (\x,0) -- (\x,3);

        \node[anchor=south east] at (0,3.1) {$a$};
        \node[anchor=south] at (1,3.1) {$b$};
        \node[anchor=south] at (2,3.1) {$c$};
        \node[anchor=south west] at (3,3.1) {$d$};

        \node[anchor=west] at (3.1,2) {$e$};
        \node[anchor=west] at (3.1,1) {$f$};

        \node[anchor=north west] at (3,0) {$g$};
        \node[anchor=north] at (2,0) {$h$};
        \node[anchor=north] at (1,0) {$i$};
        \node[anchor=north east] at (0,0) {$j$};

        \node[anchor=east] at (-0.1,1) {$k$};
        \node[anchor=east] at (-0.1,2) {$l$};

        \node[anchor=south east] at (1,2) {\textbf{A}};
        \node[anchor=south east] at (1,1) {\textbf{B}};
        \node[anchor=south east] at (2,2) {\textbf{D}};
        \node[anchor=south east] at (2,1) {\textbf{C}};

    \end{tikzpicture}
    \end{center}

    We show that if any point is updated in the 6th round, then a cycle must have formed—identical to the simple case analyzed in the proof of \Cref{thm:cycle_form}—which implies that the iteration cannot terminate in finitely many rounds.

    Since points $A$, $B$, $C$, and $D$ are symmetric, we may assume without loss of generality that point $D$ is updated in round 6. That is, the update $C \rightarrow D$ occurs in round 6, which implies $B \rightarrow C$ in round 5 and $A \rightarrow B$ in round 4.

    This means that prior to round 3, the values at points $l$, $A$, and $D$ must lie on a line in the $x$-direction.

    \begin{itemize}
        \item If point $A$ is not updated in round 5, then the values at $l$, $A$, and $D$ remain linear in the $x$-direction before round 6. Since $D$ is updated in round 6, it follows that $D \rightarrow A$ occurs in round 7. Therefore, a cycle $A \rightarrow B \rightarrow C \rightarrow D \rightarrow A$ is formed over rounds 4–7.

        \item If point $A$ is updated in round 5, i.e., $D \rightarrow A$ occurs, then $C \rightarrow D$ must have occurred in round 4. Thus, a cycle $A \rightarrow B \rightarrow C \rightarrow D \rightarrow A$ is formed over rounds 4 and 5.
    \end{itemize}
    \end{proof}

    For finite meshes with at most 3 rows or columns, we show that no instance requires infinitely many rounds to converge.

    \begin{lemma}
    Consider a finite mesh of points with at most $3$ columns or rows. Suppose we perform alternating concavification along the two coordinate directions. Then the process terminates in at most $4$ steps.
    \end{lemma}
    \begin{proof}
        As argued previously, the boundary points stop changing after 2 rounds of updates. 

    Therefore, when the mesh has only 1 or 2 columns (or rows), the claim is immediate.

    If the mesh has 3 columns (or rows), then at round 5, no point in the middle column can be updated. Otherwise, it would imply that some point in the boundary columns (rows) was updated in round 4, which is impossible.

    \end{proof}
    
    Finally, we use all the foundational results above to prove the main theorem stated at the beginning of this section.

    \begin{proof}[Proof of \Cref{thm:converge_BC}]
        According to the algorithm in \Cref{thm:algo_correctness}, we only need to track the value changes of a single point in the mesh as alternating concavification is applied. By prior analysis, $q_A^0$ must lie in $Y^*$, meaning the point $P(q_B^0, q_A^0)$ representing the prior is located on one of the rows of the constructed mesh.

            If $P$ lies on the boundary of the mesh, then the value converges to the maximum within two rounds.

            If $P$ is not on the boundary:
            \begin{itemize}
                \item If the mesh has at most $3$ rows or columns, the process converges within $4$ rounds.

                \item If the mesh is $4 \times 4$:
                \begin{itemize}
                    \item If alternating concavification terminates within $5$ rounds, then the social welfare also achieves its maximum.
                    \item If it does not terminate in finitely many rounds, then the value along every edge adjacent to points $A$–$D$ grows indefinitely, and the value at point $P$ converges linearly to the maximum.
                \end{itemize}

                \item If the mesh is $4 \times 5$:
                \begin{itemize}
                    \item If alternating concavification terminates within $6$ rounds, then the social welfare achieves its maximum.
                    \item If it does not terminate in finitely many rounds, we apply the algorithm from Theorem 0.9 to determine the eventual stable update type.
                    \begin{itemize}
                        \item If the stable type is Type 3, 4, 6, or 7, then since all six interior points ($A$–$F$) are updated in every round, all edges connected to them are updated as well. Since $P$ must lie on one of these edges, the social welfare converges linearly to the maximum.
                        \item If the stable type is Type 1, 2, or 5, then certain edges never get updated:
                        \begin{itemize}
                            \item Type 1: edges $lC$, $CF$, and $Fg$,
                            \item Type 2: edge $Fg$,
                            \item Type 5: edge $lC$.
                        \end{itemize}
                        These edges remain unchanged even after cycles form.
                        \begin{itemize}
                            \item If $P$ lies on one of these non-updating edges, then the social welfare reaches its maximum within $6$ rounds.
                            \item Otherwise, the social welfare converges linearly to the maximum.
                        \end{itemize}
                    \end{itemize}
                \end{itemize}
            \end{itemize}
            In summary, in all of the above cases, the value at point $P$ either stops updating within 6 rounds or converges linearly to its maximum value over infinitely many rounds. Note that all the cases in the above case analysis can be checked in polynomial time, and therefore this constitutes a polynomial-time algorithm. Hence, the proposition is proven.

    \end{proof}






\newpage
\bibliographystyle{ACM-Reference-Format}
\bibliography{ref,reference}

\newpage
\appendix

\section{Optimal ex-ante IR Bayesian conversations} \label{sec:exante} 
We finally consider ex-ante players and ex-ante individual rationality.
\begin{definition}[Ex-ante players]
	An ex-ante agent makes decisions on participating/quitting before seeing their type.
\end{definition}
Then we define a committed protocol as ex-ante individually rational if ex-ante players will choose to participate before seeing their types.
\begin{definition}[Ex-ante IR for committed protocols]
\label{def:exante_IR}
A committed protocol $\pi$ is ex-ante individually rational 
if Bob's ex-ante expected utility of following the entire protocol is no lower than the expected utility of no communication.
Formally, let $\P_\pi(\theta_A, \theta_B, q_A, q_B)$ be the distribution of the agents' types and posteriors after completing the entire protocol, then it requires
\begin{align}
\E_{(\theta_B,\theta_A, q_A, q_B)\sim\P_\pi}[u_B(\theta_A, \theta_B, r^*(\theta_A, q_B))] \ge 
\E_{\theta_B\sim\P(\theta_B),\theta_A\sim \P(\theta_A)}[u_B(\theta_A, \theta_B, r^0)], \label{eqn:exante_IR}
\end{align}
where random variable $r^0 = \arg\max_{r \in R} \E_{\theta_B\sim \P(\theta_B)} [u_A(\theta_A, \theta_B, r)]$ is Alice's best action without any communication. Note that we only need this inequality for Bob because it always holds for the action taker Alice.
\end{definition}

Our key finding is that if the players decide whether to participate ex-ante, Bayesian conversations are equivalent to mediator protocols in terms of the possible induced expected utilities in the base game. In fact, as we will see, Bob (the non-action-taker) can act as the mediator in a Bayesian conversation without violating any IR constraints. As a result, finding the optimal Bayesian conversation that maximizes the expectation of a utility function is equivalent to finding the optimal mediator protocol, which can be solved by a linear program.


 \subsection{Equivalence of protocols under ex-ante IR}
 \label{sec:equiv_ex_ante}
We first show that when our protocols only need to be ex-ante IR, 
mediator protocols are no more powerful than committed Bayesian conversations when we consider the induced expected utilities in the base game. In fact, as we will see, Bob (the non-action-taker) can act as the mediator in such protocols without violating any IR constraints. 
 
 We introduce some notation before the main results. Suppose we are interested in the expectation of an arbitrary utility function $u(\theta_A, \theta_B, r)$. This utility function can be Alice's utility $u(\theta_A, \theta_B, r) = u_A(\theta_A, \theta_B, r)$ or Bob's utility or a designer's utility that depends on the outcome of the game. We may also have some constraints that restrict the valid belief distributions to a subset $\mathcal{P} \subseteq \Delta(\Theta_A \times \Theta_B \times \Delta \Theta_B \times \Delta \Theta_A)$. For example, we may want the protocol to be ex-ante IR (Definition~\ref{def:exante_IR}) for the committed players, in which we have  $\mathcal{P}$ equal to the subset of distributions that satisfies equation~\eqref{eqn:exante_IR}. We then define the range of expected utilities that can be implemented by a class of protocol $\Pi$ as 
 	\begin{align*}
 	\text{Range}(u, \Pi, \mathcal{P}) = \{ \E_P[u(\theta_A, \theta_B, r^*(\theta_A, q_B))]: P\in \mathcal{P} \text{ and $P$ can be induced by a protocol $\pi \in \Pi$}\},
 	\end{align*}
	which represents all possible values of the expectation of $u(\theta_A, \theta_B, r)$ when Alice chooses her best action after a valid communication protocol. 
	
We first show that when we only need ex-ante, the mediator protocol and the committed BC protocols are equivalent in the sense that the ranges of implementable expected utilities are the same.
This is due to the following observation: we actually only need to care about the marginal distribution $P(\theta_A, q_B)$ generated by a protocol in this case, and for any mediator protocol, there exists a one-round Bayesian conversation that generates the same $P(\theta_A, q_B)$ in which Alice fully reveals her type in the first round. Furthermore, all feasible marginal distributions $P(\theta_A, q_B)$ can be characterized by a simple equation $\E[q_B|\theta_A] = \P(\theta_B)$ for all $\theta_A$, which says the conditional expectation of $q_B$ must always equal the prior. 
For ex-ante IR, we have the following theorem.
  \begin{theorem} \label{thm:equiv_exante}
 	For any utility function $u(\theta_A, \theta_B, r)$,
 	  the range of expected utilities that can be implemented by ex-ante IR committed Bayesian conversation protocols is equivalent to the range of expected utilities that can be implemented by ex-ante IR mediator protocols,
 	\begin{align*}
 		\text{Range}_{\text{ex-ante}}(u, \Pi_{\text{BC}} ) = \text{Range}_{\text{ex-ante}}(u, \Pi_{\text{M}} ),
 	\end{align*}
 	where $\text{Range}_{\text{ex-ante}}(u, \Pi) = \text{Range}(u, \Pi,\mathcal{P})$ with $\mathcal{P}$ being the subset of distributions that guarantee ex-ante IR for committed protocols (satisfy equation~\eqref{eqn:exante_IR}).
 \end{theorem}
 
We defer the proof 
to Appendices~\ref{app:exante}. 
In addition, Theorem~\ref{thm:equiv_exante}  and Theorem~\ref{thm:equiv_interim} can be easily extended to a vector of utility functions $\mathbf{u} = (u_1, \dots, u_L)$ with each $u_j = u_j(\theta_A, \theta_B, r)$. We give the full extended theorem in Appendix~\ref{app:equiv_extension}. In particular, the extended theorem can be used to show that the utility pairs $(\E[u_A], \E[u_B])$ that can be generated for the two players are the same no matter which type of protocols we use, indicating that the two types of protocols generate the same \emph{Pareto frontier}.
\begin{corollary}
The  Pareto frontier of the two players' utilities induced by ex-ante/interim IR mediator protocols is the same as the Pareto frontier induced by ex-ante/interim IR committed Bayesian conversations.
\end{corollary}

\paragraph{Discussion.} The equivalence of these two models relies crucially on two assumptions: (1) the protocols are committed, and (2) there is only one action-taker (Alice).  Both of the assumptions are necessary. We show in Section~\ref{sec:non_committed} that (1) is necessary. We provide an example in Appendix~\ref{app:two_action_taker} that illustrates that, in general, when both players make a choice of action (even in the case where their action only affects their own reward), the two types of protocols are not equivalent. 

\subsection{Optimal protocol by linear programming}
As a result, solving the optimal ex-ante IR Bayesian conversation that maximizes the expectation of a utility function $u(\theta_A, \theta_B,r)$ reduces to finding the optimal mediator protocol, which can be solved by a linear program. 

\begin{theorem}\label{thm:ex-ante-lp}

The optimal ex-ante IR mediator protocol that maximizes the expectation of a utility function $u(\theta_A, \theta_B,r)$ can be solved by a linear program with size $O(|\Theta_A|\cdot|\Theta_B|\cdot|R|)$.\footnote{The optimal ex-post IR mediator protocol can be solved by an exponential-size LP. See details in~\Cref{app:model_LP}.}
\end{theorem}

We defer the details of the linear program to \Cref{app:model_LP}. After determining the optimal mediator protocol, we can transform it into a Bayesian conversation by having Alice disclose her type in the first round and asking Bob to simulate the mediator's role. 

\section{Interim IR and non-committed protocols}\label{sec:interim-and-noncommitted}

We define non-committed Bayesian conversations as follows.
\begin{definition}[Non-committed Bayesian Conversation protocols]
	A non-committed \\ Bayesian conversation protocol is where Alice and Bob agree to start a Bayesian conversation $\pi$ but they can quit in the middle of the protocol.
\end{definition}

For non-committed protocols,  the players can choose to quit at any step. We thus define a non-committed protocol as ex-ante IR if ex-ante players will choose to proceed at every step.
\begin{definition}[Ex-ante IR for non-committed protocols]
A non-committed protocol $\pi$ is ex-ante IR if at any point of the protocol, the ex-ante expected utility of completing the protocol is no lower than quitting for both of the agents. Formally, for any time step $t$ with history $\widetilde h^{(t)}$ and agents' posteriors $\widetilde q_A^{(t)}, \widetilde q_B^{(t)}$, let $\P_\pi(\theta_A, \theta_B, q_A, q_B|  h^{(t)}=\widetilde h^{(t)})$ be the  distribution of $(\theta_A, \theta_B, q_A, q_B)$ after the players complete the protocol conditioning on they reach $\widetilde h^{(t)}$ at time $t$. Then it requires 
\begin{align} \label{eqn:exante_non}
\E_{(\theta_A, \theta_B, q_A, q_B)\sim\P_\pi}[u_B(\theta_B, \theta_A, r^*(\theta_A, q_B)| h^{(t)}=\widetilde h^{(t)}] \ge \E_{\theta_A \sim \widetilde q_A^{(t)}, \theta_B \sim \widetilde q_B^{(t)}}[u_B(\theta_B, \theta_A, r^{(t)})], \forall t,\widetilde h^{(t)}, 	
\end{align}
where random variable $r^{(t)}= \arg\max_{r \in R} \E_{\theta_B\sim \widetilde q_B^{(t)}} [u_A(\theta_A, \theta_B, r)]$ is Alice's best action facing $\widetilde h^{(t)}$. Again, we only need this inequality for Bob because it always holds for Alice.
\end{definition}

Second, we consider players who make decisions after seeing their own types. 
\begin{definition}[Interim players]
	An interim player makes decisions on participating/quitting after seeing their types and before starting the protocol.
\end{definition}
We define a committed protocol as interim IR if interim players will choose to participate. 
\begin{definition}[Interim IR for committed protocols]
\label{def:interim_IR}
A committed protocol $\pi$ is interim individually rational if for any $y\in\Theta_B$, Bob's expected utility of following the entire protocol is no lower than the expected utility of no communication when Bob knows that his type is $y$. Formally, let $\P_\pi(\theta_A, q_A, q_B|\theta_B=y)$ be the conditional distribution of $(\theta_A, q_A, q_B)$ after completing the entire protocol when Bob's type is $y$, then it requires
\begin{align}
\E_{(\theta_A, q_A, q_B)\sim\P_\pi}[u_B(y, \theta_A, r^*(\theta_A, q_B))|\theta_B = y] \ge 
\E_{\theta_A\sim \P(\theta_A)}[u_B(y, \theta_A, r^0)], \ \forall y \in \Theta_B, \label{eqn:interim_IR}
\end{align}
where random variable $r^0 = \arg\max_{r \in R} \E_{\theta_B\sim \P(\theta_B)} [u_A(\theta_A, \theta_B, r)]$ is Alice's best action without any communication. Again, we only need this inequality for Bob because it always holds for Alice.
\end{definition}
A non-committed protocol is interim IR if interim players will always choose to proceed. 
\begin{definition}[Interim IR for non-committed protocols]
A non-committed protocol $\pi$ is interim IR if at any point of the protocol, the expected utility of completing the protocol is no lower than quitting for both of the agents. Formally, for any type $y\in \Theta_B$ of Bob, for any time step $t$ with history $\widetilde h^{(t)}$ and agents' posteriors $\widetilde q_A^{(t)}, \widetilde q_B^{(t)}$, let $\P_\pi(\theta_A, q_A, q_B| \theta_B=y, h^{(t)}=\widetilde h^{(t)})$ be the  distribution of $(\theta_A, q_A, q_B)$ after the players complete the protocol conditioning on they reach $\widetilde h^{(t)}$ at time $t$ and Bob has type $y$. Then it requires 
\begin{align}
\label{eqn:interim_non}
\E_{(\theta_A, q_A, q_B)\sim\P_\pi}[u_B(y, \theta_A, r^*(\theta_A, q_B)|\theta_B = y, h^{(t)}=\widetilde h^{(t)}] \ge \E_{\theta_A \sim \widetilde q_A^{(t)}}[u_B(y, \theta_A, r^{(t)})], \forall t,\widetilde h^{(t)}, y \in \Theta_B, 	
\end{align}
where random variable $r^{(t)}= \arg\max_{r \in R} \E_{\theta_B\sim \widetilde q_B^{(t)}} [u_A(\theta_A, \theta_B, r)]$ is Alice's best action facing $\widetilde h^{(t)}$. Again, we only need this inequality for Bob because it always holds for Alice.
\end{definition}

And we define a non-committed protocol as ex-post IR if ex-post players will not regret proceeding at each step after seeing their types and completing the protocol. 

\begin{definition}[Ex-post IR for non-committed protocols] 
\label{def:expost_IR_non}
A non-committed protocol $\pi$ is ex-post individually rational 
if after completing the protocol (and before Alice taking an action), Bob never regrets proceeding at every step. More specifically, for any $y \in \Theta_B$ and any outcome $\widetilde h^{(T)}= (\widetilde a_1, \widetilde b_1, \dots, \widetilde a_T, \widetilde b_T)$, suppose Bob has type $y$ and Bob's posterior belief 
becomes $\widetilde q_A$ after seeing $\widetilde h^{(T)}$ and let $\P_\pi(\theta_A, q_B|\theta_B = y, h^{(T)} = \widetilde h^{(T)})$ be the conditional distribution of $(\theta_A, q_B)$ when Bob has type $y$ and the protocol ends at $\widetilde h^{(T)}$. Then it requires
\begin{align}
\E_{(\theta_A, q_B)\sim\P_\pi}[u_B(y, \theta_A, r^*(\theta_A, q_B))|\theta_B = y, h^{(T)} = \widetilde h^{(T)}] \ge 
\E_{\theta_A\sim \widetilde q_A}[u_B(y, \theta_A, r^{(t)})], \ \forall \widetilde h^{(T)}, t \le T, y \in \Theta_B, \label{eqn:expost_non}
\end{align}
where random variable $r^{(t)}= \arg\max_{r \in R} \E_{\theta_B\sim \widetilde q_B^{(t)}} [u_A(\theta_A, \theta_B, r)]$ is Alice's best action at time $t$.
\end{definition}

\subsection{Non-committed protocols} \label{sec:non_committed}
We now compare non-committed Bayesian conversation protocols with committed Bayesian conversations and mediator protocols, mainly focusing on the  ex-ante IR and interim IR notions. As proved in~\Cref{sec:equiv_ex_ante}, committed Bayesian conversations are equivalent to mediator protocols under ex-ante IR and interim IR. Therefore, it suffices to compare non-committed Bayesian conversations with mediator protocols. 

\subsubsection{Gap between the  protocols under ex-ante/interim IR}
We first show by the example of employer-candidate game that non-committed  Bayesian conversations are not so powerful as  mediator protocols/committed Bayesian conversations when we consider ex-ante IR or interim IR.
\begin{theorem}
    \label{prop:non-committed-gap}
	There exists a game where the highest social welfare that can be implemented by ex-ante IR non-committed Bayesian conversations is lower than the highest social welfare that can be implemented by ex-ante IR mediator protocols/committed Bayesian conversations, and the highest social welfare that can be implemented by interim IR non-committed Bayesian conversations is lower than the highest social welfare that can be implemented by interim IR mediator protocols/committed Bayesian conversations. 
\end{theorem}
\begin{proof}
We prove by the same employer-candidate game in Section~\ref{sec:alg-example}. We show that no interim IR non-committed Bayesian conversation can achieve the same social welfare as the optimal mediator protocol given in the proof of Theorem~\ref{thm:expost_gap}. 

We first consider interim IR and prove by contradiction, assume that there exists a non-committed Bayesian conversation $\pi$ that achieves the same maximum social welfare as the mediator protocol, and the protocol is interim IR. Let $\P_\pi(\theta_A, \theta_B, q_A, q_B)$ be the distribution of types and beliefs after completing $\pi$. As we show in the proof of Theorem~\ref{thm:expost_gap}, if $\pi$ achieves the maximum social welfare, we must have all possible $(q_B, q_A)$ lying in the region plotted in Figure~\ref{fig:end_points_app}, and the protocol must end at the upper-left point $(q_B(\theta_B=\text{Prog})=0, q_A(\theta_A=\text{Prog})=1)$ with a non-zero probability. Suppose the protocol ends at the upper-left point with probability $\widetilde p$ with $\widetilde h^{(T)} = (\widetilde a_1, \widetilde b_1, \dots, \widetilde a_T, \widetilde b_T)$.  Let $t$ be the last round in $\widetilde h^{(T)}$ where there is still a non-zero probability that $\theta_A =$Comm, 
\begin{align*}
	t = \max \{ i: 1\le i \le T, P(\theta_A=\text{Comm}| h^{(i)} = \widetilde h^{(i)})>0\}.
\end{align*}
We claim that a Comm-type candidate will not want to continue the protocol after the employer sends $\widetilde a_{t+1}$. This means that once the employer reveals that her type is Prog, a Comm-type candidate will not want to continue. 
Denote by $q_B^{(t)}$ and $q_A^{(t)}$ the players' beliefs at the end of round $t$. Then an important observation is that employer's belief about the candidate $q_B^{(t)}$ must have $q_B^{(t)}(\theta_B = \text{Prog})\in [\frac{1}{2},1]$. This is because the probability that $\theta_A =$ Comm is still non-zero $P(\theta_A=\text{Comm}| h^{(t)}=\widetilde h^{(t)})>0$, and if we want to guarantee that the employer's final belief is either $\frac{1}{2}$ or $1$ whenever she has type Comm (or in other words, $(q_B, q_A)$ lying in the region plotted in Figure~\ref{fig:end_points_app}), we must have $q_B^{(t)}(\theta_B = \text{Prog})\in [\frac{1}{2},1]$. Based on this observation, it is not difficult to see that a Comm-type candidate will not want to continue after seeing $\widetilde a_{t+1} $ (and knowing that the employer has type Prog), because based on the current belief $q_B^{(t)}$ with $q_B^{(t)}(\theta_B = \text{Prog})\ge\frac{1}{2}$, the employer (with type Prog) always hires; but if the candidate follows the protocol until the end, there is a non-zero probability $\widetilde p$ that the protocol will end at $\widetilde h^{(T)}$ and the candidate will not be hired, which is strictly worse than quitting the protocol.

\begin{figure}[!h]
\centering
\begin{subfigure}[t]{0.31 \textwidth}
\begin{tikzpicture}
	\begin{axis}[
		xmin=-0.03,
		xmax=1.15,
		ymin=-0.03, 
		ymax=1.15, 
		axis lines = middle, 
		xlabel = {$q_B(\theta_B = \text{Prog})$}, 
		ylabel = {$q_A(\theta_A = \text{Prog})$},  
		width= \textwidth, 
		height = 0.9\textwidth,
		x label style={at={(axis description cs:0.5,-0.15)},anchor=north},
    	y label style={at={(axis description cs:-0.17,.5)},rotate=90,anchor=south}
    	]
	\addplot[color = black, line width = 0.6mm,  domain = 0:1]{0};
	\addplot[color = black, line width = 0.6mm] coordinates {(0,0)(0,1)};
	\addplot[color = black, line width = 0.6mm] coordinates {(1,0)(1,1)};
	\end{axis}
\end{tikzpicture}
\caption{}
\label{fig:end_points_1_app}
\end{subfigure}
~
\begin{subfigure}[t]{0.31\textwidth}
\begin{tikzpicture}
	\begin{axis}[
		xmin=-0.03,
		xmax=1.15,
		ymin=-0.03, 
		ymax=1.15, 
		axis lines = middle, 
		xlabel = {$q_B(\theta_B = \text{Prog})$}, 
		ylabel = {$q_A(\theta_A = \text{Prog})$},  
		width=\textwidth, 
		height = 0.9\textwidth,
		x label style={at={(axis description cs:0.5,-0.15)},anchor=north},
    	y label style={at={(axis description cs:-0.17,.5)},rotate=90,anchor=south}
    	]

	\addplot[color = black, line width = 0.6mm,  domain = 0:1]{1};
	\addplot[color = black, line width = 0.6mm] coordinates {(0.5,0)(0.5,1)};
	\addplot[color = black, line width = 0.6mm] coordinates {(1,0)(1,1)};
	\end{axis}
\end{tikzpicture}
\caption{}
\label{fig:end_points_2_app}
\end{subfigure}
~
\begin{subfigure}[t]{0.31\textwidth}
\begin{tikzpicture}
	\begin{axis}[
		xmin=-0.03,
		xmax=1.15,
		ymin=-0.03, 
		ymax=1.15, 
		axis lines = middle, 
		xlabel = {$q_B(\theta_B = \text{Prog})$}, 
		ylabel = {$q_A(\theta_A = \text{Prog})$},  
		width= \textwidth, 
		height = 0.9\textwidth,
		x label style={at={(axis description cs:0.5,-0.15)},anchor=north},
    	y label style={at={(axis description cs:-0.17,.5)},rotate=90,anchor=south}
    	]

	\filldraw[blue] (0,1) circle (2pt) node[anchor=west]{$\,$ probability$>0$};
	\filldraw[black] (0.5,0) circle (2pt);
	\addplot[color = black, line width = 0.6mm] coordinates {(1,0)(1,1)};
	\end{axis}
\end{tikzpicture}
\caption{}
\label{fig:end_points_app}
\end{subfigure}

\caption{Regions that $(q_B, q_A)$ must fall in after the communication in order to maximize the social welfare in the employer-candidate game. To make sure that the conditional expectation $\E[u_A + u_B|\theta_A=$ Prog$]$ reaches the desired highest value, we must have $(q_B, q_A)$ fall in the segments plotted in Figure~\ref{fig:end_points_1_app}. To make sure that the conditional expectation $\E[u_A + u_B|\theta_A=$ Comm$]$ reaches the desired highest value, we must have $(q_B, q_A)$ fall in the segments plotted in Figure~\ref{fig:end_points_2_app}. Taking the intersection of the two pictures, we must have $(q_B, q_A)$ fall in the region plotted in Figure~\ref{fig:end_points_app} in order to maximize the expected social welfare. }
\end{figure}

The proof is similar when we consider ex-ante IR. When the candidate (Bob) does not know his type, he will still want to quit after the employer sends $\widetilde a_{t+1}$. Because as we proved, the employer (with type Prog) always hires if the candidate chooses to quit, but if the candidate follows the protocol until the end, there is a non-zero probability that the candidate will not be hired.
\end{proof}

\section{Optimal mediator protocol by linear programming} \label{app:model_LP}

We firstly show that the optimal ex-post IR mediator protocol can be solved by an exponential-size LP.

\subsection{Optimal ex-post IR Mediator Protocol}

Consider a game in which Alice has $m$ possible types: $\Theta_A = \{\theta_{A1}, \dots, \theta_{Am}\}$, and Bob has $n$ possible types: $\Theta_B = \{\theta_{B1}, \dots, \theta_{Bn}\}$. The set of actions that Alice can take is denoted by $R = \{r_1, r_2, \dots, r_k\}$.
We specify a signal using the joint probability distribution of player types and signals, as illustrated in Table \ref{fig:ex-post-lp}.
\begin{table}[h!]
\centering
\begin{tabular}{|c|c|c|c|c|}
    \hline
    $\Pr(\theta_A, \theta_B, s)$ & $\theta_{B1}$ & $\theta_{B2}$ & $\dots$ & $\theta_{Bn}$ \\
    \hline
    $\theta_{A1}$ & $p_{11}(s)$ & $p_{12}(s)$ & $\dots$ & $p_{1n}(s)$ \\
    \hline
    $\theta_{A2}$ & $p_{21}(s)$ & $p_{22}(s)$ & $\dots$ & $p_{2n}(s)$ \\
    \hline
    $\vdots$ & $\vdots$ & $\vdots$ & $\ddots$ & $\vdots$ \\
    \hline
    $\theta_{Am}$ & $p_{m1}(s)$ & $p_{m2}(s)$ & $\dots$ & $p_{mn}(s)$ \\
    \hline
\end{tabular}
\caption{Joint probability distribution for players' types $\theta_A, \theta_B$ and signal $s$}
\label{fig:ex-post-lp}
\end{table}

\begin{theorem}\label{thm:ex-post-lp}
The optimal ex-post IR mediator protocol that maximizes the expectation of a utility function $u(\theta_A, \theta_B,r)$ can be solved by a linear program with size $O(|\Theta_A|\cdot|\Theta_B|\cdot|R|^{|\Theta_A|})$.
\end{theorem}

To find the optimal ex post IR mediator protocol, we describe a signal by the actions taken by different types of Alice under that signal. For instance, in a game where $|\Theta_A| = 3$, a signal $s(r_3, r_1, r_2)$ represents that an Alice of type $\theta_{A1}$ will take action $r_3$ under this signal, an Alice of type $\theta_{A2}$ will take action $r_1$ under this signal, and so on.

Thus, we have a linear programming algorithm to find the optimal ex post IR mediator protocol. In this algorithm, we only need to enumerate all possible kinds of signals that exhaustively represent all combinations of actions that different types of Alice might take:

\begin{align*}
    \mathcal{S} = \{s(\mathbf{r}): \mathbf{r}\in R^m\}
\end{align*}

We define the objective such that each kind of signal in the set $\mathcal{S}$ appears at most once in the optimal solution (it may also not appear, when all the corresponding probabilities $p_{ij}(s)$ are 0). The objective function maximizes social welfare, while constraints are imposed to ensure the existence of a mediator that can induce the target solution, guide different types of Alice to select the expected actions, and satisfy Bob's ex-post IR condition. The resulting linear programming formulation is as follows:

\begin{align*}
    \max \quad&\sum_{s = s(r_1^*, r_2^*, \dots, r_m^*)\in \mathcal{S}}\sum_{i = 1}^m\sum_{j = 1}^n p_{ij}(s)u(\theta_{Ai},\theta_{Bj},r_i^*)\\
    \text{subject to} \quad& \sum_{s\in\mathcal{S}}p_{ij}(s) = \P(\theta_{Ai})\cdot \P(\theta_{Bj}),\quad \forall i = 1, 2, \dots, m, \forall j = 1, 2, \dots, n\\
    & \sum_{j = 1}^n p_{ij}(s)u_A(\theta_{Ai},\theta_{Bj},r_i^*) \geq \sum_{j = 1}^n p_{ij}(s)u_A(\theta_{Ai},\theta_{Bj},r'),\\&\qquad\qquad \forall s = s(r_1^*, r_2^*, \dots, r_m^*)\in \mathcal{S}, \forall i = 1, 2, \dots, m, \forall r' \neq r_i^*\\
    & \sum_{i = 1}^m p_{ij}(s)u_B(\theta_{Ai},\theta_{Bj},r_i^*) \geq \sum_{i = 1}^m p_{ij}(s)u_B(\theta_{Ai},\theta_{Bj},r_i^0),\\&\qquad\qquad \forall s = s(r_1^*, r_2^*, \dots, r_m^*)\in \mathcal{S}, \forall j = 1, 2, \dots, n\\
\end{align*}
where $r^0_i = \arg\max_{r \in R} \E_{\theta_B\sim \P(\theta_B)} [u_A(\theta_{Ai}, \theta_B, r)]$ is the best action for Alice of type $\theta_{Ai}$ without any communication.

\begin{theorem}
The above linear programming method with size $O(|\Theta_A|\cdot|\Theta_B|\cdot|R|^{|\Theta_A|})$ can find a ex post mediator protocol that maximizes the expectation of a utility function $u(\theta_A, \theta_B,r)$.
\end{theorem}
\begin{proof}
The first constraint in the linear programming formulation satisfies $\E[q_{AB}] = \P(\theta_A) \times \P(\theta_B)$, which guarantees the existence of a mediator protocol capable of implementing the identified probability distribution.

The second constraint ensures that the best response for each type of Alice corresponds to the action induced by the signal. The third constraint guarantees that each type of Bob satisfies the ex-post IR condition.

Therefore, what we need to prove is that the previously assumed set of signals $\mathcal{S}$ can be used to find the optimal mediator protocol.

To prove this, we only need to show that: given two signals $s$ and $s'$, under which the same type of Alice chooses the same action, and both satisfy Bob's ex-post IR condition, then the summation signal $s_{\text{sum}}$ (defined such that $\Pr(\theta_{Ai}, \theta_{Bj}, s_{\text{sum}}) = p_{ij}(s_{\text{sum}})= p_{ij}(s) + p_{ij}(s')$) can also induce the corresponding type of Alice to take the same action while maintaining Bob's ex-post IR condition.

This is easy to prove because the signal summation defined above does not change the expectation of $q_{AB}$ induced by the protocol, thereby satisfying the first constraint. 

The second and third constraints are equivalent to the following: 

\begin{align*}
    &\sum_{j = 1}^n p_{ij}(s)(u_A(\theta_{Ai},\theta_{Bj},r_i^*)-u_A(\theta_{Ai},\theta_{Bj},r')) \geq 0 ,\\
    & \sum_{i = 1}^m p_{ij}(s)(u_B(\theta_{Ai},\theta_{Bj},r_i^*)-u_B(\theta_{Ai},\theta_{Bj},r_i^0)) \geq 0,\\
\end{align*}

If $p_{ij}(s)$ and $p_{ij}(s')$ both satisfy these constraints, then summing the corresponding inequalities gives $p_{ij}(s_{\text{sum}})= p_{ij}(s) + p_{ij}(s')$ also satisfies these constraints, which completes the proof.

There are $|R|^{|\Theta_A|}$ signals in total, and each of them has $|\Theta_A|\cdot |\Theta_B|$ variables, so the size of this linear program is $O(|\Theta_A|\cdot|\Theta_B|\cdot|R|^{|\Theta_A|})$.

\end{proof} 

For ex-ante case, the optimal mediator protocol can be foung using a polynomial-size LP algorithm:

\subsection{Optimal ex-ante IR mediator protocol}
To find the optimal ex ante IR mediator protocol, we focus on signals that may occur for only one type of Alice and classify them based on the action Alice would take in response. For example, the signal $s(i, l)$ indicates that for all $i' \neq i$ and $r = 1, 2, ..., n$, $p_{i'n}(s) = 0$, and this signal induces Alice of type $\theta_{Ai}$ to take action $r_l$.

\begin{table}[h!]
\centering
\begin{tabular}{|c|c|c|c|c|}
    \hline
    $\Pr(\theta_A, \theta_B, s)$ & $\theta_{B1}$ & $\theta_{B2}$ & $\dots$ & $\theta_{Bn}$ \\
    \hline
    $\theta_{A1}$ & $0$ & $0$ & $\dots$ & $0$ \\
    \hline
    $\theta_{A2}$ & $0$ & $0$ & $\dots$ & $0$ \\
    \hline
    $\vdots$ & $\vdots$ & $\vdots$ &  & $\vdots$ \\
    \hline
    $\theta_{Ai}(\text{will do $r_l$})$ & $p_{i1}(s)$ & $p_{i2}(s)$ & $\dots$ & $p_{in}(s)$ \\
    \hline
    $\vdots$ & $\vdots$ & $\vdots$ &  & $\vdots$ \\
    \hline
    $\theta_{Am}$ & $0$ & $0$ & $\dots$ & $0$ \\
    \hline
\end{tabular}
\caption{Joint probability distribution for players' types $\theta_A, \theta_B$ and signal $s(i,l)$}
\label{table:mediator_ante}
\end{table}
We denote the set of this new classification as $\mathcal{S}'$.
\begin{align*}
    \mathcal{S}' = \{s(i,l):i = 1, 2, \dots, m, l = 1,2,\dots, k\}
\end{align*}
The new linear program is similar, with only the last constraint having changed:

\begin{align*}
    \max \quad&\sum_{s = s(i,l)\in \mathcal{S}'}\sum_{j = 1}^n p_{ij}(s)u(\theta_{Ai},\theta_{Bj},r_l)\\
    \text{subject to} \quad& \sum_{s\in\mathcal{S}'}p_{ij}(s) = \P(\theta_{Ai})\cdot \P(\theta_{Bj}),\quad \forall i = 1, 2, \dots, m, \forall j = 1, 2, \dots, n\\
    & \sum_{j = 1}^n p_{ij}(s)u_A(\theta_{Ai},\theta_{Bj},r_i^*) \geq \sum_{j = 1}^n p_{ij}(s)u_A(\theta_{Ai},\theta_{Bj},r'),\quad \forall s = s(i,l)\in \mathcal{S}', \forall r' \neq r_l\\
    & \sum_{s = s(i, l)\in \mathcal{S}'}\sum_{j=1}^np_{ij}(s)u_B(\theta_{Ai}, \theta_{Bi}, r_l)\geq \E_{\theta_B\sim\P(\theta_B),\theta_A\sim \P(\theta_A)}[u_B(\theta_A, \theta_B, r^0)]
\end{align*}
\begin{theorem}
    The above linear programming method with size $O(|\Theta_A|\cdot|\Theta_B|\cdot|R|)$ can find a ex ante mediator protocol that maximizes the expectation of a utility function $u(\theta_A, \theta_B,r)$.
\end{theorem}

\begin{proof}
    First, the sum of two signals of the same kind still satisfies the constraints above. This is straightforward, similar to the proof provided earlier, and it also utilizes the linearity of the constraints. We can simply add the probabilities at corresponding positions to obtain a new solution that meets the conditions.

    Next, we prove that decomposing any signal $s(r_1^*, r_2^*, \dots, r_m^*)\in \mathcal{S}$ into signals $$\{s(1,r_1^*),s(2,r_2^*)\dots,s(m,r_m^*)\}\in\mathcal{S}'$$ (that is, taking each row as a separate signal and setting other positions to 0) can also satisfy constraints:
    \begin{align*}
        & \sum_{s}p_{ij}(s) = \P(\theta_{Ai})\cdot \P(\theta_{Bj}),\quad \forall i = 1, 2, \dots, m, \forall j = 1, 2, \dots, n\\
    & \sum_{j = 1}^n p_{ij}(s)u_A(\theta_{Ai},\theta_{Bj},r_i^*) \geq \sum_{j = 1}^n p_{ij}(s)u_A(\theta_{Ai},\theta_{Bj},r'),\quad \forall s, \forall i = 1, 2, \dots, m, \forall r' \neq r_i^*\\
    & \sum_{s=s(r_1^*, r_2^*, \dots, r_m^*)} \sum_{i=1}^m\sum_{j=1}^n p_{ij}(s)u_B(\theta_{Ai},\theta_{Bj},r_{i}^*)\geq \E_{\theta_B\sim\P(\theta_B),\theta_A\sim \P(\theta_A)}[u_B(\theta_A, \theta_B, r^0)]
    \end{align*}
    Proving that the first and second constraints are satisfied is easy. The first constraint requires that the relationships between corresponding positions remain unchanged, and the second constraint requires that the relationships within corresponding rows remain unchanged. Such decomposition does not break these relative relationships.

    Since the actions taken by Alice at corresponding positions do not change before and after the decomposition, the left side of the third constraint remains unchanged. The right side is a constant. Therefore, the third constraint is also satisfied.
    
    Notice that the signals resulting from the decomposition can be added to signals of the same kind that have been decomposed from other original signals. Therefore, each kind of signal only needs to appear at most once in the linear program.
    
    There are $|R|\cdot{|\Theta_A|}$ signals in total, and each of them has $ |\Theta_B|$ variables, so the size of this linear program is $O(|\Theta_A|\cdot|\Theta_B|\cdot|R|)$.
\end{proof}

\section{Equivalence for interim IR}
The result for interim IR is similar.
   \begin{theorem}\label{thm:equiv_interim}
 	For any utility function $u(\theta_A, \theta_B, r)$,
 	  the range of expected utilities that can be implemented by interim IR committed Bayesian conversations is equivalent to the range of expected utilities that can be implemented by interim IR mediator protocols,
 	\begin{align*}
 		\text{Range}_{\text{interim}}(u, \Pi_{\text{BC}} ) = \text{Range}_{\text{interim}}(u, \Pi_{\text{M}} ),
 	\end{align*}
 	where $\text{Range}_{\text{interim}}(u, \Pi) = \text{Range}(u, \Pi,\mathcal{P})$ with $\mathcal{P}$ being the subset of distributions that guarantee interim IR for committed protocols (satisfy equation~\eqref{eqn:interim_IR}).
 \end{theorem}
 \begin{proof}
 The proof for interim IR is basically the same as the proof for ex-ante IR. The only difference is that we need to prove the constraint of interim IR~\eqref{eqn:interim_IR} can be reduced to a constraint on $P(\theta_A, q_B)$ as well, which is true because interim IR can be written as 
 \begin{align*}
 \sum_{\theta_A, q_B} P(\theta_A,q_B)q_B(\theta) u_B(\theta_B, \theta_A, r^*(\theta_A, q_B))\ge \P(\theta_B)\sum_{\theta_A} \P(\theta_A) u_B(\theta_B, \theta_A, r^*(\theta_A, \P(\theta_B))), \forall \theta_B \in \Theta_B.	
 \end{align*}

 \end{proof}

\section{Feasible posterior distributions for Bayesian Conversations}
In this section, we investigate the following problem: Given a belief distribution $P(\theta_A, \theta_B, q_B, q_A)$, or equivalently $P( q_B, q_A)$, we want to decide whether it can be generated by a Bayesian conversation $\pi$ in $T$ rounds.
 
According to \cite{aumann1986bi},  the belief-splitting process of a Bayesian conversation $\{(q_A^{(t)}, q_B^{(t)})\}_{t=1}^{\infty}$ can be viewed as a \emph{bimartingale}. Furthermore, \cite{aumann1986bi} showed that given a set of final beliefs $A=\{(q_A^{(1)}, q_B^{(1)}),\dots, (q_A^{(K)}, q_B^{(K)})\}$, the set of feasible priors $(\P(\theta_a), \P(\theta_B))$ is the functional bi-convex hull (\cite{matouvsek1998functional}) of $A$, where we say that a prior $(\P(\theta_a), \P(\theta_B))$ is feasible if there exists a Bayesian conversation, possibly infinite-round, that starts with the prior $(\P(\theta_a), \P(\theta_B))$ and generates a final belief distribution supported on $A$.  However, their method does not decide whether a distribution $P( q_B, q_A)$ over final beliefs can be generated by a Bayesian conversation.

To decide the feasibility of a distribution over beliefs, we augment the bimartingale $\{(q_A^{(t)}, q_B^{(t)})\}_{t=1}^{\infty}$ as a
 \emph{dimartingale} (defined in~\cite{hart1985nonzero}) which also includes \emph{a bystander's belief about Alice and Bob's final beliefs}. The dimartingale is defined as follows.
\begin{definition}[dimartingale~\cite{hart1985nonzero}]
A dimartingale $\{(\alpha_t, \beta_t, p_t)\}_{t=1}^{T}$ is a martingale  that has either $\alpha_t = \alpha_{t+1}$ or $\beta_t = \beta_{t+1}$ at each time step $t$. 
\end{definition}

\begin{lemma} \label{lem:dimartingale}
	Consider any Bayesian conversation $\pi$. Let $q_A^{(t)}$ be Bob's (and a by-stander's belief about $\theta_A$ at the end of round $t$. Similarly, let $q_B^{(t)}$ be Alice's belief about $\theta_B$ at the end of round $t$. Let $\gamma^{(t)} = \P_\pi(q_A^{(T)}, q_B^{(T)}|h^{(t)})$ be a by-stander's belief about $q_A^{(T)}, q_B^{(T)}$ (the players' \textbf{final} beliefs) at the end of round $t$. Then for any Bayesian conversation $\pi$, $\{(q_A^{(t)}, q_B^{(t)}, \gamma^{(t)})\}_{t=1}^T$ form a dimartingale.
\end{lemma}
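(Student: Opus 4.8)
The plan is to verify the two defining properties of a dimartingale for the process $\{(q_A^{(t)}, q_B^{(t)}, \gamma^{(t)})\}_{t=1}^T$: (i) it is a martingale with respect to the natural filtration generated by the transcript $h^{(t)}$, and (ii) at each step $t$, either the first component $q_A^{(t)}$ is unchanged or the second component $q_B^{(t)}$ is unchanged. I would begin by fixing the filtration $\mathcal{F}_t = \sigma(h^{(t)})$ generated by the partial transcript $h^{(t)} = (a_1,b_1,\dots,a_t,b_t)$, and noting that $q_A^{(t)}$, $q_B^{(t)}$, and $\gamma^{(t)}$ are all $\mathcal{F}_t$-measurable (each is defined as a conditional distribution given $h^{(t)}$, using only the commonly-known prior and protocol).

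The martingale property for $q_A^{(t)}$ and $q_B^{(t)}$ is essentially the tower property of conditional expectations — each is a Doob martingale. Concretely, $q_A^{(t)}(\theta_A) = \P_\pi(\theta_A \mid h^{(t)})$ (this equals Bob's belief $\P_\pi(\theta_A \mid \theta_B, h^{(t)})$ because, as noted in the excerpt, the observer's posterior stays a product distribution, so $q_A^{(t)}$ does not depend on $\theta_B$); hence $\E[q_A^{(t+1)} \mid \mathcal{F}_t] = \E[\P_\pi(\theta_A \mid h^{(t+1)}) \mid h^{(t)}] = \P_\pi(\theta_A \mid h^{(t)}) = q_A^{(t)}$, and similarly for $q_B^{(t)}$. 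For $\gamma^{(t)} = \P_\pi(q_A^{(T)}, q_B^{(T)} \mid h^{(t)})$ the same argument applies: it is the Doob martingale of the (fixed, transcript-measurable) random variable $(q_A^{(T)}, q_B^{(T)})$, so $\E[\gamma^{(t+1)} \mid \mathcal{F}_t] = \gamma^{(t)}$ by the tower property again. Thus $(q_A^{(t)}, q_B^{(t)}, \gamma^{(t)})$ is a martingale.

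For the dimartingale structure, I would split a ``round'' into its two sub-steps — Alice's message $a_{t+1}$ and then Bob's message $b_{t+1}$ — and argue the property holds on each half-step; concatenating, one can take the dimartingale to be indexed by half-steps (or observe that within a full round one can always re-index so the condition still holds). When Alice sends $a_{t+1}$ based only on $\theta_A$ and the history, the signal carries information only about $\theta_A$: formally $\Pr[a_{t+1} \mid \theta_A, \theta_B, h^{(t)}] = \Pr[a_{t+1} \mid \theta_A, h^{(t)}]$, so by the splitting formula \eqref{eq:posterior_split} the $\theta_B$-marginal of the observer's posterior is unchanged, i.e. $q_B$ does not move on this half-step; symmetrically, when Bob sends $b_{t+1}$, $q_A$ does not move. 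Since $\gamma$ is allowed to change at every step, exactly one of $q_A$, $q_B$ is frozen at each half-step, which is the dimartingale condition.

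The main obstacle — really the only subtle point — is making precise the claim that $\gamma$ is a well-defined martingale and that the two players' beliefs coincide with the bystander's marginal beliefs. This rests on the observation (stated in the excerpt, to be cited or proved in the appendix) that in a Bayesian conversation the observer's joint posterior $q_{AB}^{(t)}$ remains a product $q_A^{(t)} \otimes q_B^{(t)}$, and hence conditioning on $\theta_B$ does not change the $\theta_A$-marginal; without this, Bob's belief $\P_\pi(\theta_A \mid \theta_B, h^{(t)})$ would not be $\mathcal{F}_t$-measurable and the martingale statement would be malformed. I would also be careful that $(q_A^{(T)}, q_B^{(T)})$ is a genuine random variable measurable with respect to $h^{(T)}$ (true, since final beliefs are deterministic functions of the full transcript given the protocol), so that its Doob martingale $\gamma^{(t)}$ is well-defined. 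Everything else is bookkeeping with Bayes' rule and the tower property.
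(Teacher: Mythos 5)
Your proof is correct. The paper does not actually give a standalone proof of Lemma~\ref{lem:dimartingale}: it treats the lemma as a consequence of Observations~\ref{obs:two_way} and~\ref{obs:cond} in Appendix~\ref{app:obs}, which are in turn derived from the matrix representation of transcript probabilities in Observation~\ref{obs:matrix_rep} (each message multiplies the matrix $P(\theta_A,\theta_B,h^{(t)})$ by a diagonal matrix acting on one coordinate only). Your route establishes the same three facts --- beliefs are determined by the transcript alone, the sender's own belief is frozen at each half-step, and conditional expectations are preserved --- but via Doob martingales and the tower property rather than the matrix bookkeeping; you also make explicit that $\gamma^{(t)}$ is the Doob martingale of the transcript-measurable random variable $(q_A^{(T)},q_B^{(T)})$, a point the paper only uses implicitly inside the proof of Theorem~\ref{thm:two_way_charac}. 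One detail you handle more carefully than the lemma statement itself: as written, the process is indexed by full rounds, and within a single round both $q_A$ (after Alice's message) and $q_B$ (after Bob's message) can move, so the dimartingale condition literally holds only under your half-step re-indexing; this is exactly how the paper's $\mathcal{S}_i/\mathcal{T}_i$ construction and the proof of Theorem~\ref{thm:two_way_charac} implicitly treat the process, so that reading is the intended one and your argument fills in the justification the paper leaves to the cited observations.
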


Then we can decide whether a distribution $P(\theta_A, \theta_B, q_B, q_A)$ can be generated by a finite-time Bayesian conversation by reversing this dimartingale.
The reversing process can be formalized as follows.

Given a final belief distribution $P(\theta_A, \theta_B, q_B, q_A)$, we want to decide whether it can be generated by a Bayesian conversation $\pi$ in $T$ rounds.
Consider the marginal distribution of the beliefs $P(q_B, q_A)$. Let $\mathcal{Q}$ be the support of $P(q_B, q_A)$, $\mathcal{Q} = \{(q_B, q_A): P(q_B, q_A)>0 \}$. Let $e_{q_B, q_A}\in \Delta(\mathcal{Q})$ be the deterministic distribution on $\mathcal{Q}$ that takes value $(q_B, q_A)$ with probability $1$. Then for every $(q_B, q_A)\in \mathcal{Q}$, we append $e_{q_B, q_A}$ and add the tuple into a set 
\begin{align*}
\mathcal{S}_0 = \{(q_B, q_A, e_{q_B,q_A}): (q_B, q_A) \in \mathcal{Q}\}.
\end{align*}
Define $\phi_1(q_B, q_A, z) = q_B$ and $\phi_2(q_B, q_A, z) = q_A$ for any $z\in \Delta(\mathcal{Q})$. Then for any points $\mathbf{x}_1, \dots, \mathbf{x}_k \in \mathcal{S}_i$ with the same $q_A$, we add all their convex combinations to set $\mathcal{T}_{i}$; and then for any points $\mathbf{x}_1, \dots, \mathbf{x}_k \in \mathcal{T}_i$ with the same $q_B$, we add all their convex combinations to set $\mathcal{S}_{i+1}$.
\begin{align*}
	\mathcal{T}_{i} = \left\{
	\begin{array}{ll}
 	\lambda_1 \mathbf{x}_1 + \cdots + \lambda_k \mathbf{x}_k: &  \mathbf{x}_1, \dots, \mathbf{x}_k \in \mathcal{S}_i,
	\ \phi_2(\mathbf{x}_1) = \cdots = \phi_2(\mathbf{x}_k)\\
	& \lambda_1 + \cdots + \lambda_k =1, \ 0 \le \lambda_i \le 1
 	\end{array}
 \right\} 
\end{align*}
\begin{align*}
	\mathcal{S}_{i+1} = \left\{
	\begin{array}{ll}
 	\lambda_1 \mathbf{x}_1 + \cdots + \lambda_k \mathbf{x}_k: &  \mathbf{x}_1, \dots, \mathbf{x}_k \in \mathcal{T}_i,
	\ \phi_1(\mathbf{x}_1) = \cdots = \phi_1(\mathbf{x}_k) \\
	& \lambda_1 + \cdots + \lambda_k =1, \ 0 \le \lambda_i \le 1
 	\end{array}
 \right\} 
\end{align*}
See Figure~\ref{fig:Si} for an example.

\begin{figure}[!h]
\centering
\begin{subfigure}[t]{0.46\textwidth}
\begin{tikzpicture}
	\begin{axis}[
		xmin = -0.1,
		xmax = 1.2,
		ymin = -0.1, 
		ymax = 1.2, 
		axis lines = middle, 
		xlabel = {$q_B(\theta_B = H)$}, 
		ylabel = {$q_A(\theta_A = H)$},  
		xtick = {0, 0.25, ..., 1},
		ytick = {0, 0.25, ..., 1},
		width = \textwidth, 
		height = 0.85\textwidth,
		x label style={at={(axis description cs:0.96, 0.21)}, anchor=north},
    	y label style={at={(axis description cs:-0.08, .5)}, rotate=90, anchor=south}
    	]
    \draw[black] (0.25, 0) node[shape=circle,draw,inner sep=0.5pt] {1};
    \draw[black] (0.25, 0) node at (0.57, 0.1) {$(\frac{1}{4}, 0, (1,0,0,0))$};
    \draw[black] (0.25, 1) node[shape=circle,draw,inner sep=0.5pt] {2};
    \draw[black] (0.25, 0) node at (0.57, 1.1) {$(\frac{1}{4}, 1, (0,1,0,0))$};
    \draw[black] (1, 0.75) node[shape=circle,draw,inner sep=0.5pt] {3};
    \draw[black] (0.25, 0) node at (0.9, 0.88) {$(1, \frac{3}{4}, (0,0,1,0))$};
    \draw[black] (0.75, 0.25) node[shape=circle,draw,inner sep=0.5pt] {4};
    \draw[black] (0.25, 0) node at (0.9, 0.38) {$(\frac{3}{4}, \frac{1}{4}, (0,0,0, 1))$};
	\end{axis}
\end{tikzpicture}
\caption{Four points in $\mathcal{S}_0, \mathcal{T}_0$}
\label{fig:st0}
\end{subfigure}
\quad \ 
\begin{subfigure}[t]{0.46\textwidth}
\begin{tikzpicture}
	\begin{axis}[
		xmin = -0.1,
		xmax = 1.2,
		ymin = -0.1, 
		ymax = 1.2, 
		axis lines = middle, 
		xlabel = {$q_B(\theta_B = H)$}, 
		ylabel = {$q_A(\theta_A = H)$},  
		xtick = {0, 0.25, ..., 1},
		ytick = {0, 0.25, ..., 1},
		width = \textwidth, 
		height = 0.85\textwidth,
		x label style={at={(axis description cs:0.96, 0.21)}, anchor=north},
    	y label style={at={(axis description cs:-0.08, .5)}, rotate=90, anchor=south}
    	]
    \draw[black] (0.25, 0) node[shape=circle,draw,inner sep=0.5pt] {1};
    \draw[black] (0.25, 1) node[shape=circle,draw,inner sep=0.5pt] {2};
    \draw[black] (1, 0.75) node[shape=circle,draw,inner sep=0.5pt] {3};
    \draw[black] (0.75, 0.25) node[shape=circle,draw,inner sep=0.5pt] {4};
    \draw [thick](0.25,0) -- (0.25,1);
    \filldraw[black] (0.25,0.6) circle (2pt);
    \draw[black] (0.25, 0) node at (0.68, 0.55) {$(\frac{1}{4}, 0.6, (0.4,0.6,0,0))$};
	\end{axis}
\end{tikzpicture}
\caption{Region in $\mathcal{S}_1$}
\label{fig:s1}
\end{subfigure}

\begin{subfigure}[t]{0.46\textwidth}
\begin{tikzpicture}
	\begin{axis}[
		xmin = -0.1,
		xmax = 1.2,
		ymin = -0.1, 
		ymax = 1.2, 
		axis lines = middle, 
		xlabel = {$q_B(\theta_B = H)$}, 
		ylabel = {$q_A(\theta_A = H)$},  
		xtick = {0, 0.25, ..., 1},
		ytick = {0, 0.25, ..., 1},
		width = \textwidth, 
		height = 0.85\textwidth,
		x label style={at={(axis description cs:0.96, 0.21)}, anchor=north},
    	y label style={at={(axis description cs:-0.08, .5)}, rotate=90, anchor=south}
    	]
    \draw[black] (0.25, 0) node[shape=circle,draw,inner sep=0.5pt] {1};
    \draw[black] (0.25, 1) node[shape=circle,draw,inner sep=0.5pt] {2};
    \draw[black] (1, 0.75) node[shape=circle,draw,inner sep=0.5pt] {3};
    \draw[black] (0.75, 0.25) node[shape=circle,draw,inner sep=0.5pt] {4};
    \filldraw[black] (0.5,0.25) circle (2pt);
    \draw [thick](0.25,0.25) -- (0.75, 0.25);
    \draw [thick](0.25,0) -- (0.25,1);
    \draw [thick](1,0.75) -- (0.25,0.75);
    \draw[black] (0.25, 0) node at (0.6, 0.38) {$(\frac{1}{2}, \frac{1}{4}, (\frac{3}{8},\frac{1}{8},0, \frac{1}{2}))$};
	\end{axis}
\end{tikzpicture}
\caption{Region in $\mathcal{T}_1$}
\label{fig:t1}
\end{subfigure}
\quad \
\begin{subfigure}[t]{0.46\textwidth}
\begin{tikzpicture}
	\begin{axis}[
		xmin = -0.1,
		xmax = 1.2,
		ymin = -0.1, 
		ymax = 1.2, 
		axis lines = middle, 
		xlabel = {$q_B(\theta_B = H)$}, 
		ylabel = {$q_A(\theta_A = H)$},  
		xtick = {0, 0.25, ..., 1},
		ytick = {0, 0.25, ..., 1},
		width = \textwidth, 
		height = 0.85\textwidth,
		x label style={at={(axis description cs:0.96, 0.21)}, anchor=north},
    	y label style={at={(axis description cs:-0.08, .5)}, rotate=90, anchor=south}
    	]
    \filldraw[gray!50] (0.25,0.25) -| (0.25,0.75) |- (0.75,0.75)
                       -- (0.75,0.25);
    \draw[black] (0.25, 0) node[shape=circle,draw,inner sep=0.5pt] {1};
    \draw[black] (0.25, 1) node[shape=circle,draw,inner sep=0.5pt] {2};
    \draw[black] (1, 0.75) node[shape=circle,draw,inner sep=0.5pt] {3};
    \draw[black] (0.75, 0.25) node[shape=circle,draw,inner sep=0.5pt] {4};
    \draw [thick](0.25,0.25) -- (0.75, 0.25);
    \draw [thick](0.25,0) -- (0.25,1);
    \draw [thick](1,0.75) -- (0.25,0.75);
    \draw [thick](0.75, 0.25) -- (0.75,0.75);

	\filldraw[black] (0.5,0.5) circle (2pt);
    \draw[black] (0.25, 0) node at (0.6, 0.38) {$(\frac{1}{2}, \frac{1}{2}, (\frac{13}{48},\frac{5}{16},\frac{1}{6}, \frac{1}{4}))$};
	\end{axis}
\end{tikzpicture}
\caption{Region in $\mathcal{S}_2$.}
\label{fig:s2}
\end{subfigure}

	\caption{An example of set $\mathcal{S}_i$. Suppose $\theta_A, \theta_B \in \{L,H\}$ and $\P(\theta_A) = \P(\theta_B) = 0.5$. Consider a posterior distribution that satisfies Condition (1) with marginal distribution $P(q_B, q_A)$ being the four points in Figure~\ref{fig:st0} with probabilities $(\frac{13}{48}, \frac{5}{16}, \frac{1}{6}, \frac{1}{4})$  respectively. Then we have $S_0$ plotted in the first picture with each point labeled as $(q_B(\theta_B = H), q_A(\theta_A = H), e_{q_B,q_A})$. Notice that no two points in $S_0$ share the same $q_A$, so we have $\mathcal{T}_0 = \mathcal{S}_0$. To get $\mathcal{S}_1$, we add all convex combinations of the points with the same $q_B$, which is just the segment plotted in Figure~\ref{fig:s1}. To get $\mathcal{T}_1$  we add all convex combinations of the points with the same $q_A$ as in Figure~\ref{fig:t1}. Finally, we get the region of $\mathcal{S}_2$, and we have $(\P(\theta_B), \P(\theta_A), P(q_B, q_A)) = (0.5, 0.5, (\frac{13}{48}, \frac{5}{16}, \frac{1}{6}, \frac{1}{4})) \in \mathcal{S}_2$, so the posterior distribution is feasible.}
	\label{fig:Si}
\end{figure}

\begin{theorem}
\label{thm:two_way_charac}
A distribution $P(\theta_A, \theta_B, q_B, q_A)$ can be implemented by a $T$-round Bayesian conversation if and only if 
\begin{enumerate}
	\item $P(\theta_A, \theta_B|q_B, q_A) = q_A(\theta_A)\cdot q_B(\theta_B)$, for all $q_B, q_A$;
	\item $(\P(\theta_B), \P(\theta_A), P(q_B, q_A)) \in \mathcal{S}_T$, where $\P(\theta_B), \P(\theta_A)$ are the priors of $\theta_B$ and $\theta_A$, and $P(q_B, q_A)$ is the marginal distribution of $q_B, q_A$ based on $P$.  
\end{enumerate}
\end{theorem}
We give a proof for Theorem~\ref{thm:two_way_charac} in Appendix~\ref{app:two_way_charac}.

\section{Missing proofs in \Cref{sec:feasibility}}

\subsection{Proof of Proposition \ref{prop:mediator_char}}
\begin{proof}
Let $|\mathrm{supp}(P)| = Q$ and label the elements of the support $q^{(s)}_{AB}$ for $s \in [Q]$. The mediator will send a randomized signal $\pi(\theta_{A}, \theta_{B}) \in [Q]$ defined via

$$\Pr_{\pi}[\pi(\theta_{A}, \theta_{B}) = s] = \frac{q_{AB}^{(s)}(\theta_{A}, \theta_{B}) \cdot P\left(q_{AB}^{(s)}\right)}{\Pr[\theta_A] \cdot \Pr[\theta_B]} .$$

(If $\Pr[\theta_A] \cdot \Pr[\theta_B] = 0$, we can set the probabilities arbitrarily, since the mediator is guaranteed to never receive that pair of $(\theta_A, \theta_B)$). It follows that the posterior conditioned on receiving signal $s$ equals $q^{(s)}_{AB}$ (and therefore must match this distribution). 
\end{proof}

\subsection{Proof for the Example in Figure~\ref{fig:imp_posteriors}} \label{app:imp_distr}
We show that it is not possible to start with $\theta_A, \theta_B \in \{L,H\}$, $\P(\theta_A = H)=\P(\theta_B = H) =0.5$ and have $q_B(\theta_B = H) = q_A(\theta_A= H) = 0.75$ with probability $0.5$ and $q_B(\theta_B = H) = q_A(\theta_A= H) = 0.25$ with probability $0.5$. Suppose to the contrary, we have a mediator protocol that gives $q_B(\theta_B = H) = q_A(\theta_A= H) = 0.75$ with probability $0.5$ and $q_B(\theta_B = H) = q_A(\theta_A= H) = 0.25$ with probability $0.5$ in the end. This means that the two players have to simultaneously hold the same belief $0.75$ or $0.25$. Then after seeing the public signal $s$, the conditional distribution of the two players' types (i.e., the observer's posterior) must be one of the following tables, 
\begin{table}[!h]
\centering
\begin{tabular}{|c|c|c|}
	\hline $P(\cdot|s)$ & $\theta_B = H$ & $\theta_B = L$\\
	\hline $\theta_A = H$ & $9/16$ & $3/16$\\
	\hline $\theta_A = L$ & $3/16$ & $1/16$ \\
	\hline
\end{tabular}
\quad\quad
\begin{tabular}{|c|c|c|}
	\hline $P(\cdot|s)$ & $\theta_B = H$ & $\theta_B = L$\\
	\hline $\theta_A = H$ & $1/16$ & $3/16$\\
	\hline $\theta_A = L$ & $3/16$ & $9/16$ \\
	\hline
\end{tabular}	
\end{table}
and we must arrive at each of the tables with probability $0.5$. But this does not match the prior.  
\begin{table}[!h]
\centering
$0.5\cdot$\
\begin{tabular}{|c|c|c|}
	\hline $P(\cdot|s)$ & $ H$ & $ L$\\
	\hline $H$ & $9/16$ & $3/16$\\
	\hline $ L$ & $3/16$ & $1/16$ \\
	\hline
\end{tabular}
\ + \ 
$0.5\cdot$ \
\begin{tabular}{|c|c|c|}
	\hline $P(\cdot|s)$ & $ H$ & $ L$\\
	\hline $ H$ & $1/16$ & $3/16$\\
	\hline $L$ & $3/16$ & $9/16$ \\
	\hline
\end{tabular}	
\ $\neq$ \ 
\begin{tabular}{|c|c|c|}
	\hline $\P(\cdot)$ & $H$ & $L$\\
	\hline $H$ & $1/4$ & $1/4$\\
	\hline $ L$ & $1/4$ & $1/4$ \\
	\hline
\end{tabular}	
\end{table}
Therefore it is not possible to have $q_B(\theta_B = H) = q_A(\theta_A= H) = 0.75$ with probability $0.5$ and $q_B(\theta_B = H) = q_A(\theta_A= H) = 0.25$ with probability $0.5$.

\subsection{Observations about Bayesian conversations} \label{app:obs}
In this section, we give several observations about the posterior distributions generated by Bayesian conversations.
\begin{observation}[Lemma 3 in~\cite{basu2022geometry}] \label{obs:two_way}
	For a $T$-round Bayesian conversation $\pi$ and any $1\le t \le T$, let $q^{(t)}_B = P(\theta_B|\theta_A, h^{(t)})$ be Alice's belief about $\theta_B$ at the end of round $t$, and let $q^{(t)}_A = P(\theta_A|\theta_B, h^{(t)})$ be Bob's belief about $\theta_A$ at the end of round $t$. Then we have
	\begin{itemize}
		\item The players' beliefs $q_A^{(t)}, q_B^{(t)}$ are uniquely determined by the history $\widetilde h^{(t)}$ no matter what their types are, i.e., for any two possible types of Alice $x, x' \in \Theta_A$, we have $P(\theta_B|\theta_A=x,\widetilde h^{(t)}) = P(\theta_B|\theta_A=x',\widetilde h^{(t)})$, and the same holds for Bob.
		\item Alice's belief about Bob's type does not change before and after sending her signal $a_t$, and the same holds for Bob. 
		\item Conditioning on any history $\widetilde h^{(t-1)}$ before Alice sending $a_t$, the expectation of Bob's belief does not change before and after Alice sends $a_t$. More specifically, let $\widetilde q^{(t-1)}_A = P(\theta_A|\theta_B, \widetilde h^{(t-1)})$ be Bob's unique belief after seeing $\widetilde h^{(t-1)}$ and let random variable $q^{(t)}_A = P(\theta_A|\theta_B, \widetilde h^{(t-1)}, a_t)$ be Bob's belief after seeing $\widetilde h^{(t-1)}$ and $a_t$. Then we have
			\begin{align*}
			\E_{a_t}[q^{(t)}_A]	= \widetilde q^{(t-1)}_A.
			\end{align*}
			The same holds for Bob. The observation is illustrated in Figure~\ref{fig:two_way_app}.
	\end{itemize}
\end{observation}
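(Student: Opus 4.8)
The plan is to prove the three bullet points in order, since each builds on the previous. Throughout, fix a $T$-round Bayesian conversation $\pi = \langle \mathbf{A},\mathbf{B},\mathbf{f},\mathbf{g}\rangle$ and recall that by the chain rule of the protocol, $\P_\pi(\theta_A,\theta_B,h^{(T)})$ factors as the product of the priors times all the $f_i(a_i\mid\theta_A,h^{(i-1)})$ and $g_i(b_i\mid\theta_B,h^{(i-1)},a_i)$ terms. The single structural fact I will lean on repeatedly is that each factor depends on \emph{only one} of the two types: Alice's messages are conditionally independent of $\theta_B$ given $\theta_A$ and the history, and symmetrically for Bob.

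\emph{First bullet (beliefs depend only on the history, not the type).} I would prove by induction on $t$ the stronger statement that, conditioned on any reachable history $\widetilde h^{(t)}$, the types $\theta_A$ and $\theta_B$ are independent, i.e.\ $\P_\pi(\theta_A,\theta_B\mid \widetilde h^{(t)}) = \P_\pi(\theta_A\mid\widetilde h^{(t)})\,\P_\pi(\theta_B\mid\widetilde h^{(t)})$. The base case $t=0$ is the independence of the priors. For the inductive step from $\widetilde h^{(t-1)}$ to $\widetilde h^{(t)} = (\widetilde h^{(t-1)}, a_t, b_t)$: applying Bayes' rule and using that $f_t(a_t\mid\theta_A,\widetilde h^{(t-1)})$ does not involve $\theta_B$ and $g_t(b_t\mid\theta_B,\widetilde h^{(t-1)},a_t)$ does not involve $\theta_A$, the joint posterior is proportional to $\big[\P_\pi(\theta_A\mid\widetilde h^{(t-1)}) f_t(a_t\mid\theta_A,\widetilde h^{(t-1)})\big]\cdot\big[\P_\pi(\theta_B\mid\widetilde h^{(t-1)}) g_t(b_t\mid\theta_B,\widetilde h^{(t-1)},a_t)\big]$, which is a product of a function of $\theta_A$ and a function of $\theta_B$; hence the posterior remains a product distribution. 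Given conditional independence, $q_B^{(t)} = \P_\pi(\theta_B\mid\theta_A,\widetilde h^{(t)}) = \P_\pi(\theta_B\mid\widetilde h^{(t)})$ is manifestly independent of the value of $\theta_A$, and symmetrically for $q_A^{(t)}$; this is exactly the first bullet.

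\emph{Second bullet (own belief unchanged by own message).} Using the product-form posterior just established, $q_B^{(t)}(\theta_B) \propto \P_\pi(\theta_B\mid\widetilde h^{(t-1)})\, g_t(b_t\mid\theta_B,\widetilde h^{(t-1)},a_t)$ after Bob sends $b_t$, whereas the interim belief after Alice has sent $a_t$ but before Bob sends $b_t$ is $\propto \P_\pi(\theta_B\mid\widetilde h^{(t-1)})\cdot f_t(a_t\mid\theta_A,\widetilde h^{(t-1)})$ marginalized over $\theta_A$ — but $f_t(a_t\mid\theta_A,\widetilde h^{(t-1)})$ is constant in $\theta_B$, so this interim belief equals $\P_\pi(\theta_B\mid\widetilde h^{(t-1)})$, i.e.\ Alice sending $a_t$ does not change Alice's belief about $\theta_B$. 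The symmetric claim for Bob is the same computation. \emph{Third bullet (martingale/mean-preservation of the other's belief under one's own message).} This is the law of total expectation in disguise. Fix $\widetilde h^{(t-1)}$; by the first bullet, Bob's belief after $\widetilde h^{(t-1)}$ is a deterministic $\widetilde q_A^{(t-1)}$, and after Alice sends $a_t$ it becomes the random variable $q_A^{(t)} = \P_\pi(\theta_A\mid \widetilde h^{(t-1)},a_t)$. Then $\E_{a_t}[q_A^{(t)}(\theta_A)] = \sum_{a_t}\P_\pi(a_t\mid\widetilde h^{(t-1)})\,\P_\pi(\theta_A\mid\widetilde h^{(t-1)},a_t) = \P_\pi(\theta_A\mid\widetilde h^{(t-1)}) = \widetilde q_A^{(t-1)}(\theta_A)$, where the middle equality is just $\sum_{a_t}\P_\pi(\theta_A,a_t\mid\widetilde h^{(t-1)}) = \P_\pi(\theta_A\mid\widetilde h^{(t-1)})$. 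The symmetric statement holds for Bob's message and Alice's belief. The main obstacle — and really the only place any care is needed — is making the inductive conditional-independence claim in the first bullet fully rigorous, since every subsequent step is a one-line consequence of it; I would spell out the Bayes' rule manipulation carefully, being explicit that the normalizing constant $\P_\pi(a_t,b_t\mid\widetilde h^{(t-1)})$ is absorbed and that reachability of $\widetilde h^{(t)}$ guarantees we never divide by zero.
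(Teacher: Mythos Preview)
Your proposal is correct and follows essentially the same route as the paper. The paper packages the induction via a matrix representation (Observation~\ref{obs:matrix_rep}): writing $M_t[x,y]=\P_\pi(\theta_A=x,\theta_B=y,h^{(t)}=\widetilde h^{(t)})$, it shows $M_{t+1}=D_{\widetilde a_{t+1}} M_t Z$ for diagonal $D_{\widetilde a_{t+1}},Z$, so the rank-one (product) structure of $M_0$ is preserved—but this is precisely your Bayes'-rule induction for conditional independence written in matrix form, and the three bullets follow from it in the same way.
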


\begin{figure}
\centering
\begin{tikzpicture}
	\begin{axis}[
		xmin=-0.03,
		xmax=1.15,
		ymin=-0.03, 
		ymax=1.15, 
		axis lines = middle, 
		xlabel = {$q_B(\theta_B = H)$}, 
		ylabel = {$q_A(\theta_A = H)$},  
		xtick = {0, 0.25, ..., 1},
		ytick = {0, 0.25, ..., 1},
		width=0.5\textwidth, 
		height = 0.45\textwidth,
		x label style={at={(axis description cs:0.5,-0.15)},anchor=north},
    	y label style={at={(axis description cs:-0.17,.5)},rotate=90,anchor=south}
    	]
    \filldraw[black] (0.5,0.5) circle (2pt);
    \filldraw[black] (0.5,0.25) circle (1pt);
    \filldraw[black] (0.5,0.75) circle (1pt);
    \draw [-{Stealth[length=2mm, scale width=1.5]},thick] (0.5,0.5)  -- node[right]{$\frac{1}{2}$} (0.5,0.25);
    \draw [-{Stealth[length=2mm, scale width=1.5]},thick](0.5,0.5) -- node[right]{$\frac{1}{2}$} (0.5,0.75);
    \draw [-{Stealth[length=2mm, scale width=1.5]},thick] (0.5,0.25)  -- node[below]{$\frac{1}{3}$} (0,0.25);
    \draw [-{Stealth[length=2mm, scale width=1.5]},thick](0.5,0.25) -- node[below]{$\frac{2}{3}$} (0.75,0.25);
    \draw [-{Stealth[length=2mm, scale width=1.5]},thick] (0.5,0.75)  -- node[above]{$\frac{2}{3}$} (0.3,0.75);
    \draw [-{Stealth[length=2mm, scale width=1.5]},thick](0.5,0.75) -- node[above]{$\frac{1}{3}$} (0.9,0.75);
    \draw [-{Stealth[length=2mm, scale width=1.5]},thick] (0.9,0.75)  -- node[right]{$\frac{1}{4}$} (0.9,0);
    \draw [-{Stealth[length=2mm, scale width=1.5]},thick](0.9,0.75) -- node[right]{$\frac{3}{4}$} (0.9,1);
	\end{axis}
\end{tikzpicture}
	\caption{Illustration of Observation~\ref{obs:two_way}. Suppose $\theta_A, \theta_B \in \{L,H\}$. At any step of the protocol, the status of the protocol can be represented as a two-dimensional point $(q_B(\theta_B = H), q_A(\theta_A = H))$. When Alice sends a signal, $q_B$ remains unchanged and $q_A$ is decomposed along $y$-axis while preserving the expectation. The same holds for Bob. }
	\label{fig:two_way_app}
\end{figure}

\begin{observation}[Lemma 2 in~\cite{basu2022geometry}] \label{obs:cond}
The joint distribution of the players' types and beliefs $P(\theta_A, \theta_B, q_A^{(t)}, q_B^{(t)})$ can be fully determined by the marginal distribution of their beliefs $P(q_A^{(t)}, q_B^{(t)})$ as 
		\begin{align*}
		P(\theta_A, \theta_B|q_A^{(t)}, q_B^{(t)}) = q_A^{(t)}(\theta_A) \cdot q_B^{(t)}(\theta_B).
		\end{align*}
\end{observation}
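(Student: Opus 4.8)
The plan is to prove the identity $\P_\pi(\theta_A,\theta_B\mid q_A^{(t)},q_B^{(t)}) = q_A^{(t)}(\theta_A)\,q_B^{(t)}(\theta_B)$ in two stages: first show that for \emph{every} realized transcript $h^{(t)}$ the observer's posterior $\P_\pi(\theta_A,\theta_B\mid h^{(t)})$ already factors as $q_A^{(t)}(\theta_A)\,q_B^{(t)}(\theta_B)$ (with $q_A^{(t)},q_B^{(t)}$ the round-$t$ beliefs), and then pass from conditioning on the transcript to conditioning on the coarser random variable $(q_A^{(t)},q_B^{(t)})$ by averaging. The ``$\P_\pi(\theta_A,\theta_B,q_A^{(t)},q_B^{(t)})$ is determined by $\P_\pi(q_A^{(t)},q_B^{(t)})$'' assertion is then immediate, since $\P_\pi(\theta_A,\theta_B,q_A^{(t)},q_B^{(t)}) = \P_\pi(q_A^{(t)},q_B^{(t)})\cdot q_A^{(t)}(\theta_A)q_B^{(t)}(\theta_B)$.

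For the first stage, the quickest route invokes Observation~\ref{obs:two_way}: its first bullet says $\P_\pi(\theta_B\mid\theta_A=x,h^{(t)})$ is independent of $x\in\Theta_A$, i.e.\ $\theta_A$ and $\theta_B$ are conditionally independent given $h^{(t)}$, and the common value of this conditional is exactly Alice's belief $q_B^{(t)}=\P_\pi(\theta_B\mid h^{(t)})$; symmetrically $q_A^{(t)}=\P_\pi(\theta_A\mid h^{(t)})$. Hence $\P_\pi(\theta_A,\theta_B\mid h^{(t)}) = \P_\pi(\theta_A\mid\theta_B,h^{(t)})\,\P_\pi(\theta_B\mid h^{(t)}) = q_A^{(t)}(\theta_A)\,q_B^{(t)}(\theta_B)$. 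If one prefers a self-contained argument, the same fact follows by induction on $t$ from the splitting formula~\eqref{eq:posterior_split}: at $t=0$ the observer's posterior is the product prior $\P(\theta_A)\times\P(\theta_B)$; and if $\P_\pi(\theta_A,\theta_B\mid h^{(t-1)})$ is a product, then after Alice sends $a_t\sim f_t(\cdot\mid\theta_A,h^{(t-1)})$ — a signal whose likelihood $\Pr[a_t\mid\theta_A,\theta_B,h^{(t-1)}]=f_t(a_t\mid\theta_A,h^{(t-1)})$ does not involve $\theta_B$ — formula~\eqref{eq:posterior_split} multiplies the $\theta_A$-marginal by $f_t(a_t\mid\theta_A,h^{(t-1)})/\Pr[a_t\mid h^{(t-1)}]$ and leaves the $\theta_B$-marginal untouched, so the product form is preserved; Bob's message is handled symmetrically, and Observation~\ref{obs:two_way} identifies the two resulting marginals with $q_A^{(t)}$ and $q_B^{(t)}$.

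For the second stage, fix $(q_A,q_B)$ in the support of $(q_A^{(t)},q_B^{(t)})$. Because $q_A^{(t)}$ and $q_B^{(t)}$ are deterministic functions of $h^{(t)}$, conditioning additionally on $h^{(t)}$ inside the event $\{q_A^{(t)}=q_A,\,q_B^{(t)}=q_B\}$ erases that event, so
\[
\P_\pi(\theta_A,\theta_B\mid q_A^{(t)}=q_A,\,q_B^{(t)}=q_B)
=\sum_{h}\P_\pi\!\left(h\,\middle|\,q_A^{(t)}=q_A,\,q_B^{(t)}=q_B\right)\,\P_\pi(\theta_A,\theta_B\mid h),
\]
where $h$ ranges over transcripts that induce the belief pair $(q_A,q_B)$. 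By the first stage every summand equals $q_A(\theta_A)\,q_B(\theta_B)$ — crucially the \emph{same} value for all such $h$ — so the convex combination collapses to $q_A(\theta_A)\,q_B(\theta_B)$, which is the claim.

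The only genuinely delicate point — the ``main obstacle'' — is this last collapse: it works precisely because the product form of the observer's posterior holds transcript-by-transcript with a value that depends on $h^{(t)}$ only through the pair $(q_A^{(t)},q_B^{(t)})$, so that averaging over all transcripts yielding a given belief pair cannot mix two distinct product measures. Equivalently, one can bypass stage two by a direct tower-property computation: since $q_A^{(t)}=\P_\pi(\theta_A\mid\theta_B,h^{(t)})$ and conditioning on $(q_A^{(t)},q_B^{(t)},\theta_B)$ is a coarsening of conditioning on $(h^{(t)},\theta_B)$, we get $\P_\pi(\theta_A\mid q_A^{(t)},q_B^{(t)},\theta_B)=\E_\pi[q_A^{(t)}\mid q_A^{(t)},q_B^{(t)},\theta_B]=q_A^{(t)}$, hence independent of $\theta_B$; combined with the symmetric statement for $\theta_B$, this yields $\P_\pi(\theta_A,\theta_B\mid q_A^{(t)},q_B^{(t)})=\P_\pi(\theta_A\mid q_A^{(t)},q_B^{(t)})\,\P_\pi(\theta_B\mid q_A^{(t)},q_B^{(t)},\theta_A)=q_A^{(t)}(\theta_A)\,q_B^{(t)}(\theta_B)$.
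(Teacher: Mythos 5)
Your proof is correct and follows essentially the same route as the paper, which derives Observation~\ref{obs:cond} from the fact that the joint weight $\P_\pi(\theta_A,\theta_B,h^{(t)})$ retains a product (rank-one) structure under one-sided updates (Observation~\ref{obs:matrix_rep}, equivalently your induction via \eqref{eq:posterior_split}), together with the independence of the priors. The only difference is that you spell out the final coarsening step from conditioning on $h^{(t)}$ to conditioning on $(q_A^{(t)},q_B^{(t)})$, which the paper leaves implicit; that step is handled correctly since the per-transcript posterior depends on $h^{(t)}$ only through the belief pair.
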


Observation~\ref{obs:two_way} and Observation~\ref{obs:cond} are implied by the following Observation~\ref{obs:matrix_rep} and the fact that $\theta_A$ and $\theta_B$ are independently drawn from $\P(\theta_A)$ and $\P(\theta_B)$.

We consider $P(\theta_A, \theta_B,  h^{(t)})$, the probability of reaching a history $h^{(t)}$ while the players' types being $\theta_A, \theta_B$, and characterize how $P(\theta_A, \theta_B, h^{(t)})$ evolves. 
For any round $t$ and any realization $\widetilde h^{(t)}$, we represent $P(\theta_A, \theta_B, h^{(t)} = \widetilde h^{(t)})$ by a $|\Theta_A|\times|\Theta_B|$ matrix $M_t$ with 
	\begin{align*}
		M_t[x,y] = P(\theta_A=x, \theta_B=y, h^{(t)} = \widetilde h^{(t)}),\ \forall x \in \Theta_A, y \in \Theta_B.
	\end{align*}
\begin{observation} \label{obs:matrix_rep}
	For any realization $\widetilde h^{(t+1)} = (\widetilde h^{(t)}, \widetilde a_{t+1}, \widetilde b_{t+1})$, after Alice sending $\widetilde a_{t+1}$, the probability matrix becomes
	\begin{align*}
		D_{\widetilde a_{t+1}}\cdot  M_t,
	\end{align*}
	where $D_{\widetilde a_{t+1}}$ is a $|\Theta_A|\times|\Theta_A|$ diagonal matrix with $D_{\widetilde a_{t+1}}[x,x] = \Pr(f_{t+1}(x,\widetilde h^{(t)}) = \widetilde a_{t+1})$ for all $x\in \Theta_A$. And after Bob sending $\widetilde b_{t+1}$, the probability matrix becomes
	\begin{align*}
		M_{t+1} = D_{\widetilde a_{t+1}}\cdot  M_t \cdot Z,
	\end{align*}
	where $Z$ is a $|\Theta_B|\times|\Theta_B|$ diagonal matrix with $Z[y,y] = \Pr(g_{t+1}(y,\widetilde h^{(t)}, \widetilde a_{t+1}) = \widetilde b_{t+1})$ for all $y\in \Theta_B$.
\end{observation}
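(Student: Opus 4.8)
The plan is to derive the two matrix identities directly from the chain rule for probabilities applied to the factorization of $\P_\pi$ given in Definition~\ref{def:two-way}, together with the fact that the round-$(t+1)$ signaling rules $f_{t+1}$ and $g_{t+1}$ each depend on only one player's type.

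First I would fix a realization $\widetilde h^{(t)}$ and, for every pair $(x,y) \in \Theta_A \times \Theta_B$, write
$$P(\theta_A = x, \theta_B = y, h^{(t)} = \widetilde h^{(t)}, a_{t+1} = \widetilde a_{t+1}) = P(\theta_A = x, \theta_B = y, h^{(t)} = \widetilde h^{(t)}) \cdot P(a_{t+1} = \widetilde a_{t+1} \mid \theta_A = x, \theta_B = y, h^{(t)} = \widetilde h^{(t)}).$$
By the definition of the protocol, Alice draws $a_{t+1} \sim f_{t+1}(\theta_A, \widetilde h^{(t)})$ using only her type and the transcript, so the last conditional factor equals $\Pr(f_{t+1}(x, \widetilde h^{(t)}) = \widetilde a_{t+1})$, which does not depend on $y$. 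Hence the $(x,y)$ entry of the updated matrix is $\Pr(f_{t+1}(x, \widetilde h^{(t)}) = \widetilde a_{t+1}) \cdot M_t[x,y]$, i.e. row $x$ of $M_t$ is rescaled by that factor; this is exactly left multiplication by the diagonal matrix $D_{\widetilde a_{t+1}}$.

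Next I would repeat the same step for Bob's message. Conditioning additionally on $a_{t+1} = \widetilde a_{t+1}$ and multiplying by $P(b_{t+1} = \widetilde b_{t+1} \mid \theta_A = x, \theta_B = y, \widetilde h^{(t)}, \widetilde a_{t+1})$, which by the protocol equals $\Pr(g_{t+1}(y, \widetilde h^{(t)}, \widetilde a_{t+1}) = \widetilde b_{t+1})$ and depends only on $y$, rescales column $y$ by $Z[y,y]$. Therefore $M_{t+1} = (D_{\widetilde a_{t+1}} M_t) Z = D_{\widetilde a_{t+1}} M_t Z$, as claimed.

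There is essentially no substantive obstacle here: the only point requiring care is verifying that each one-round signaling distribution is conditionally independent of the other player's type given that player's own type and the transcript, but this is exactly the structure of Definition~\ref{def:two-way} and is visible in the explicit product form of $\P_\pi(\theta_A, \theta_B, a_1, b_1, \dots)$ stated after Example~\ref{exm:house}. The one bookkeeping caveat is that ``the probability matrix after Alice/Bob sends a message'' must be read as the (unnormalized) sub-distribution $P(\theta_A, \theta_B, h^{(t)} = \widetilde h^{(t)}, a_{t+1} = \widetilde a_{t+1})$, and likewise with $\widetilde b_{t+1}$ appended, rather than a normalized posterior; with that reading the identities hold exactly, and combined with the fact that $\theta_A$ and $\theta_B$ are independent under the prior they yield Observations~\ref{obs:two_way} and~\ref{obs:cond}.
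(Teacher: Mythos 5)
Your proposal is correct and follows essentially the same route as the paper: the paper's proof also derives both identities directly from the chain-rule factorization $P(\theta_A=x,\theta_B=y,\widetilde h^{(t)},\widetilde a_{t+1}) = P(\theta_A=x,\theta_B=y,\widetilde h^{(t)})\cdot \Pr(f_{t+1}(x,\widetilde h^{(t)})=\widetilde a_{t+1})$ and its analogue with $g_{t+1}$, observing that these factors depend only on $x$ (resp.\ only on $y$) and hence act as row (resp.\ column) rescalings, i.e.\ left and right multiplication by the diagonal matrices. Your remark that the matrices are unnormalized sub-distributions is consistent with the paper's definition of $M_t$, so no further adjustment is needed.
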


\begin{proof}
The proof follows directly from the formulas
\begin{align*}
	P(\theta_A=x, \theta_B=y, h^{(t)} = \widetilde h^{(t)}, a_{t+1} = \widetilde a_{t+1}) = P(\theta_A=x, \theta_B=y, h^{(t)} = \widetilde h^{(t)})\cdot \Pr(f_{t+1}(x,\widetilde h^{(t)}) = \widetilde a_{t+1})
\end{align*}
and 
\begin{align*}
	& P(\theta_A=x, \theta_B=y, h^{(t+1)} = \widetilde h^{(t+1)})\\
	 = & P(\theta_A=x, \theta_B=y, h^{(t)}= \widetilde h^{(t)})\cdot \Pr(f_{t+1}(x,\widetilde h^{(t)}) = \widetilde a_{t+1})\Pr(g_{t+1}(y,\widetilde h^{(t)}, \widetilde a_{t+1}) = \widetilde b_{t+1}).
\end{align*}
\end{proof}

\subsection{Proof of Theorem~\ref{thm:two_way_charac}}\label{app:two_way_charac}

We first prove that if a distribution $P(\theta_A, \theta_B, q_B, q_A)$ can be generated by a Bayesian conversation $\pi$, i.e., $P(\theta_A, \theta_B, q_B, q_A)=\P_\pi(\theta_A, \theta_B, q_B^{(T)}, q_A^{(T)})$, then it must satisfy (1) and (2). For (1), it is directly proved by Observation~\ref{obs:cond} because we must have $\P_\pi(\theta_A, \theta_B|q_A^{(T)}, q_B^{(T)}) = q_A^{(T)}(\theta_A) \cdot q_B^{(T)}(\theta_B)$. For (2), we prove that $(\P(\theta_B), \P(\theta_A), P(q_B, q_A)) \in \mathcal{S}_T$ by showing that 
		$$
		(\widetilde q_B^{(t)}, \widetilde q_A^{(t)}, \P_\pi(q_B^{(T)}, q_A^{(T)}|\widetilde h^{(t)})) \in \mathcal{S}_{T-t}, \ \text{for all } t, \widetilde h^{(t)}, 
		$$
		$$
		(\widetilde q_B^{(t)}, \widetilde q_A^{(t+1)}, \P_\pi(q_B^{(T)}, q_A^{(T)}|\widetilde h^{(t)}, \widetilde a_{t+1})) \in \mathcal{T}_{T-t-1}, \ \text{for all } t, \widetilde h^{(t)}, \widetilde a_{t+1}
		$$
		where $\widetilde h^{(t)}$ is any realization of history up to round $t$, and $\widetilde q_B^{(t)}, \widetilde q_A^{(t)}$ are the players' beliefs at the end of round $t$ when the history is $\widetilde h^{(t)}$, and $\P_\pi(q_B^{(T)}, q_A^{(T)}|\widetilde h^{(t)}))$ is the distribution of the players' final beliefs conditioning on the history up to round $t$ is $\widetilde h^{(t)}$. We prove by induction. First, it is clear that the statement holds for $t=T$, $$(\widetilde q_B^{(T)}, \widetilde q_A^{(T)}, \P_\pi(q_B^{(T)}, q_A^{(T)}|\widetilde h^{(T)})) \in \mathcal{S}_{0}$$ because $\P_\pi(q_B^{(T)}, q_A^{(T)}|\widetilde h^{(T)}))= e_{\widetilde q_B^{(T)}, \widetilde q_A^{(T)}}$ and it is the definition of $\mathcal{S}_0$. Then assume that the statement is always true for $t\ge k+1$, it must also hold for $t= k$. According to Observation~\ref{obs:two_way}, when Bob sending a signal, $q_A$ does not change and $q_B$ is decomposed into a convex combination of points. This means that  $(\widetilde q_B^{(k)}, \widetilde q_A^{(k+1)}, \P_\pi(q_B^{(T)}, q_A^{(T)}|\widetilde h^{(k)}, \widetilde a_{k+1}))$ must be a convex combination of the points in the next round $(q_B^{(k+1)},  \widetilde q_A^{(k+1)}, \P_\pi(q_B^{(T)}, q_A^{(T)}|\widetilde h^{(k)}, \widetilde a_{k+1}, b_{k+1}))$ with the same $\widetilde q_A^{(k+1)}$. By induction, $(q_B^{(k+1)},  \widetilde q_A^{(k+1)}, \P_\pi(q_B^{(T)}, q_A^{(T)}|\widetilde h^{(k)}, \widetilde a_{k+1}, b_{k+1}))$ are in the set $\mathcal{S}_{T-k-1}$ and by the definition of $\mathcal{T}_i$, we must have $(\widetilde q_B^{(k)}, \widetilde q_A^{(k+1)}, \P_\pi(q_B^{(T)}, q_A^{(T)}|\widetilde h^{(k)}, \widetilde a_{k+1}))\in \mathcal{T}_{T-k-1}$ because it is a convex combination of points in $\mathcal{S}_{T-k-1}$ with the same $q_A$. Similarly, we must have $(\widetilde q_B^{(k)}, \widetilde q_A^{(k)}, \P_\pi(q_B^{(T)}, q_A^{(T)}|\widetilde h^{(k)}))\in \mathcal{S}_{T-k}$.
		
		We then prove if a distribution $P(\theta_A, \theta_B, q_B, q_A)$  satisfies (1) and (2), it can be generated by a Bayesian conversation.  For any $P(\theta_A, \theta_B, q_B, q_A)$ that satisfies (1) and (2), we construct a Bayesian conversation 
		by reversing the merging path. We start from $(\P(\theta_B), \P(\theta_A), P(q_B, q_A)) \in \mathcal{S}_T$ and be definition, there exists a convex combination of points in $\mathcal{T}_{T-1}$ with the same $q_B$ that gives $(\P(\theta_B), \P(\theta_A), P(q_B, q_A))\in \mathcal{S}_T$. Then we can define signal distribution $f_1(\theta_A)$ to be the distribution that decomposes  $q_A = \P(\theta_A)$ to the $q_A$'s in this convex combination. Similarly, for each point in the decomposition, we can find its convex combination of points in $\mathcal{S}_{T-1}$ with the same $q_A$ and we can define the corresponding $g_1(\theta_A, a_1)$. Repeating this process for $T$ rounds, we get a Bayesian conversation that generates the correct marginal distribution $P(q_B, q_A)$. Then by Observation~\ref{obs:cond}, we must have the induced  $\P_\pi(\theta_A, \theta_B|q_B, q_A) = q_A(\theta) q_B(\theta_B)$, which  matches $P(\theta_A, \theta_B|q_B, q_A)$ because of Condition (1).

\section{Missing proofs in Section~\ref{sec:exante}}

\subsection{Proof of Theorem \ref{thm:equiv_exante}}\label{app:exante}

\begin{proof}
For a utility function $u$ and a class of protocol $\Pi$, $\text{Range}_{\text{ex-ante}}(u, \Pi )$ can be represented as the set of
\begin{align*}
     & \sum_{\theta_A, \theta_B, q_A, q_B} P(\theta_A, \theta_B, q_A, q_B) u(\theta_A, \theta_B, r^*(\theta_A, q_B))\\
    \text{s.t.} \quad & \quad P(\theta_A, \theta_B, q_A, q_B) \text{ satisfies~\eqref{eqn:exante_IR}} \\
    & \quad P(\theta_A, \theta_B, q_A, q_B) \text{ can be induced by a protocol $\pi \in \Pi$}
\end{align*}
The key observation is the following: if Alice's posterior belief is $q_B$ after the communication, then the conditional probability of $\theta_B = x$ must be equal to $q_B(x)$ for $x \in \Theta_B$. We can thus simplify the expected utility so that it only depends on $\theta_A$ and $q_B$. Define $U(\theta_A, q_B) = \sum_{x \in \Theta_B} q_B(x) u(\theta_A, x, r^*(\theta_A, q_B))$ to be the expected utility when Alice's type is $\theta_A$ and her posterior belief is $q_B$. Then the expected utility is equal to  $\sum_{\theta_A, q_B} P(\theta_A,  q_B) U(\theta_A,  q_B)$.
For the same reason, the constraint of ex-ante IR~\eqref{eqn:exante_IR} can also be reduced to a constraint on $P(\theta_A,  q_B)$, 
\begin{align}\label{eqn:exante_IR_constraint}
\sum_{\theta_B,\theta_A, q_B} P(\theta_A,q_B)q_B(\theta_B) u_B(\theta_B, \theta_A, r^*(\theta_A, q_B))\ge \sum_{\theta_B,\theta_A} \P(\theta_B)\P(\theta_A) u_B(\theta_B, \theta_A, r^*(\theta_A, \P(\theta_B))).	
\end{align}
Then when finding the range of implementable expected utilities, we only need to consider the marginal distributions of $\theta_A$ and $q_B$ that can be induced by the class of protocols,
\begin{align*}
     & \sum_{\theta_A,  q_B} P(\theta_A,  q_B) U(\theta_A,q_B)\\
    \text{s.t.} \quad & \quad P(\theta_A,  q_B) \text{ satisfies~\eqref{eqn:exante_IR_constraint}} \\
    & \quad P(\theta_A, q_B) \text{ can be induced by a protocol $\pi \in \Pi$}
\end{align*}
We then show that the set of marginal distributions $P(\theta_A, q_B)$ that can be induced by mediator protocols is the same as the set of $P(\theta_A, q_B)$ that can be induced by one-round Bayesian conversations. We prove this by an exact characterization of feasible marginal distributions:  $P(\theta_A, q_B)$ can be induced by a mediator protocol/Bayesian conversation if and only if
\begin{align}
    \sum_{q_B} P(\theta_A, q_B) q_B(\theta_B) = \P(\theta_A) \P(\theta_B), \quad \forall \theta_A, \theta_B. 
\end{align}
This means that for all $\theta_A$, the conditional expectation of $q_B$, that is $\E[q_B|\theta_A]$, must be equal to the prior $\P(\theta_B)$. By Proposition~\ref{prop:mediator_char}, the equation is necessary. And it is also sufficient because any marginal distribution that satisfies the equation can be implemented by a one-round Bayesian conversation: Alice first fully reveals $\theta_A$, and then based on the observed $\theta_A$, Bob sends a signal so that Alice's posterior belief will follow the distribution $P(q_B|\theta_A)$. It is always possible for Bob to generate $P(q_B|\theta_A)$ because we have 
\begin{align*}
    \sum_{q_B} P(q_B|\theta_A) q_B(\theta_B) = \P(\theta_B ), \quad \forall \theta_A, \theta_B, 
\end{align*}
then Bob can just send a signal $b_{q_B}$ with probability $P(q_B|\theta_A)q_B(\theta_B)/\P(\theta_B)$ when his type is $\theta_B$ and Alice's observed type is $\theta_A$, so that when $b_{q_B}$ is sent, Alice's belief becomes $q_B$ and that happens with probability $P(q_B|\theta_A) \sum_{\theta_B} \P(\theta_B) q_B(\theta_B)/\P(\theta_B) = P(q_B|\theta_A)$. This completes our proof.
\end{proof}

 \subsection{Extension of Theorem~\ref{thm:equiv_exante}} \label{app:equiv_extension}
Suppose now we consider a vector of utility functions $\mathbf{u} = (u_1, \dots, u_L)$. Define 
\begin{align*}
 	\text{Range}(\mathbf{u}, \Pi, \mathcal{P}) = \Large\{ & \large(\E_P[u_1(\theta_A, \theta_B, r^*(\theta_A, q_B))], \dots,\E_P[u_L(\theta_A, \theta_B, r^*(\theta_A, q_B))]\large) \\ & : P\in \mathcal{P} \text{ and $P$ can be induced by a protocol $\pi \in \Pi$} \Large\}.
 	\end{align*}
Then when we only need ex-ante/interim IR, the two classes of protocols induce the same range of utilities.
\begin{theorem}
	When the players are committed and the base game has only one action-taker, the  utility vectors that can be implemented by ex-ante/interim IR mediator protocols is the same as the utility vectors that can be implemented by ex-ante/interim IR Bayesian conversations,
	\begin{align*}
		\text{Range}_{\text{ex-ante}}(\mathbf{u}, \Pi_{\text{two-way}} ) = \text{Range}_{\text{ex-ante}}(\mathbf{u}, \Pi_{\text{mediator}} ),\\
		\text{Range}_{\text{interim}}(\mathbf{u}, \Pi_{\text{two-way}} ) = \text{Range}_{\text{interim}}(\mathbf{u}, \Pi_{\text{mediator}} ).
	\end{align*}
\end{theorem}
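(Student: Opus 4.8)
The plan is to reduce everything---the objective, the realizability of a protocol, and the IR constraints---to the single marginal distribution $P(\theta_A, q_B)$, exactly as in the scalar proofs of Theorems~\ref{thm:equiv_exante} and~\ref{thm:equiv_interim}, and then observe that the vector-valued statement is just that reduction applied coordinatewise. First I would recall the key observation used there: for any protocol in either class, conditioned on Alice's type $\theta_A$ and her induced posterior $q_B$, the conditional law of $\theta_B$ is exactly $q_B$. Hence for each coordinate $u_j$ one may define $U_j(\theta_A, q_B) = \sum_{x \in \Theta_B} q_B(x)\, u_j(\theta_A, x, r^*(\theta_A, q_B))$, so that $\E_P[u_j(\theta_A, \theta_B, r^*(\theta_A, q_B))] = \sum_{\theta_A, q_B} P(\theta_A, q_B)\, U_j(\theta_A, q_B)$. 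Consequently the full utility vector is the image of $P(\theta_A, q_B)$ under a fixed linear map $\Lambda$ whose $j$-th coordinate is $P \mapsto \sum_{\theta_A, q_B} P(\theta_A, q_B) U_j(\theta_A, q_B)$, and this map does not depend on which protocol class we use.

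Second, I would note that the ex-ante IR constraint~\eqref{eqn:exante_IR} and the interim IR constraints~\eqref{eqn:interim_IR} also depend only on $P(\theta_A, q_B)$: by the same conditioning identity they become, respectively, $\sum_{\theta_B,\theta_A, q_B} P(\theta_A, q_B) q_B(\theta_B) u_B(\theta_B, \theta_A, r^*(\theta_A, q_B)) \ge \sum_{\theta_B, \theta_A} \P(\theta_B)\P(\theta_A) u_B(\theta_B, \theta_A, r^*(\theta_A, \P(\theta_B)))$ and, for every $y \in \Theta_B$, $\sum_{\theta_A, q_B} P(\theta_A, q_B) q_B(y) u_B(y, \theta_A, r^*(\theta_A, q_B)) \ge \P(\theta_B = y)\sum_{\theta_A} \P(\theta_A) u_B(y, \theta_A, r^*(\theta_A, \P(\theta_B)))$. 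Write $\mathcal{C}$ for the resulting (class-independent) set of admissible marginals in each case.

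Third, I would invoke the characterization established inside the proof of Theorem~\ref{thm:equiv_exante}: a marginal $P(\theta_A, q_B)$ is induced by some mediator protocol if and only if it is induced by some one-round Bayesian conversation, if and only if $\sum_{q_B} P(\theta_A, q_B) q_B(\theta_B) = \P(\theta_A)\P(\theta_B)$ for all $\theta_A, \theta_B$; call this feasible set $\mathcal{F}$. Moreover the explicit realizing protocols (Alice fully reveals $\theta_A$ and then Bob splits to $P(q_B \mid \theta_A)$, on the conversation side; the mediator of Proposition~\ref{prop:mediator_char} on the other side) produce exactly this marginal, so if $P \in \mathcal{C}$ then the realizing protocol is itself IR. Putting the pieces together, for each of the ex-ante and interim cases and for each of the two protocol classes, the achievable set of utility vectors equals $\Lambda(\mathcal{F} \cap \mathcal{C})$, which yields the two claimed equalities.

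The only point requiring care---and it is the one the scalar proofs already settle---is that passing from $P(\theta_A, \theta_B, q_A, q_B)$ to the marginal $P(\theta_A, q_B)$ loses no information relevant to either the objective or the IR constraints. Once that is granted there is no genuine additional obstacle in the vector case: the feasible set of marginals $\mathcal{F}$ and the admissible set $\mathcal{C}$ are identical on both sides, and the map $\Lambda$ is the same, so the images coincide. The substantive content is thus entirely inherited from Theorems~\ref{thm:equiv_exante} and~\ref{thm:equiv_interim}.
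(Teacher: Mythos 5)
Your proposal is correct and follows essentially the same route as the paper's own argument: reduce the objective (coordinatewise via $U_j$), the IR constraints, and protocol realizability to the marginal $P(\theta_A,q_B)$, and use the characterization $\sum_{q_B}P(\theta_A,q_B)q_B(\theta_B)=\P(\theta_A)\P(\theta_B)$ from the proof of Theorem~\ref{thm:equiv_exante}, which is realized on both sides by the same one-round ``Alice reveals, Bob splits'' conversation and the mediator of Proposition~\ref{prop:mediator_char}. No gaps beyond what the scalar proofs already settle.
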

\begin{proof}
The key idea is basically the same as in the proof of Theorem~\ref{thm:equiv_exante}. For any $\theta_A$ and $q_B$, define $U_j(\theta_A, q_B) = \sum_{x \in \Theta_B} q_B(x) u_j(\theta_A, x, r^*(\theta_A, q_B))$  to be the expected $u_j$ when Alice's type is $\theta_A$ and her posterior belief is $q_B$.
Then the utility vectors that can be implemented by ex-ante IR mediator/Bayesian conversations can be equally represented as the set of 
\begin{align*}
   & \left( \sum_{\theta_A, q_B} P(\theta_A,  q_B) U_1(\theta_A,  q_B), \dots, \sum_{\theta_A, q_B} P(\theta_A,  q_B) U_L(\theta_A,  q_B) \right)\\
    \text{s.t.} \quad & \ \sum_{q_B} P(\theta_A, q_B) q_B(\theta_B) = \P(\theta_A) \P(\theta_B), \quad \forall \theta_A, \theta_B \tag{feasibility}\\
    & 
\sum_{\theta_B,\theta_A, q_B} P(\theta_A,q_B)q_B(\theta_B) u_B(\theta_B, \theta_A, r^*(\theta_A, q_B))\ge \sum_{\theta_B,\theta_A} \P(\theta_B)\P(\theta_A) u_B(\theta_B, \theta_A, r^*(\theta_A, \P(\theta_B))), \tag{ex-ante IR}
\end{align*}
For interim IR, we just replace the inequality for ex-ante IR with the inequality for interim IR. 

Setting $u_1 = u_A$ and $u_2 = u_B$, we know that the implementable pairs of the two players' utilities are the same, which implies that the induced Pareto frontiers are the same.  
\end{proof}

\subsection{Two action-takers} \label{app:two_action_taker}
Suppose Alice and Bob each holds a private random bit $\theta_A , \theta_B\in \{0,1\}$ with $\P(\theta_A) = \P(\theta_B) = 0.5$. The designer wants to reveal minimum information so that both of them know the AND of their bits. The designer's utility is
\begin{align*}
	u(\theta_A, \theta_B, q_B, q_A) = \left\{ 
					\begin{array}{cc}
						0, & \text{ if } \theta_A = \theta_B = 1\\
						H(q_B), & \text{ if } \theta_A = 0, \theta_B = 1, \text{ and } q_A(\theta_A=0)=1\\
						H(q_A), & \text{ if } \theta_A = 1, \theta_B = 0, \text{ and } q_B(\theta_B=0)=1\\
						H(q_A) + H(q_B), & \text{ if } \theta_A = \theta_B = 0\\
						-\infty, & \text{ otherwise }\\
					\end{array}
				\right.
\end{align*}
where $H(\cdot)$ is the entropy function $H(q) = - q \log q - (1-q) \log (1-q)$.
Then a mediator protocol can achieve the highest expected utility for the designer by directly revealing the value of $\theta_A$ AND $\theta_B$, which give $\E[u] = 1$. \cite{braverman2013information} prove that no Bayesian conversation can achieve this expected utility.
\begin{theorem}[Theorem 7.1 and Theorem 7.9 in~\cite{braverman2013information}]
	No Bayesian conversation (a.k.a. two-way communication protocol) can achieve an expected utility of $1$ for the designer. In addition, the optimal Bayesian conversation takes infinitely many rounds.
\end{theorem}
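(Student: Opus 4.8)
The plan is to first rewrite the designer's objective in information-theoretic terms, then rule out the exact value $1$ with a short self-contained argument based on the rank-one structure of conversations, and finally reduce the ``infinitely many rounds'' statement to the known exact analysis of the information complexity of the two-bit $\mathrm{AND}$ in \cite{braverman2013information}. First I would put $\E_\pi[u]$ in closed form for an arbitrary $T$-round conversation $\pi$ with transcript $h=h^{(T)}$. Writing $P_{xy}(h)=\Pr_\pi[\,h\mid \theta_A=x,\theta_B=y\,]$ and using that $q_B$ is a deterministic function of $(\theta_A,h)$ and $q_A$ of $(\theta_B,h)$, a direct manipulation --- grouping the four equally likely input pairs and using $\theta_A\perp\theta_B$ --- shows that whenever the two point-mass constraints in the definition of $u$ hold almost surely (otherwise $\E_\pi[u]=-\infty$), one has $\E_\pi[u]=\tfrac12 H(\theta_B\mid \theta_A=0,h)+\tfrac12 H(\theta_A\mid \theta_B=0,h)\le 1$. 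Moreover $\E_\pi[u]=1$ forces (i) $P_{00}=P_{01}=P_{10}$ as distributions over transcripts (equivalently $h\perp\theta_B$ given $\theta_A=0$ and $h\perp\theta_A$ given $\theta_B=0$), together with (ii) the constraints themselves, which amount to $P_{11}(h)=0$ whenever $P_{01}(h)>0$ or $P_{10}(h)>0$.

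To rule out $\E_\pi[u]=1$, I would then invoke Observation~\ref{obs:matrix_rep}: for any fixed transcript the matrix $\big[\,\P_\pi(\theta_A=x,\theta_B=y,h)\,\big]_{x,y}$ is a product of diagonal matrices with the rank-one prior matrix, hence itself rank one, so $P_{00}(h)\,P_{11}(h)=P_{01}(h)\,P_{10}(h)$ for every $h$. Pick any $h$ with $P_{00}(h)>0$ (such an $h$ exists since $P_{00}$ is a probability distribution). By (i), $P_{01}(h)=P_{10}(h)=P_{00}(h)>0$, hence by (ii) $P_{11}(h)=0$; but then the product identity reads $P_{00}(h)\cdot 0=P_{00}(h)^2$, forcing $P_{00}(h)=0$, a contradiction. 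So no Bayesian conversation attains value $1$.

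For the ``infinitely many rounds'' part, I would note that for any conversation meeting the constraint that a bit-$1$ player ends up certain about a bit-$0$ partner (otherwise the utility is $-\infty$), the shortfall $1-\E_\pi[u]$ equals $\tfrac12 I(\theta_B;h\mid\theta_A=0)+\tfrac12 I(\theta_A;h\mid\theta_B=0)$, i.e.\ (a variant of) the internal information cost of $\pi$ seen as a protocol computing $\theta_A\wedge\theta_B$; maximizing the designer's utility is thus minimizing this information cost. The statement then reduces to the exact determination of the information complexity of $\mathrm{AND}$ in \cite{braverman2013information} (Theorems 7.1 and 7.9): this quantity is a strictly positive constant whose infimum is attained only in the limit, by the continuous ``clock'' protocol, while the best $r$-round protocol is strictly worse for every finite $r$. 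Translating back, the designer's supremum is approached only as the number of rounds tends to infinity and is attained by no finite-round conversation.

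The first two steps are short and self-contained (they use only the closed-form computation and Observation~\ref{obs:matrix_rep}), so the real obstacle is the last one: pinning down the exact value of the $\mathrm{AND}$ information complexity and, above all, proving that bounded-round protocols are strictly suboptimal is the technical heart of \cite{braverman2013information} --- it rests on characterizing the $\mathrm{AND}$ ``information curve'' as the solution of a nonlinear functional/differential equation whose finite-round truncations are strictly worse --- and I would cite it rather than reprove it. A secondary, routine-but-fiddly point is matching the exact prior and the exact variant of information cost used there to the quantity $\tfrac12 I(\theta_B;h\mid\theta_A=0)+\tfrac12 I(\theta_A;h\mid\theta_B=0)$ that arises here.
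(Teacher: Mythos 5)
The paper never proves this statement at all: it is imported wholesale as ``Theorem 7.1 and Theorem 7.9 in~\cite{braverman2013information}'' to justify the two-action-taker example in Appendix~\ref{app:two_action_taker}, so there is no in-paper argument to compare against. Your proposal is correct and in fact goes further than the paper on the first assertion: the identity $\E_\pi[u]=\tfrac12 H(\theta_B\mid\theta_A=0,h)+\tfrac12 H(\theta_A\mid\theta_B=0,h)$ checks out (the four input pairs are equally likely and the two constrained cases contribute exactly the missing halves of each conditional entropy), and the contradiction from $P_{00}=P_{01}=P_{10}$, $P_{11}=0$ on the support of $P_{01},P_{10}$, and the rank-one identity $P_{00}(h)P_{11}(h)=P_{01}(h)P_{10}(h)$ supplied by Observation~\ref{obs:matrix_rep} is exactly the classical cut-and-paste argument that $\mathrm{AND}$ admits no zero-information-leakage protocol; this gives a short self-contained proof of ``no conversation attains $1$'' that the paper does not contain. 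For the second assertion (finite-round suboptimality) you defer to the same external source as the paper, which is reasonable, but the caveat you flag is a genuine one rather than purely routine: your objective only charges information revealed to the $0$-typed players and imposes no requirement on the input $(1,1)$ (the utility there is $0$ with no point-mass constraint), whereas the cited results concern internal information cost conditioned on the full types for protocols that compute $\mathrm{AND}$ on all inputs, under priors chosen differently; one must verify that the constrained minimization $\tfrac12 I(\theta_B;h\mid\theta_A=0)+\tfrac12 I(\theta_A;h\mid\theta_B=0)$ has the same exact value and the same strict finite-round gap before Theorems 7.1 and 7.9 can be invoked verbatim. Since the paper itself offers only the citation, your write-up is at least as complete, but that translation step should be carried out (or the statement restated in the cited paper's terms) to make the ``infinitely many rounds'' claim fully rigorous.
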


We can turn the utility function $u(\theta_A, \theta_B, q_B, q_A)$ into a utility function $u(\theta_A, \theta_B, r_A, r_B)$ that depends on the players' types and actions by setting $R_A = R_B =[0,1]$ and define the players' utility functions to be the value of the log scoring rule,
\begin{align*}
	u_A(r_A, \theta_B) = \log(r_A) \cdot \bm{1}[\theta_B = 1] + \log(1-r_A)\cdot \bm{1}[\theta_B = 0].
\end{align*}
so that Alice's best action when she holds belief $q_B$ is just $r_A^* = q_B(\theta_B = 1)$, and Bob's best action when he holds belief $q_A$ is just $r_B^* = q_A(\theta_A = 1)$.

\section{Examples}
\subsection{Gap between ex-post IR mediator protocols and Bayesian conversations} \label{app:hiring}

We provide the optimal mediator protocol and Bayesian conversation for the hiring problem in \Cref{sec:alg-example}.

\paragraph{Optimal mediator protocol.} 

The optimal ex-post IR mediator protocol $\pi$ that achieves the highest possible social welfare is as follows. Consider a mediator who sends two possible public signals $s_1=$``good communication skill" and $s_2=$``good programming skill'' using the following signaling scheme:
\begin{itemize}
	\item when $\theta_A = \text{Prog}$, fully reveal $\theta_B$, that is, send ``good communication skill'' when $\theta_B = $Comm and send ``good programming skill'' when $\theta_B = $Prog.
	\item when $\theta_A = \text{Comm}$, partially reveal $\theta_B$: send ``good communication skill'' when $\theta_B =$Comm; and when $\theta_B=$Prog, send ``good communication skill'' with probability $2/3$ and send ``good programming skill'' with probability $1/3$.
\end{itemize}
The joint distribution of the signal and the players' types are shown in Table~\ref{table:signal_app}.
\begin{table}[!h]
\centering
\begin{tabular}{|c|c|c|}
	\hline $P(s_1, \cdot)$ & $\theta_B = \text{Prog}$ & $\theta_B = \text{Comm}$\\
	\hline $\theta_A = \text{Prog}$ & $0$ & $0.2$\\
	\hline $\theta_A = \text{Comm}$ & $0.2$ & $0.2$ \\
	\hline
\end{tabular}
~\quad
\begin{tabular}{|c|c|c|}
	\hline $P(s_2, \cdot)$ & $\theta_B = \text{Prog}$ & $\theta_B = \text{Comm}$\\
	\hline $\theta_A = \text{Prog}$ & $0.3$ & $0$\\
	\hline $\theta_A = \text{Comm}$ & $0.1$ & $0$ \\
	\hline
\end{tabular}	
\caption{The joint distribution of the signal and the types.}
\label{table:signal_app}
\end{table}

An employer's best actions are shown in Table~\ref{table:action_app}.
A Prog-type employer hires when receiving the ``good programming skill'' signal because the candidate's type is fully revealed for them. A Comm-type employer hires when receiving the ``good communication skill'' signal, because the posterior probability of $\theta_B = $Comm, that is $P(\theta_B=\text{Comm}|s_1,\theta_A = \text{Comm})$, becomes $0.5$ and we assume that the employer breaks ties by choosing the action that is more favorable for the candidate.  
\begin{table}[!h]
\centering
\begin{tabular}{|c|c|c|}
	\hline best action & $s_1$ & $s_2$\\
	\hline $\theta_A = \text{Prog}$ & not hire & hire\\
	\hline $\theta_A = \text{Comm}$ & hire & not hire \\
	\hline
\end{tabular}	
\caption{The employer's best action.}
\label{table:action_app}
\end{table}

We first show that the protocol is ex-post IR. The protocol is ex-post IR for a Prog-type candidate because they still get hired by the Prog-type employer and they only get hired more by the Comm-type employer. The protocol is ex-post IR for a Comm-type candidate, because based on a Comm-type candidate's belief, the probability of getting hired does not change. When we use the protocol for a Comm-type candidate, the mediator will only send the ``good communication skill'' signal and the candidate will only be hired by the Comm-type employer, whereas the candidate only gets hired by the Prog-type employer when there's no communication. A Comm-type candidate's posterior belief about the employer's type is $P(\theta_A=\text{Prog}|s_1, \theta_B = \text{Comm}) = 0.5$ (the protocol reveals no information about $\theta_A$ to a type-Comm candidate; see Table~\ref{table:signal_app}), so the probability of getting hired does not change for a Comm-type candidate. 

We then show that the protocol achieves the highest possible social welfare. The expected social welfare can be decomposed as
\begin{align*}
\E[u_A + u_B] & =  P(\theta_A = \text{Prog}, \theta_B = \text{Prog})\cdot \E[u_A+u_B| \theta_A = \text{Prog}, \theta_B = \text{Prog}]	\\
	&\quad + P(\theta_A = \text{Prog}, \theta_B = \text{Comm})\cdot \E[u_A+u_B| \theta_A = \text{Prog}, \theta_B = \text{Comm}]\\
	&\quad + P(\theta_A = \text{Comm}, \theta_B = \text{Prog})\cdot \E[u_A+u_B| \theta_A = \text{Comm}, \theta_B = \text{Prog}]\\
	&\quad + P(\theta_A = \text{Comm}, \theta_B = \text{Comm})\cdot \E[u_A+u_B| \theta_A = \text{Comm}, \theta_B = \text{Comm}].
\end{align*}
It is not difficult to see that $\E[u_A+u_B| \theta_A = \text{Prog}, \theta_B = \text{Prog}], \E[u_A+u_B| \theta_A = \text{Prog}, \theta_B = \text{Comm}]$, and $\E[u_A+u_B| \theta_A = \text{Comm}, \theta_B = \text{Comm}]$ induced by the mediator protocol have reached their maximum value, because after seeing the signal sent by the mediator, the employer always hires when $\theta_A = \theta_B$ and the employer never hires when $\theta_A =$ Prog and $\theta_B =$ Comm. So we only need to prove that $\E[u_A+u_B| \theta_A = \text{Comm}, \theta_B = \text{Prog}]$ has also reached its maximum. To simplify the notation, we denote by $\mathcal{E}$ the event that $\theta_A = \text{Comm}$ and $\theta_B = \text{Prog}$. Then the conditional social welfare $\E[u_A+u_B| \theta_A = \text{Comm}, \theta_B = \text{Prog}]$ can be written as 
\begin{align*}
	 & \E[u_A+u_B|\mathcal{E}] \\
	=\ &   \P(r = \text{hire}|\mathcal{E})\cdot\E[u_A+u_B|r = \text{hire}, \mathcal{E}] + \P(r = \text{not hire}|\mathcal{E})\cdot\E[u_A+u_B|r = \text{not hire}, \mathcal{E}]\\
	=\ & \P(r = \text{hire}|\mathcal{E})(-1 + 2) + \P(r = \text{not hire}|\mathcal{E})(0+0).
\end{align*}
This means that the higher $\P(r = \text{hire}|\mathcal{E})$ is, the higher the conditional social welfare is. We show that $\E[u_A+u_B|\mathcal{E}]$ cannot exceed $\frac{2}{3}$ by showing that $\P(r = \text{hire}|\mathcal{E})$ cannot exceed $\frac{2}{3}$. This is because a Comm-type employer only hires when the candidate is more likely to have type Comm, and it implies 
\begin{align*}
	P(\theta_B=\text{Prog}|\theta_A = \text{Comm}, r = \text{hire}) \le P(\theta_B=\text{Comm}|\theta_A = \text{Comm}, r = \text{hire}).
\end{align*}
Multiply both sides by $P( r = \text{hire}|\theta_A = \text{Comm})$, we get
\begin{align*}
	P(\theta_B=\text{Prog}, r = \text{hire}|\theta_A = \text{Comm}) & \le P(\theta_B=\text{Comm}, r = \text{hire}|\theta_A = \text{Comm})\\
	& \le P(\theta_B=\text{Comm}|\theta_A = \text{Comm})\\
	& = 0.4.
\end{align*}
The second inequality trivially holds and the last equality is because $\theta_A$ and $\theta_B$ are independent. Finally, 
\begin{align*}
	\P(r = \text{hire}|\theta_B=\text{Prog},\theta_A = \text{Comm}) & = \frac{P(\theta_B=\text{Prog}, r = \text{hire}|\theta_A = \text{Comm}) }{P(\theta_B=\text{Prog}|\theta_A = \text{Comm}) }\\
	& = \frac{P(\theta_B=\text{Prog}, r = \text{hire}|\theta_A = \text{Comm}) }{0.6}\\
	& \le \frac{0.4}{0.6} = \frac{2}{3}.
\end{align*}
Therefore we must have $\P(r = \text{hire}|\mathcal{E})\le \frac{2}{3}$, and this upper bound is reached by the mediator protocol because we have $\P(s_1|\mathcal{E})=\frac{2}{3}$ and a Comm-type employer always hires after receiving $s_1$.

By employing the linear program in ~\ref{thm:ex-post-lp}, we can obtain the same mediator protocol as in Table~\ref{table:signal_app}.

And the optimal social welfare by ex-post IR mediator protocol is
\begin{align*}
    W = 0.2 \times (0 + 0)+ 0.2 \times (-1 + 2) + 0.2 \times (1 + 2)+0.3 \times (10 + 2) + 0.1 \times (0 + 0) = \frac{22}{5}.
\end{align*}

\paragraph{Optimal Bayesian conversation.}
We next employ our algorithm to find the optimal ex-post IR Bayesian conversation.
First of all, the ex-post IR region $\IR_0$ is plotted as in \Cref{fig:final_response} with 
\begin{gather*}
\IR_0 = [0,0.5]\times [0, 0.5] \cup [0.5, 1]\times [0.5, 1],\\
X^*=\{0,0.5,0.6,1\},\quad Y^*=\{0,0.5,1\}.
\end{gather*}

And the corresponding $W_0(q_A, q_B)$ is plotted in ~\Cref{fig:utility_round_0} with 
 $W_0(0, 1) = W_0(1, 0)= -\infty$, $W_0(0.5, 0.5)=W_0(0.5,0)=W(0.5,1)=2$, and so on.
Then, we perform the recursive updates to find $W_k(\cdot)$ for $k>0$. The resulting $W_1(q_B, q_A)$ is plotted in ~\Cref{fig:utility_round_1}, and we have $W_2(q_B, q_A) = W_1(q_B, q_A)$, for all $(q_B, q_A)\in X^*\times Y^*$. 
This means that our algorithm has converged after just one step, so the Bayesian conversation after one round has already achieved the optimal social welfare, which is equal to $\frac{21}{5}$.

We can see from the $\IR_0$ region that if Alice splits belief $ q_A $, there must be one of final believes fall outside the $\IR_0$ region. Therefore, a $1$-round Bayesian conversation can achieve the optimum.

\subsection{Missing proof of theorem ~\ref{thm:infinite_round}}
\label{sec:proof_inf_round}
\begin{proof}
Consider a two-player game between Alice and Bob. Alice can be one of two types: $\Theta_A = \{\theta_{A0}, \theta_{A1}\}$, and Bob can also be one of two types: $\Theta_B = \{\theta_{B0}, \theta_{B1}\}$. Alice can take two actions $r \in \{r_0, r_1\}$. Suppose $\P(\theta_A = \theta_{A0}) = 0.6$, and $\P(\theta_B = \theta_{B0}) = 0.4$. The utilities of the two players are given in the following table:

\begin{table}[!h]
\centering
	\begin{tabular}{|c|c|c|}
	\hline $u_A(\theta_{A0}, \cdot)$ & $\theta_B = \theta_{B0}$ & $\theta_B = \theta_{B1}$\\
	\hline $r_0$ & $7$ & $5$\\
	\hline $r_1$ & $5$ & $7$\\
	\hline
	\end{tabular}
	~\quad
	\begin{tabular}{|c|c|c|}
	\hline $u_A(\theta_{A1}, \cdot)$ & $\theta_B = \theta_{B0}$ & $\theta_B = \theta_{B1}$\\
	\hline $r_0$ & $1$ & $3$\\
	\hline $r_1$ & $0$ & $5$\\
	\hline
	\end{tabular}
	\\[1em] 
	\begin{tabular}{|c|c|c|}
	\hline $u_B(\theta_{B0}, \cdot)$ & $\theta_A = \theta_{A0}$ & $\theta_A = \theta_{A1}$\\
	\hline $r_0$ & $5$ & $10$\\
	\hline $r_1$ & $10$ & $0$\\
	\hline
	\end{tabular}
	~\quad
	\begin{tabular}{|c|c|c|}
	\hline $u_B(\theta_{B1}, \cdot)$ & $\theta_A = \theta_{A0}$ & $\theta_A = \theta_{A1}$\\
	\hline $r_0$ & $10$ & $10$\\
	\hline $r_1$ & $10$ & $4$\\
	\hline
	\end{tabular}
	\caption{Alice and Bob's utility function.}
	\label{table:inf_game_instance}
\end{table}
According to the algorithm in Section \ref{sec:algorithm}, calculate $W_0$ and plot it on the coordinate system as shown in the following figure:

\begin{figure}[!h]
\centering
\begin{tikzpicture}
	\begin{axis}[
		xmin=-0.03,
		xmax=1.15,
		ymin=-0.03, 
		ymax=1.15, 
		axis lines = middle, 
		xlabel = {$q_B(\theta_B = \theta_{B0})$}, 
		ylabel = {$q_A(\theta_A = \theta_{A0})$},  
		width = 0.7\textwidth, 
		height = 0.63\textwidth,
		x label style={at={(axis description cs:0.5,-0.05)},anchor=north},
    	y label style={at={(axis description cs:-0.05,.5)},rotate=90,anchor=south},
        xtick = {0, 0.4, 0.5, 2/3, 1},
		xticklabels = {$0$, $\frac{2}{5}$,$\frac{1}{2}$, $\frac{2}{3}$, $1$},
		ytick = {0, 0.6, 0.8, 1},
		yticklabels = {$0$, $\frac{3}{5}$, $\frac{4}{5}$, $1$},
		tick label style = {font=\footnotesize}
    	]

	\addplot[color = gray, dashed, line width = 0.3mm, domain = 0:1]{0.6};
    \addplot[color = gray, dashed, line width = 0.3mm, domain = 0:1]{0.8};
    \addplot[color = gray, dashed, line width = 0.3mm, domain = 0:1]{1};
    
    \addplot[color = gray, dashed, line width = 0.3mm] coordinates {(0.5,0)(0.5,1)};
    \addplot[color = gray, dashed, line width = 0.3mm] coordinates {(0.4,0)(0.4,1)};
    \addplot[color = gray, dashed, line width = 0.3mm] coordinates {(2/3,0)(2/3,1)};
    \addplot[color = gray, dashed, line width = 0.3mm] coordinates {(1,0)(1,1)};

    \filldraw[blue] (2/5,0) circle (2pt) node[anchor=south east]{$-\infty$};
    \filldraw[red] (2/5, 3/5) circle (2pt) node[anchor=south east]{$m_4^{(0)} =\frac{297}{25}$} node[anchor=north east]{$(q_B^0, q_A^0)$};
    \filldraw[blue] (2/5,4/5) circle (2pt) node[anchor=south east]{$m_1^{(0)} =\frac{351}{25}$};
    \filldraw[blue] (2/5,1) circle (2pt) node[anchor=south east]{$\frac{81}{5}$};
    
    \filldraw[blue] (0,0) circle (2pt) node[anchor=south west]{$-\infty$};
    \filldraw[blue] (1/2, 0) circle (2pt) node[anchor=south west]{$-\infty$};
    \filldraw[blue] (2/3,0) circle (2pt) node[anchor=south west]{$\frac{35}{3}$};
    \filldraw[blue] (1,0) circle (2pt) node[anchor=south west]{$11$};

    \filldraw[blue] (0,3/5) circle (2pt) node[anchor=south west]{$\frac{69}{5}$};
    \filldraw[blue] (1/2, 3/5) circle (2pt) node[anchor=south]{$\ \ m_5^{(0)} =\frac{57}{5}$};
    \filldraw[blue] (2/3,3/5) circle (2pt) node[anchor=south west]{$m_6^{(0)} =\frac{187}{15}$};
    \filldraw[blue] (1,3/5) circle (2pt) node[anchor=south west]{$\frac{58}{5}$};

    \filldraw[blue] (0,4/5) circle (2pt) node[anchor=south west]{$\frac{77}{5}$};
    \filldraw[blue] (1/2, 4/5) circle (2pt) 
    node[anchor=south]{$\ \ m_2^{(0)} = \frac{137}{10}$};
    \filldraw[blue] (2/3,4/5) circle (2pt) node[anchor=south west]{$m_3^{(0)} =\frac{191}{15}$};
    \filldraw[blue] (1,4/5) circle (2pt) node[anchor=south west]{$\frac{59}{5}$};

    \filldraw[blue] (0,1) circle (2pt) node[anchor=south west]{$17$};
    \filldraw[blue] (1/2, 1) circle (2pt) node[anchor=south west]{$16$};
    \filldraw[blue] (2/3,1) circle (2pt) node[anchor=south west]{$-\infty$};
    \filldraw[blue] (1,1) circle (2pt) node[anchor=south west]{$-\infty$};
    
	\end{axis}
\end{tikzpicture}

\label{fig:utility_round_0}
\caption{Illustration of $W_0$ in the two-player game where the highest social welfare requires infinite rounds of communication to achieve.}
\end{figure}

We denote $W_k\left(\frac{2}{5}, \frac{4}{5}\right)$ as $m_1^{(k)}$, $W_k\left(\frac{1}{2}, \frac{4}{5}\right)$ as $m_2^{(k)}$, $W_k\left(\frac{2}{3}, \frac{4}{5}\right)$ as $m_3^{(k)}$, $W_k\left(\frac{2}{5}, \frac{3}{5}\right)$ as $m_4^{(k)}$, $W_k\left(\frac{1}{2}, \frac{3}{5}\right)$ as $m_5^{(k)}$, and $W_k\left(\frac{2}{3}, \frac{3}{5}\right)$ as $m_6^{(k)}$.

After calculations, we have discovered the following iterative pattern:
For all $k>0$, When $k$ is even,
\begin{align*}
    m_1^{(k+1)} =& m_1^{(k)},\\
    m_2^{(k+1)} =& m_2^{(k)},\\
    m_3^{(k+1)} =& \frac{2}{3}m_2^{(k)}+\frac{1}{3}W_k\left(1, \frac{4}{5}\right) ,\\
    m_4^{(k+1)} =& \frac{3}{5}m_6^{(k)}+\frac{2}{5}W_k\left(0, \frac{3}{5}\right),\\
    m_5^{(k+1)} =& \frac{3}{4}m_6^{(k)}+\frac{1}{4}W_k\left(0, \frac{3}{5}\right),\\
    m_6^{(k+1)} =& m_6^{(k)}.
\end{align*}
when $k$ is odd,
\begin{align*}
    m_1^{(k+1)} =& \frac{1}{2}m_4^{(k)}+\frac{1}{2}W_k\left(\frac{2}{5}, 1\right),\\
    m_2^{(k+1)} =& \frac{1}{2}m_5^{(k)}+\frac{1}{2}W_k\left(\frac{1}{2}, 1\right),\\
    m_3^{(k+1)} =& m_3^{(k)},\\
    m_4^{(k+1)} =& m_4^{(k)},\\
    m_5^{(k+1)} =& m_5^{(k)},\\
    m_6^{(k+1)} =& \frac{3}{4}m_3^{(k)}+\frac{1}{4}W_k\left(\frac{2}{3}, 0\right).
\end{align*}
And in all iterations, the values of other points remain unchanged: For all $k>0$, for all $(x,y)\in X^*\times Y^*-\{\left(\frac{2}{5}, \frac{4}{5}\right),\left(\frac{1}{2}, \frac{4}{5}\right), \left(\frac{2}{3}, \frac{4}{5}\right),\left(\frac{2}{5}, \frac{3}{5}\right), \left(\frac{1}{2}, \frac{3}{5}\right) ,\left(\frac{2}{3}, \frac{3}{5}\right)\}$,
\begin{align*}
    W_{k+1}(x, y) = W_k(x,y).
\end{align*}

Since for even $k$,
\begin{align*}
    m_4^{(k+1)} = \frac{3}{5}m_6^{(k)}+\frac{2}{5}W_k\left(0, \frac{3}{5}\right),\qquad
    m_5^{(k+1)} = \frac{3}{4}m_6^{(k)}+\frac{1}{4}W_k\left(0, \frac{3}{5}\right),
\end{align*}
and for odd $k$, $m_4$ and $m_5$ remain unchanged, thus we know that $W_k(\cdot,\frac{3}{5})$ is linear in $(0, \frac{1}{2})$ for all $k$.

For all $k$, $W_k(0,\frac{3}{5}),W_k(0,\frac{4}{5}),W_k(0,1)$ remain unchanged, and their values are linear to $y$. And for all $k$, $W_k(0,1),W_k(\frac{2}{5},1),W_k(\frac{1}{2},1)$ remain unchanged, and their values are linear to $x$.

For odd $k$,
\begin{align*}
    m_1^{(k+1)} = \frac{1}{2}m_4^{(k)}+\frac{1}{2}W_k\left(\frac{2}{5}, 1\right),\qquad
    m_2^{(k+1)} = \frac{1}{2}m_5^{(k)}+\frac{1}{2}W_k\left(\frac{1}{2}, 1\right)
\end{align*}
So 
\begin{align*}
    m_1^{(k+1)} =& \frac{1}{2}m_4^{(k)}+\frac{1}{2}W_k\left(\frac{2}{5}, 1\right)\\
    =& \frac{1}{2}\left(\frac{4}{5}m_5^{(k)}+\frac{1}{5}W_k\left(0,\frac{3}{5}\right)\right)+\frac{1}{2}\left(\frac{4}{5}W_k\left(\frac{1}{2},1\right)+\frac{1}{5}W_k\left(0,1\right)\right)\\
    =& \frac{4}{5}m_2^{(k+1)}+\frac{1}{5}W_{k+1}\left(0,\frac{4}{5}\right).
\end{align*}
and for even $k$, $m_1$ and $m_2$ remain unchanged, making $W_k(\cdot,\frac{4}{5})$ linear in $(0, \frac{1}{2})$ for all $k$, so:
\begin{align*}
    m_2^{(k)} = \frac{5}{4} m_1^{(k)}-\frac{1}{4}W_{k}\left(0,\frac{4}{5}\right)=\frac{5}{4} m_1^{(k)}-\frac{77}{20},\quad \forall k.
\end{align*}

For odd $k$,
\begin{align*}
    m_4^{(k+4)} =& \frac{3}{5}m_6^{(k+3)}+\frac{138}{25},\\
    m_6^{(k+3)} =& \frac{3}{4}m_3^{(k+2)}+\frac{35}{12},\\
    m_3^{(k+2)} =& \frac{2}{3}m_2^{(k+1)}+\frac{59}{15}=\frac{5}{6}m_1^{(k+1)}+\frac{41}{30},\\
    m_1^{(k+1)} =& \frac{1}{2}m_4^{(k)}+\frac{81}{10}.
\end{align*}

By recursively applying the above equations, we have for odd $k$:
$$
m_4^{(k+4)} = \frac{3}{16}m_4^{(k)}+\frac{4369}{400}.
$$

Since $m_4^{(1)}=13$, we get
$$
W_{4k+1}(0.4,0.6)=m_4^{(4k+1)} = -\frac{144}{325}\left(\frac{3}{16}\right)^k +\frac{4369}{325}.
$$

And from this expression, we can see that the optimal welfare is never achieved by any finite value of $k$.
\end{proof}
\section{Posterior distributions of the two protocols} \label{sec:feasibility}

Communication protocols change the outcome of the game by changing the players' beliefs about the other player's type. It is therefore natural to ask: what are the belief distributions that can be generated by a communication protocol? Determining whether a distribution can be generated by a communication protocol is not as straightforward as one might assume. For instance, in Figure~\ref{fig:imp_posteriors}, we present a belief distribution that cannot be generated by any communication protocol.

In this section, we present some preliminary observations from the existing literature for understanding the space of feasible posterior distributions, both in the mediator model and the Bayesian conversation model. 
One important consequence of the observations 
in this section is the following fact: mediator protocols possess (strictly) greater power than Bayesian conversations, as there exist belief distributions that can only be generated by a mediator protocol and not by a Bayesian conversation  (note that conversely, Bayesian conversations are a subset of mediator protocols, as the transcript of a Bayesian conversation can be used as the public signal in a mediator protocol). We summarize this in the following proposition. 

\begin{proposition}\label{prop:separation}
	There exists a posterior distribution $P(\theta_A, \theta_B, q_B, q_A)$ that can be generated by mediator protocol that cannot be generated by Bayesian conversations (demonstrated in Table~\ref{table:mediator_dec}). In particular, there exists a distribution that can be generated by mediator protocol that does not satisfy Condition (1) in Theorem~\ref{thm:two_way_charac}, which must be satisfied by a Bayesian conversation.  
\end{proposition}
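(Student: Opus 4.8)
The plan is to exhibit an explicit posterior distribution $P(\theta_A,\theta_B,q_B,q_A)$ realizable by a mediator and then argue it cannot arise from any Bayesian conversation. The natural construction is to take the smallest nontrivial type spaces, say $\Theta_A=\{0,1\}$ and $\Theta_B=\{0,1\}$ with uniform priors, and have the mediator send a signal that \emph{correlates} the two posteriors in a way that is impossible to achieve by an alternating, interactive protocol. Concretely, I would design a mediator scheme with two signals $\{s_0,s_1\}$ such that signal $s_0$ makes \emph{both} players very confident (e.g.\ $q_A$ puts most mass on $\theta_A=0$ \emph{and} $q_B$ puts most mass on $\theta_B=0$) while signal $s_1$ makes both players confident in the other direction — i.e.\ the event ``Alice's posterior is sharp'' coincides with the event ``Bob's posterior is sharp,'' a joint correlation of the two refinements. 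This is the intuition hinted at in the paper's abstract (``mediator protocols can correlate the posteriors in a way that unmediated conversations cannot''), and the same example is promised in Table~\ref{table:mediator_dec}.

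The key steps, in order, are: (1) Write down the joint distribution over $(\theta_A,\theta_B,s)$ explicitly, verify it is a valid mediator protocol (the marginals on $\theta_A$ and $\theta_B$ are the given independent priors, and $\pi(\cdot\mid\theta_A,\theta_B)$ is a genuine distribution for each type pair), and compute the induced posteriors $q_A=\P_\pi(\theta_A\mid\theta_B,s)$ and $q_B=\P_\pi(\theta_B\mid\theta_A,s)$ for each signal. This yields the target posterior distribution $P(\theta_A,\theta_B,q_B,q_A)$. (2) Recall the dimartingale characterization: a Bayesian conversation induces a sequence of belief updates in which, at each round, \emph{only one} of the two posteriors changes (Alice's message refines Bob's belief about $\theta_A$; Bob's message refines Alice's belief about $\theta_B$); moreover each player's belief about the \emph{other} player's type is a martingale started at the prior. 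So any feasible posterior distribution for a Bayesian conversation must be expressible as the terminal distribution of such a ``one-coordinate-at-a-time'' martingale — equivalently it must satisfy Condition~(1) of Theorem~\ref{thm:two_way_charac}. (3) Show that our target distribution violates that condition: the correlation we built in requires a step in which both coordinates move simultaneously, which the dimartingale structure forbids. I would make this quantitative by checking the specific consistency equation that Condition~(1) encodes (for each value $y$ of $\theta_B$, the conditional law of $q_A$ must be obtainable by a martingale that is measurable with respect to a filtration in which $q_B$-moves do not touch $q_A$), and exhibiting a numerical contradiction for the constructed instance.

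The main obstacle is step (3): articulating \emph{precisely} why the correlation cannot be reproduced by \emph{any} number of rounds of an arbitrary Bayesian conversation, not merely by some naive one. The danger is that a clever multi-round protocol might approximate or even achieve the correlation through a sequence of small moves. The resolution should come from the invariant that in a Bayesian conversation the pair $(q_A,q_B)$ evolves as a dimartingale: at every round the set of ``still-possible'' type profiles factors in a controlled way, so the terminal joint law of $(\theta_A,q_A)$ conditioned on $\theta_B=y$ cannot depend on $\theta_B$ through anything other than what Alice's messages (which never see $\theta_B$) could have conveyed — and symmetrically. I would pin the contradiction on exactly this: in the constructed mediator example, $\P_\pi(q_A\mid\theta_B=0)\neq\P_\pi(q_A\mid\theta_B=1)$ in a way that cannot be generated without Alice's messages depending on $\theta_B$, which is the content of Condition~(1). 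Once that incompatibility is stated as a clean inequality, verifying it on the two-by-two instance is routine arithmetic, so the real work is entirely in setting up the right instance and invoking the characterization correctly.
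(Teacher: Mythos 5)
Your overall skeleton is the paper's: invoke Condition (1) of Theorem~\ref{thm:two_way_charac} (conditioned on the final pair of posteriors, the types must be distributed as the product $q_A(\theta_A)\,q_B(\theta_B)$ — equivalently, the observer's posterior stays a product because each message refines only one coordinate) and exhibit a mediator protocol whose induced distribution violates it. The gap is in your concrete instantiation: the example you propose in step (1) does not exist. If under $s_0$ both players are (nearly) certain of the low types and under $s_1$ both are (nearly) certain of the high types, irrespective of their own types, then $\P(\theta_A{=}0,\theta_B{=}1)=\P(s_0)\P(\theta_A{=}0,\theta_B{=}1\mid s_0)+\P(s_1)\P(\theta_A{=}0,\theta_B{=}1\mid s_1)\le \P(s_0)\P(\theta_B{=}1\mid s_0)+\P(s_1)\P(\theta_A{=}0\mid s_1)$ is (nearly) zero, contradicting the independent uniform prior, which puts mass $1/4$ on that cell; in the exact two-point version, a mixture of two product posteriors differing in both coordinates has covariance $\lambda(1-\lambda)(a_0-a_1)(b_0-b_1)\neq 0$ and so cannot average back to the product prior. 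Your candidate is essentially the paper's Figure~\ref{fig:imp_posteriors} example of a distribution that \emph{no} protocol, mediated or not, can generate. The correlation a mediator can create is correlation of the \emph{types} in the observer's posterior (a non-product $q_{AB}$, as in Table~\ref{table:mediator_dec}, feasible by Proposition~\ref{prop:mediator_char} since $0.6\,q_{AB}^{(s_1)}+0.4\,q_{AB}^{(s_2)}$ equals the prior); this necessarily makes each player's individual posterior depend on their own type under a given signal, and is not the same as jointly forcing two per-signal-deterministic sharp beliefs.

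Separately, your proposed contradiction test in step (3) is not Condition (1) and is not even a valid necessary condition for conversations: $\P_\pi(q_A\mid\theta_B{=}0)\neq\P_\pi(q_A\mid\theta_B{=}1)$ is perfectly achievable by a Bayesian conversation, since Bob's messages depend on $\theta_B$ and steer what Alice subsequently reveals, so the law of Bob's posterior $q_A$ may well depend on $\theta_B$. The clean finish (and the paper's) is a one-line check of Condition (1) on the Table~\ref{table:mediator_dec} distribution: at the realized posterior pair $q_B=(\tfrac12,\tfrac12)$, $q_A=(\tfrac12,\tfrac12)$ the only type profile in the support is $(L,L)$, whereas Condition (1) would require $\P(\theta_A{=}H,\theta_B{=}H\mid q_B,q_A)=\tfrac14$; no filtration or dimartingale argument is needed once the characterization is in hand.
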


\subsubsection{The observer's posterior}

\begin{table}[!h]
\small
\begin{tabular}{|c|c|c|}
	\hline $\P(\cdot)$ & $\theta_B = H$ & $\theta_B = L$\\
	\hline $\theta_A = H$ & $0.3$ & $0.2$\\
	\hline $\theta_A = L$ & $0.3$ & $0.2$ \\
	\hline
\end{tabular}
=
0.6	
\begin{tabular}{|c|c|c|}
	\hline $q_{AB}^{(s_1)}$ & $\theta_B = H$ & $\theta_B = L$\\
	\hline $\theta_A = H$ & $0$ & $1/3$\\
	\hline $\theta_A = L$ & $1/3$ & $1/3$ \\
	\hline
\end{tabular}
+ 0.4
\begin{tabular}{|c|c|c|}
	\hline $q_{AB}^{(s_2)}$ & $\theta_B = H$ & $\theta_B = L$\\
	\hline $\theta_A = H$ & $3/4$ & $0$\\
	\hline $\theta_A = L$ & $1/4$ & $0$ \\
	\hline
\end{tabular}
\\
\hspace{4.5cm}
\begin{tabular}{c}
$\downarrow$\\
$P(H, L, (0,1), (\frac{1}{2},\frac{1}{2}))= 0.2$\\
$P(L, H, (\frac{1}{2},\frac{1}{2}), (0,1))= 0.2$\\
$P(L, L, (\frac{1}{2},\frac{1}{2}), (\frac{1}{2},\frac{1}{2}))= 0.2$\\
{}
\end{tabular}
\qquad \quad
\begin{tabular}{c}
$\downarrow$\\
$P(H, H, (1,0), (\frac{3}{4},\frac{1}{4}))= 0.3$\\
$P(L, H, (1,0), (\frac{3}{4},\frac{1}{4}))= 0.1$\\
\end{tabular}
	\caption{Suppose $\theta_A, \theta_B \in \{L,H\}$ and $\P(\theta_A = H)=0.5$ and $\P(\theta_B = H) =0.6$. The plot shows how a distribution of posteriors $P(\theta_A, \theta_B, q_B, q_A)$ can be generated by a mediator protocol with $S=\{s_1, s_2\}$ with $q_{AB}^{(s_1)}$ shown in the second table with $q_{AB}^{(s_2)}$ shown in the third table.  See Section~\ref{app:hiring} for a concrete application of this example. 
	}
	\label{table:mediator_dec}
\end{table}

Our goal is to characterize the possible distribution of the possible tuples $(\theta_A, \theta_B, q_{B}, q_{A})$ of types and posteriors induced by a given protocol $\pi$ (what we have called \textit{joint posterior distributions}). To analyze this distribution, we consider the \emph{observer's posterior distribution}; the posterior distribution about $(\theta_A, \theta_B)$ a third-party observer (who cannot observe Alice or Bob's types directly, but does know their priors) arrives at after seeing the public signal $s$.
$$ q_{AB}(\theta_A, \theta_B) = \P_{\pi}(\theta_A, \theta_B | s). $$
And we denote by $\P_\pi(q_{AB})$ the induced distribution of the observer's posterior where $\P_\pi(q_{AB}) = \P_\pi(s)$.
Note that it is possible to recover the posteriors of Alice and Bob from the observer's posteriors along with Alice and Bob's realized types. 
Therefore, we can recover a joint posterior distribution uniquely from the observer's posterior distribution as shown in Table~\ref{table:mediator_dec}.

\subsubsection{Mediator protocols}


 Then the observer's posterior distribution induced by a mediator protocol can be characterized by the Splitting Lemma \cite{sorin2002first}:   sending the public signal can be thought of as ``splitting'' the prior $\P(\theta_A) \times \P(\theta_B)$ into $|S|$ different posteriors $q_{AB}$ each with probability $\Pr(s)$ of occurring. Note that this splitting operation is \textit{mean-preserving}: 
 the expected posterior of the observer at the conclusion of the protocol must equal the original prior: $\E_{\pi}[q_{AB}] = \P(\theta_A) \times \P(\theta_B)$.
For the case of mediator protocols, the Splitting Lemma condition is sufficient as well as necessary: any mean-preserving distribution over posteriors is incentivizable via a single signal.

\begin{proposition}\label{prop:mediator_char}
Let $P(q_{AB})$ be a distribution over posterior distributions $q_{AB} \in \Delta(\Theta_A \times \Theta_B)$ with the property that $\E[q_{AB}] = \P(\theta_A) \times \P(\theta_B)$. Then there exists a mediator protocol $\pi$ such that $\P_\pi(q_{AB}) = P(q_{AB})$. 
\end{proposition}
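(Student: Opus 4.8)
The plan is to build the mediator protocol directly, using the signal to encode a ``recommended posterior.'' Write $q_0 := \P(\theta_A)\times\P(\theta_B)$ for the prior joint distribution; after deleting any prior-probability-zero types (on which, by the mean condition $\E[q_{AB}]=q_0$, every $q\in\mathrm{supp}(P)$ also vanishes) we may assume $q_0(\theta_A,\theta_B)>0$ for all $(\theta_A,\theta_B)$. Take the signal set to be $S := \mathrm{supp}(P)$, so that a signal is itself a posterior $q\in\Delta(\Theta_A\times\Theta_B)$, and define the mediator's scheme by
\[
\pi(q \mid \theta_A,\theta_B) \;:=\; \frac{q(\theta_A,\theta_B)}{q_0(\theta_A,\theta_B)}\,P(q).
\]
(If $\mathrm{supp}(P)$ is uncountable, read $P(q)$ as $dP(q)$ and the identities below as equalities of measures on $S$; the argument is identical, so I will describe the countable case.) This choice is essentially forced: by the splitting identity \eqref{eq:posterior_split} it is exactly what makes the observer's posterior after signal $q$ equal to $q$ while making $q$ occur with probability $P(q)$.

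First I would check that $\pi(\cdot\mid\theta_A,\theta_B)$ is a genuine probability distribution on $S$: summing the formula over $q\in S$ gives $\frac{1}{q_0(\theta_A,\theta_B)}\sum_{q\in S}q(\theta_A,\theta_B)P(q) = \frac{\E_{q\sim P}[q(\theta_A,\theta_B)]}{q_0(\theta_A,\theta_B)}$, which equals $1$ precisely because of the hypothesis $\E[q_{AB}]=q_0$ --- this is the only place the mean-preserving assumption is used. Next I would trace the induced joint law of $(\theta_A,\theta_B,s)$: we have $\P_\pi(\theta_A,\theta_B,s=q) = q_0(\theta_A,\theta_B)\,\pi(q\mid\theta_A,\theta_B) = q(\theta_A,\theta_B)P(q)$; summing over $(\theta_A,\theta_B)$ and using $\sum_{\theta_A,\theta_B}q(\theta_A,\theta_B)=1$ gives $\P_\pi(s=q)=P(q)$, so the marginal law of the signal is $P$, and dividing shows $\P_\pi(\theta_A,\theta_B\mid s=q)=q(\theta_A,\theta_B)$, i.e. the observer's posterior conditioned on signal $q$ is $q$ itself. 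Combining the last two facts, the induced distribution of the observer's posterior satisfies $\P_\pi(q_{AB}=q)=\P_\pi(s=q)=P(q)$, which is exactly the claim.

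I expect no genuine obstacle: the statement is the converse of the Splitting Lemma / Bayes-plausibility (\cite{kamenica2011bayesian}), and the scheme above is the standard ``posterior-as-signal'' construction. The only points that need any care are (i) the normalization check, which is where the mean condition enters, and (ii) the bookkeeping for prior-probability-zero type pairs (handled by the preliminary reduction) together with, if one wants full generality, the routine measure-theoretic rephrasing when $P$ is not discrete.
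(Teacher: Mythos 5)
Your proposal is correct and is essentially the paper's own argument: the same ``posterior-as-signal'' scheme with $\pi(q\mid\theta_A,\theta_B)=q(\theta_A,\theta_B)P(q)/\bigl(\P(\theta_A)\P(\theta_B)\bigr)$, with the mean-preserving hypothesis guaranteeing normalization and Bayes' rule returning posterior $q$ with marginal probability $P(q)$. You are merely more explicit than the paper about the normalization check, the signal marginal, and the treatment of prior-probability-zero type pairs, which the paper dismisses by setting those probabilities arbitrarily.
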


One remark: a version of the splitting lemma holds for the (marginals of the) joint posterior distributions, in the sense that $\E[q_A]=\P(\theta_A)$ and $\E[q_B] = \P(\theta_B)$. However, the obvious analogue of Proposition \ref{prop:mediator_char} does not hold: in the example shown in Figure~\ref{fig:imp_posteriors}, the expectations of the posteriors equal the priors, but it cannot be induced by a mediator protocol. 
\begin{figure}[!htb]
\hspace{0.4cm}
   \begin{minipage}{0.4\textwidth}
     \centering
     \begin{tikzpicture}
	\begin{axis}[
		xmin=-0.03,
		xmax=1.15,
		ymin=-0.03, 
		ymax=1.15, 
		axis lines = middle, 
		xlabel = {$q_B(\theta_B = H)$}, 
		ylabel = {$q_A(\theta_A = H)$},  
		xtick = {0, 0.25, ..., 1},
		ytick = {0, 0.25, ..., 1},
		width= \textwidth, 
		height = 0.9\textwidth,
		x label style={at={(axis description cs:0.5,-0.11)},anchor=north},
    	y label style={at={(axis description cs:-0.17,.5)},rotate=90,anchor=south}
    	]
    \filldraw[black]  (0.5,0.5) circle (2pt) node[right=1mm] {prior $(0.5,0.5)$};
    \filldraw[black] (0.25,0.25) circle (1.5pt) node[below] { $(0.25,0.25)$};
    \filldraw[black] (0.75,0.75) circle (1.5pt) node[above] {$(0.75,0.75)$};
    \draw [-{Stealth[length=2mm, scale width=2]}, thick](0.5,0.5) -- (0.25,0.25);
    \draw [-{Stealth[length=2mm, scale width=2]}, thick](0.5,0.5) -- (0.75,0.75);
	\end{axis}
\end{tikzpicture}	
\caption{This example shows a posterior distribution that cannot be induced by any communication protocols. Suppose $\theta_A, \theta_B \in \{L,H\}$ and $\P(\theta_A = H)=\P(\theta_B = H) =0.5$. Then we cannot have $q_B(\theta_B = H) = q_A(\theta_A= H) = 0.75$ with probability $0.5$ and $q_B(\theta_B = H) = q_A(\theta_A= H) = 0.25$ with probability $0.5$. We give the proof in Appendix~\ref{app:imp_distr}. }
\label{fig:imp_posteriors}

   \end{minipage}\hfill
   \begin{minipage}{0.42\textwidth}
     \centering
\begin{tikzpicture}
	\begin{axis}[
		xmin=-0.03,
		xmax=1.15,
		ymin=-0.03, 
		ymax=1.15, 
		axis lines = middle, 
		xlabel = {$q_B(\theta_B = H)$}, 
		ylabel = {$q_A(\theta_A = H)$},  
		xtick = {0, 0.25, ..., 1},
		ytick = {0, 0.25, ..., 1},
		width=\textwidth, 
		height = 0.9\textwidth,
		x label style={at={(axis description cs:0.5,-0.15)},anchor=north},
    	y label style={at={(axis description cs:-0.17,.5)},rotate=90,anchor=south}
    	]
    \filldraw[black] (0.5,0.5) circle (2pt);
    \filldraw[black] (0.5,0.25) circle (1pt);
    \filldraw[black] (0.5,0.75) circle (1pt);
    \draw [-{Stealth[length=2mm, scale width=1.5]},thick] (0.5,0.5)  -- node[right]{$\frac{1}{2}$} (0.5,0.25);
    \draw [-{Stealth[length=2mm, scale width=1.5]},thick](0.5,0.5) -- node[right]{$\frac{1}{2}$} (0.5,0.75);
    \draw [-{Stealth[length=2mm, scale width=1.5]},thick] (0.5,0.25)  -- node[below]{$\frac{1}{3}$} (0,0.25);
    \draw [-{Stealth[length=2mm, scale width=1.5]},thick](0.5,0.25) -- node[below]{$\frac{2}{3}$} (0.75,0.25);
    \draw [-{Stealth[length=2mm, scale width=1.5]},thick] (0.5,0.75)  -- node[above]{$\frac{2}{3}$} (0.3,0.75);
    \draw [-{Stealth[length=2mm, scale width=1.5]},thick](0.5,0.75) -- node[above]{$\frac{1}{3}$} (0.9,0.75);
    \draw [-{Stealth[length=2mm, scale width=1.5]},thick] (0.9,0.75)  -- node[right]{$\frac{1}{4}$} (0.9,0);
    \draw [-{Stealth[length=2mm, scale width=1.5]},thick](0.9,0.75) -- node[right]{$\frac{3}{4}$} (0.9,1);
	\end{axis}
\end{tikzpicture}
	\caption{An illustration of a posterior distribution generated by a Bayesian conversation. Suppose $\theta_A, \theta_B \in \{L,H\}$. At any step of the protocol, the status of the protocol can be represented as a two-dimensional point $(q_B(\theta_B = H), q_A(\theta_A = H))$. When Alice sends a signal, $q_B$ remains unchanged and $q_A$ is decomposed along $y$-axis while preserving the expectation. The same holds for Bob. }
	\label{fig:two_way}
   \end{minipage}
\end{figure}


\subsubsection{Bayesian conversations}

In Bayesian conversations, we have the additional restriction that each signal sent either contains information only about $\theta_A$ (if Alice is sending the signal) or only about $\theta_B$ (if Bob is sending the signal); in other words, at every step $t$ of the protocol, either we have $\Pr[s|\theta_A, \theta_B, h^{(t)}] = \Pr[s|\theta_A, h^{(t)}]$ or $\Pr[s|\theta_A, \theta_B, h^{(t)}] = \Pr[s|\theta_B, h^{(t)}]$. 
This has following two important consequences. First, since the observer's posterior starts as a product distribution (since the priors for $\theta_A$ and $\theta_B$ are independent), the observer's posterior will always remain a product distribution (i.e. in each $q_{AB}$, $\theta_A$ and $\theta_B$ are independent). Secondly, this lets us relate the observer's posterior and the joint posteriors much more directly: we always have that $q_{AB}(\theta_A, \theta_B) = q_{A}(\theta_A) q_{B}(\theta_B)$. Pictorially, we can represent this procedure as in Figure \ref{fig:two_way}; at each step of the protocol, each posterior can be split either ``horizontally'' (along dimensions in $\Delta(\Theta_A)$) or ``vertically'' (along dimensions in $\Delta(\Theta_B)$) but not both. 
This belief-splitting process is defined as a \emph{bimartingale} in \cite{ aumann1986bi, aumann2003long}.
\begin{proposition}[\cite{ aumann1986bi, aumann2003long}]\label{prop:belief-split}
	The belief-splitting process $\{(q_A^{(t)}, q_B^{(t)})\}_{t=1}^{\infty}$ can be viewed as a \emph{bimartingale}. A bimartingale is a martingale $\{(\alpha_t, \beta_t)\}_{t=1}^\infty$ that has either $\alpha_t = \alpha_{t+1}$ or $\beta_t = \beta_{t+1}$ at each time step $t$. 
\end{proposition}
Note that $P(\theta_A, \theta_B|q_B, q_A) = q_{A}(\theta_A) q_{B}(\theta_B)$ does \textit{not} have to be true for mediator protocols (where the mediator can correlate the observer's posterior for $\theta_A$ and $\theta_B$), and hence this provides a proof of Proposition \ref{prop:separation}. An explicit counter-example is given in Table \ref{table:mediator_dec}. 


\end{document}